\newtheorem{thm}{Theorem}
\newtheorem{cor}[thm]{Corollary}
\newtheorem{lem}[thm]{Lemma}
\newtheorem{prop}[thm]{Proposition}
\newtheorem{rem}{Remark}
\newtheorem{defin}{Definition}
\begin{document}
	
\title{On Secure Network Coding for Multiple Unicast Traffic}
\author{Gaurav Kumar Agarwal,  Martina Cardone and Christina Fragouli}

\author{
	\IEEEauthorblockN{Gaurav Kumar Agarwal$^\star$, Martina Cardone$^\dagger$, Christina Fragouli$^\star$ } \\
	$^\star$ University of California Los Angeles, Los Angeles, CA 90095, USA\\
	Email: \{gauravagarwal, christina.fragouli\}@ucla.edu \\
	$^\dagger$ University of Minnesota, Minneapolis, MN 55404, USA, Email: cardo089@umn.edu
	\thanks{The results in this paper were presented in part at the 2016 IEEE International Symposium on Information Theory, at the 2016 IEEE Globecom Workshop, and at the 10th International Conference on Information Theoretic Security.}
}

\maketitle

\begin{abstract}
  This paper investigates the problem of secure communication in a wireline noiseless scenario where
a source wishes to communicate to a number of destinations in the presence of a passive external adversary. 
Different from the multicast scenario, where all destinations are interested in receiving the same message,
in this setting different destinations are interested in different messages. 
The main focus of this paper is on characterizing the secure capacity region, when the adversary has unbounded computational capabilities, but limited network presence.
 First, an outer bound on the secure capacity region is derived for arbitrary network topologies and general number of destinations.
Then, secure transmission schemes are designed and analyzed in terms of achieved rate performance. 
In particular, for the case of two destinations, it is shown that the designed scheme matches the outer bound, hence characterizing the secure capacity region.
It is also numerically verified that the designed scheme matches the outer bound for a special class of networks with general number of destinations, referred to as combination network. 
Finally, for an arbitrary network topology with general number of destinations, a two-phase polynomial time in the network size scheme is designed and its rate performance {is} compared with the capacity-achieving scheme for networks with two destinations.
\end{abstract}

\section{Introduction}

Secure network coding~\cite{cai2002secure} considers {the} communication from a source to a number of destinations in the presence of a passive {external} adversary, {with unbounded computational capabilities, but limited network presence.} 
The authors in~\cite{cai2002secure} showed that the source can securely multicast to {all destinations} at a rate of $M-k$, where $M$ is the min-cut capacity between the source and each destination, and $k$ is the number of edges eavesdropped by the adversary. 
In {such a multicast scenario,} all destinations are interested in {receiving the same message.} 

In this paper, we focus on multiple unicast traffic, where a source wishes to securely communicate to a number of destinations, each interested in an independent message.
{Our primal objective lies in characterizing the secure capacity region, by means of derivation of novel outer bounds as well as transmission schemes.}

\subsection{Related Work}
Network coding {was pioneered by} the seminal {work} of Ahlswede et al.~\cite{AhlswedeIT2000}. The authors proved that, if $M$ is the min-cut capacity from the source to each destination, then the source can multicast at a rate $M$ to all the destinations. 
{This result implies that, even if a single destination with min-cut capacity $M$ has access to the entire network resources, this}
destination can only receive {at most at a rate equal to} $M$. 
{Moreover, this result shows that multiple destinations sharing some of the} network resources, can still receive at a rate $M$ if they are interested in the exact same information.
Later, Li et al.~\cite{li2003linear} {proved} that it suffices to use random linear coding operations to {characterize} the multicast capacity. 
Jaggi et al.~\cite{jaggi2005polynomial} designed polynomial time deterministic algorithms {aimed to achieve the multicast capacity.}  
{While for the case of single unicast and multicast traffic the capacity is well-known, the same is not true for the case of networks where multiple unicast sessions take place simultaneously and share some of the network resources. 
For instance, even though the cut-set bound was proved to be tight for some special cases, such as single source with non-overlapping demands and single source with non-overlapping demands and a multicast demand~\cite{KoetterTON2003}, in general it is not tight~\cite{kamath2011generalized}.
It was also recently showed by Kamath et al.~\cite{kamath2014two} that characterizing the capacity of a general network where two unicast sessions take place simultaneously is as hard as characterizing the capacity of a network with general number of unicast sessions.
For the case of single source and two destinations with a non-overlapping demand and a multicast demand, Ramamoorthy et. al~\cite{ramamoorthy2009single} proposed a nice graph theory based approach to characterize the capacity region.}

Information theoretic {security, pioneered by Shannon~\cite{shannon1949communication}, aims at ensuring a reliable and secure communication among trusted parties inside a network such that a passive external eavesdropper does not learn anything about the content of the information exchanged.}
For {point-to-point channels, information theoretic security} can be achieved provided that the communicating {trusted} parties have a pre-shared {key} of entropy {at least equal to the length of the message~\cite{shannon1949communication}.} 
Wyner~\cite{Wyner} showed that, if {the} adversary's channel is a degraded version of the channel to the legitimate destination, {then an} information theoretic secure communication {can be guaranteed} even without the pre-shared keys. 
Moreover, if public feedback is available, {Czap et. al.~\cite{czapP2P} showed that secure communication can be ensured over erasure networks} even when the adversary has {a channel of better quality} than the legitimate receiver. 
{In~\cite{cai2002secure}, Cai et al. characterized the information theoretic secure capacity of a noiseless network with unit capacity edges and with multicast traffic. 
In this work, which was followed by several others~\cite{feldman2004capacity, el2007wiretap}, a source wishes}
to multicast {the same information} to a number of destinations in the presence of a passive {external} adversary eavesdropping any $k$ edges of her choice. 
{ In~\cite{cui2013}, Cui et al.} studied networks {with non-uniform} edge capacities {when} the adversary {is} allowed to {eavesdrop} only {some specific} subsets of edges. 
{Over the past few years, others notions of information theoretic security have been analyzed, such as the case of}
weak information theoretic {security~\cite{bhattad2005weakly,silva2009universal,wei2010}.} 
{Moreover, several different scenarios have been studied, that include: (i) the case of an active adversary, who can indeed corrupt the communication rather than just passively eavesdropping it~\cite{jaggiByzantine,ho2004byzantine,kosutByzantine}; (ii) erasure networks where a public feedback is available~\cite{papadopoulos2015lp,Czaptriangle,CzapLine}; (iii) wireless networks~\cite{MillsWireless2008,dong2009secure}.}

\subsection{Contributions}

{In this paper, we study the problem of characterizing the secure capacity region in a wireline noiseless multiple unicast scenario with uniform edge capacities. In particular, we focus on networks where a source wishes to securely communicate to a number of destinations, each interested in a different message. Our main contributions can be summarized as follows:}
\begin{enumerate}
\item {We derive} an outer bound on the secure capacity region { for} networks with arbitrary topology and arbitrary number of destinations.
{Similar to the multicast scenario~\cite{cai2002secure},} this outer bound depends on the number of edges {that the adversary eavesdrops and on} the min-cut capacities between the source and different subsets of destinations.
\item {We characterize the secure capacity region for networks with arbitrary topology and with two destinations. Towards this end, we design a secure transmission scheme whose achieved rate region is proved to match the derived outer bound. 
In particular, we leverage a key property, referred to as {\it separability}~\cite{ramamoorthy2009single}, in order to select the parts of the network over which: (i) common keys should be multicast, and (ii) encrypted private messages should be communicated.
Our analysis shows that} coding across different unicast sessions helps in {characterizing} the secure capacity even {in scenarios where coding} was not required in the absence of an adversary.
\item { We design} a secure {transmission} scheme for combination networks {with a two-layer topology and arbitrary number of destinations. A key feature of such networks is that they} satisfy the separability property over graphs. 
{In particular, through extensive numerical evaluations, we observed that the designed scheme achieves a secure rate region that matches our derived outer bound, hence suggesting that the proposed scheme {could be} capacity achieving.}
\item {We design a secure transmission scheme} for networks with arbitrary topology and arbitrary number of destinations. This scheme is sub-optimal, but has a polynomial time complexity {in the number of edges and nodes in the network.} 
{In particular,} our scheme works in two phases: in the first phase, we multicast keys using the entire network resources, and in the second phase we communicate encrypted private message packets {using again} the entire network resources. {For the case of two destinations,} we also compare {the secure rate region achieved by this two-phase with the secure capacity region.}
\item {We draw several observations on the derived secure capacity results. For instance, we show that the secure capacity region for two destinations is non-reversible, which is a key difference with respect to the case when there is no adversary. Specifically, we show that, if we switch the role of the source and destinations and we reverse the directions of the edges, then the new secure capacity region differs from the original one. Moreover, for the case of two destinations, we compare the secure capacity region with the capacity region when the adversary is absent. The goal of this analysis is to quantify the rate loss that is incurred to guarantee security.}
\item {We consider} other instances of multiple unicast traffic, {that include: (i) networks with erasure links, and (ii) noiseless networks with two sources and two destinations. In particular, for some specific network topologies, we derive the secure capacity region. This analysis strengthens our previous observation that coding across different unicast sessions is beneficial to ensure a secure communication, even for cases when it is not required in the absence of an adversary.}
\end{enumerate}

\subsection{Paper Organization}
Section~\ref{sec:setting} formally defines the {setup, that is the multiple unicast wireline noiseless network with single source and arbitrary number of destinations, and formulates the problem.}
Section~\ref{sec:outer_bound} derives an outer bound on the secure capacity region. 
Section~\ref{sec:two_destinations} provides a {capacity-achieving secure transmission} scheme for networks with two destinations and arbitrary {topology.}
Section~\ref{sec:combination_nw} {designs a secure transmission} scheme for {combination networks with a two-layer topology and} arbitrary number of destinations. 
Section~\ref{sec:two_phase_scheme} provides a two-phase achievable scheme for networks with arbitrary number of destinations and arbitrary {topology.}
Finally, Section~\ref{sec:discussions} {draws some observations on the derived results, discusses some properties and analyzes other instances of multiple unicast traffic.}


\section{Setup and problem formulation}
\label{sec:setting}

{Throughout the paper we adopt the following notation convention.}
	Calligraphic letters indicate sets;
	$\emptyset$ is the empty set and
	$\left | \mathcal{A} \right|$ is the cardinality of $\mathcal{A}$; 
	for two sets $\mathcal{A}_1,\mathcal{A}_2$, $\mathcal{A}_1 \subseteq \mathcal{A}_2$ indicates that $\mathcal{A}_1$ is a subset of $\mathcal{A}_2$, $\mathcal{A}_1 \cup \mathcal{A}_2$ indicates the union of $\mathcal{A}_1$ and $\mathcal{A}_2$, $\mathcal{A}_1 \sqcup \mathcal{A}_2$ indicates the disjoint union of $\mathcal{A}_1$ and $\mathcal{A}_2$, $\mathcal{A}_1 \cap \mathcal{A}_2$ is the intersection of $\mathcal{A}_1$ and $\mathcal{A}_2$ and
	$\mathcal{A}_1 \backslash \mathcal{A}_2$ is the set of elements that belong to $\mathcal{A}_1$ but not to $\mathcal{A}_2$;
	$[n_1 : n_2]$ 
	is the set of integers from $n_1$ to $n_2 \geq n_1$;
	$[n]$ 
	is the set of integers from $1$ to $n \geq 1$;
	$[x]^+ := \max\{0, x\}$ for $x \in \mathbb{R}$;
{for a vector $a$, $a^T$ is its transpose vector;
$\text{dim} (A)$ is the dimension of the subspace $A$;
$\mathbf{0}_{i \times j}$ is the all-zero matrix of dimension $i \times j$;
$\mathbf{I}_j$ is the identity matrix of dimension $j$.}

\smallskip

We represent a {wireline noiseless} network with a directed acyclic graph $\mathcal{G} = (\mathcal{V}, \mathcal{E})$, where $\mathcal{V}$ is the set of nodes and $\mathcal{E}$ is the set of directed edges.  
{The} edges represent orthogonal communication links, which are interference-free. 
{In particular,} these links are discrete noiseless memoryless channels over a common alphabet $\mathbb{F}_q$, {i.e., they are of unit capacity} over a $q$-ary alphabet.
If an edge $e \in \mathcal{E}$ connects a node $i$ to a node $j$, we refer to node $i$ as the tail and to node $j$ as the head of $e$, i.e., $\text{tail}(e) = i$ and $\text{head}(e) = j$. 
For each node $v \in \mathcal{V}$, we define $\mathcal{I}(v)$ as the set of all incoming edges of node $v$ and $\mathcal{O}(v)$ as the set of all outgoing edges of node $v$.

{In this network, there is one source node $S$ and $m$ destination nodes $D_i, i \in [m]$. The} 
source node does not have any incoming edges, i.e., $\mathcal{I}(S)=\emptyset$, and {each destination node does} not have any outgoing edges, i.e., {$\mathcal{O}(D_i) = \emptyset, \forall i \in [1:m]$. 
Source $S$} has a message $W_i$ for destination $D_i, i \in [1:m]$. These $m$ messages are assumed to be independent.
Thus, the network consists of multiple unicast traffic, where $m$ unicast sessions take place simultaneously and share the network resources.
{In particular,} each message $W_i, i \in {[m]},$ is of $q$-ary entropy rate $R_i$.
A passive eavesdropper Eve is also present in the network and can wiretap any $k$ edges of her {choice.} We highlight that Eve is an external eavesdropper, i.e., it is not one of the destinations. 

The symbol transmitted over $n$ channel uses on edge $e \in \mathcal{E}$ is denoted as $X_e^n$. In addition, for $\mathcal{E}_t \subseteq \mathcal{E}$ we define $X_{\mathcal{E}_t}^n= \{ X_e^n: e \in \mathcal{E}_t\}$. We assume that the source node $S$ {has} infinite sources of randomness $\Theta$, while the other nodes in the network do not have any randomness. 

 Over {this} network, we are interested in finding all possible feasible $m$-tuples $(R_1,R_2,\ldots, R_m)$ such that each destination $D_i, i \in {[m],}$ reliably decodes the message $W_i$ (with zero error) and Eve {receives no information} about the {content of the} messages.  
In particular, we are interested in {ensuring} perfect information theoretic secure communication, {and hence we aim at characterizing} the secure capacity region, {which is next formally defined.}

\begin{defin} [Secure Capacity Region]
\label{defin:sec_cap}
	A rate $m$-tuple $(R_1, R_2, \ldots, R_m)$ is said to be securely achievable if there exist a block length $n$ and a set of encoding functions  $f_e, \  \forall e \in \mathcal{E}$, with
	\begin{align*}
	X^n_e = 
	\left \{
	\begin{array}{ll}
	f_e \left( W_{{[m]}}, \theta \right) & \mbox{if} \ \mbox{tail}(e)=  S,
	\\
	f_e \left( \{X^n_\ell : \ell \in \mathcal{I}(tail(e))\}  \right) & \mbox{otherwise},
	\end{array} \enspace 
	\right.
	\end{align*}
	{such that each} destination $D_i$ can reliably decode the message $W_i$ i.e., $ H\left( W_i | \{X^n_e: e \in \mathcal{I}(D_i)\}\right) = 0, \ \forall i \in [m]$.
	Moreover, $\forall \ \mathcal{E}_{\mathcal{Z}} \subseteq \mathcal{E}$, $|\mathcal{E}_{\mathcal{Z}}| \leq k$,
	$I \left(W_{{[m]}} ; X^n_{\mathcal{E}_{\mathcal{Z}}}\right)  = 0$ (perfect secrecy requirement). 
{The \textbf{secure capacity region} is}
	the closure of all such feasible rate $m$-tuples. 
\end{defin}

\begin{defin} [Min-cut]
\label{def:MinCut}
{A \textbf{cut} is} an edge set $\mathcal{E}_\mathcal{A} \subseteq \mathcal{E}$, which separates the source $S$ {from} a set of destinations $D_\mathcal{A}:=\{D_i,\ i \in \mathcal{A} \}$. {In a network with unit capacity edges},
{the minimum cut or \textbf{min-cut} is a cut that has the} minimum number of edges.
	\end{defin}


\section{Outer bound}
\label{sec:outer_bound}
{In this section, we derive an outer bound on the secure capacity region of a multiple unicast wireline noiseless network with a single source and $m$ destinations. In particular, as stated in Theorem~\ref{thm:SecureOB}, this region depends on the min-cut capacities between the source and different subsets of destinations, and on the number of edges that the adversary eavesdrops. The next theorem provides the outer bound region.}

\begin{thm}
	\label{thm:SecureOB}
	An outer bound on the secure capacity region for the multiple unicast traffic over networks with a single source and $m$ destinations is given by
	\begin{align}
	\label{eq:SecureOB}
	R_{\mathcal{A}} \leq [M_{\mathcal{A}} - k]^+,  \ \ \forall \mathcal{A} \subseteq {[m]} \enspace,
	\end{align} 
	where $R_{\mathcal{A}} := \sum\limits_{i \in \mathcal{A}} R_i $ and $M_{\mathcal{A}}$ is the min-cut capacity between the source $S$ and the set of destinations $D_{\mathcal{A}}:= \{D_i: i \in \mathcal{A}\}$.
\end{thm}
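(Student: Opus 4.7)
The plan is to fix an arbitrary subset $\mathcal{A} \subseteq [m]$ and exploit two things simultaneously: (i) the decodability of $W_\mathcal{A} := \{W_i : i \in \mathcal{A}\}$ from the symbols crossing any $S$-$D_\mathcal{A}$ cut, and (ii) Eve's freedom to place her $k$ wiretapped edges wherever she wishes, in particular within that cut. Let $\mathcal{E}_\mathcal{A} \subseteq \mathcal{E}$ be a min-cut separating $S$ from $D_\mathcal{A}$, so that $|\mathcal{E}_\mathcal{A}| = M_\mathcal{A}$. Because each destination $D_i$ with $i \in \mathcal{A}$ reliably decodes $W_i$ as a deterministic function of its incoming symbols, and all information flowing to $D_\mathcal{A}$ must traverse $\mathcal{E}_\mathcal{A}$, I expect the standard acyclic-network argument to yield $H(W_\mathcal{A} \mid X^n_{\mathcal{E}_\mathcal{A}}) = 0$.

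Next I would split into the two cases dictated by the $[\cdot]^+$. If $M_\mathcal{A} \leq k$, then Eve is allowed to choose a set $\mathcal{E}_\mathcal{Z} \supseteq \mathcal{E}_\mathcal{A}$ with $|\mathcal{E}_\mathcal{Z}| \leq k$; the secrecy requirement $I(W_{[m]}; X^n_{\mathcal{E}_\mathcal{Z}}) = 0$ together with $W_\mathcal{A}$ being a function of $X^n_{\mathcal{E}_\mathcal{A}}$ (hence of $X^n_{\mathcal{E}_\mathcal{Z}}$) forces $H(W_\mathcal{A}) = 0$, i.e., $R_\mathcal{A} = 0$. In the more interesting case $M_\mathcal{A} > k$, I would let Eve pick any $\mathcal{E}_\mathcal{Z} \subseteq \mathcal{E}_\mathcal{A}$ with $|\mathcal{E}_\mathcal{Z}| = k$, and then run a chain-rule expansion:
\begin{align*}
n R_\mathcal{A} &= H(W_\mathcal{A}) = I(W_\mathcal{A}; X^n_{\mathcal{E}_\mathcal{A}}) \\
&= I(W_\mathcal{A}; X^n_{\mathcal{E}_\mathcal{Z}}) + I(W_\mathcal{A}; X^n_{\mathcal{E}_\mathcal{A} \setminus \mathcal{E}_\mathcal{Z}} \mid X^n_{\mathcal{E}_\mathcal{Z}}) \\
&\leq 0 + H(X^n_{\mathcal{E}_\mathcal{A} \setminus \mathcal{E}_\mathcal{Z}}) \leq n (M_\mathcal{A} - k),
\end{align*}
where the first mutual information vanishes by perfect secrecy applied to $\mathcal{E}_\mathcal{Z}$ (which has cardinality at most $k$), and the final inequality uses that $|\mathcal{E}_\mathcal{A} \setminus \mathcal{E}_\mathcal{Z}| = M_\mathcal{A} - k$ edges each carry at most one $q$-ary symbol per channel use (rates being measured in $q$-ary units).

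I do not anticipate a serious obstacle, since this is essentially the standard cut-set plus secrecy combination used by Cai--Yeung for multicast, extended to arbitrary subsets of destinations. The one subtle point worth handling carefully is justifying that $W_\mathcal{A}$ is a deterministic function of the min-cut symbols in an acyclic network with only the source holding randomness $\Theta$: this follows from topologically ordering the nodes downstream of $\mathcal{E}_\mathcal{A}$ and noting that each downstream edge's symbol is a deterministic function of its incoming edges' symbols, so the incoming symbols at every $D_i$ with $i \in \mathcal{A}$ are determined by $X^n_{\mathcal{E}_\mathcal{A}}$, and hence so is $W_i$. Since $\mathcal{A}$ is arbitrary, taking the union of the bounds over all $\mathcal{A} \subseteq [m]$ yields the outer-bound region in \eqref{eq:SecureOB}.
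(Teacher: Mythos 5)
Your proposal is correct and follows essentially the same route as the paper's proof: restrict attention to a min-cut $\mathcal{E}_{\mathcal{A}}$, use decodability plus the deterministic propagation of symbols in the acyclic network to get $H(W_{\mathcal{A}} \mid X^n_{\mathcal{E}_{\mathcal{A}}})=0$, kill $I(W_{\mathcal{A}};X^n_{\mathcal{E}_{\mathcal{Z}}})$ by placing Eve's $k$ edges inside the cut, and bound the residual term by the $M_{\mathcal{A}}-k$ remaining unit-capacity edges. Your explicit case split on $M_{\mathcal{A}}\leq k$ versus $M_{\mathcal{A}}>k$ is just a slightly more verbose handling of the $[\cdot]^+$ than the paper's choice of setting $\mathcal{E}_{\mathcal{Z}}=\mathcal{E}_{\mathcal{A}}$ when the cut is small, but the substance is identical.
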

	
	\begin{proof}
		Let $\mathcal{E}_{\mathcal{A}}$ be a min-cut between the source $S$ and $D_{\mathcal{A}}$ and $\mathcal{E}_{\mathcal{Z}} \subseteq \mathcal{E}_{\mathcal{A}}$ be the set of $k$ edges wiretapped by Eve, and define $\mathcal{I}(D_{\mathcal{A}}) : = \bigcup_{i \in \mathcal{A}} \mathcal{I}(D_i)$. 
		If $|\mathcal{E}_{\mathcal{A}}| < k$, let $\mathcal{E}_{\mathcal{Z}} = \mathcal{E}_{\mathcal{A}}$. We have,
		\begin{align*}
		n R_{\mathcal{A}}   
		= H(W_{\mathcal{A}})
		& \stackrel{{\rm{(a)}}}{=}   H(W_{\mathcal{A}}) -  H(W_{\mathcal{A}}|X^n_{\mathcal{I}(D_{\mathcal{A}})})  \\
		& \stackrel{{\rm{(b)}}}{=} H(W_{\mathcal{A}}) - H(W_{\mathcal{A}}|X^n_{\mathcal{E}_{\mathcal{A}}}) \\
		& \stackrel{{\rm{(c)}}}{=} I(W_{\mathcal{A}}; X^n_{\mathcal{E}_{\mathcal{Z}}} , X^n_{\mathcal{E}_{\mathcal{A}} \setminus \mathcal{E}_{\mathcal{Z}}})  \\
		& = I(W_{\mathcal{A}}; X^n_{\mathcal{E}_{\mathcal{Z}}}) + I(W_{\mathcal{A}}; X^n_{\mathcal{E}_{\mathcal{A}} \setminus \mathcal{E}_{\mathcal{Z}}} | X^n_{\mathcal{E}_{\mathcal{Z}}} )  \\
		& \stackrel{{\rm{(d)}}}{=} I(W_{\mathcal{A}}; X^n_{\mathcal{E}_{\mathcal{A}} \setminus \mathcal{E}_{\mathcal{Z}}} | X^n_{\mathcal{E}_{\mathcal{Z}}} )  \\
		& \stackrel{{\rm{(e)}}}{\leq} H(X^n_{\mathcal{E}_{\mathcal{A}} \setminus \mathcal{E}_{\mathcal{Z}}}) \\
		& \stackrel{{\rm{(f)}}}{\leq} n [M_{\mathcal{A}} -  k]^+  \enspace,
		\end{align*}
		where $W_{\mathcal{A}}=\{W_i,i \in \mathcal{A}\}$ and where: 
		(i)  the equality in $\rm{(a)}$ follows because of the decodability constraint {(see Definition~\ref{defin:sec_cap})};
		(ii) the equality in $\rm{(b)}$ follows because $X^n_{\mathcal{I}(D_{\mathcal{A}})}$ is a deterministic function of $X^n_{\mathcal{E}_{\mathcal{A}}}$; (iii) the equality in $\rm{(c)}$ follows from the definition of mutual information and since $\mathcal{E}_{\mathcal{A}}=\mathcal{E}_{\mathcal{Z}} \cup \mathcal{E}_{\mathcal{A}\setminus \mathcal{Z}}$;
		(iv) the equality in $\rm{(d)}$ follows because of the perfect secrecy requirement {(see Definition~\ref{defin:sec_cap})};
		(v) the inequality in $\rm{(e)}$ follows since the  entropy of a discrete random variable is a non-negative quantity and because of the `conditioning reduces the entropy' principle;
		(vi) finally, the inequality in $\rm{(f)}$ follows since each link is of unit capacity and since $|\mathcal{E}_{\mathcal{A}} \setminus \mathcal{E}_{\mathcal{Z}}| = [M_{\mathcal{A}} - k]^+$.
		By dividing both sides of the above inequality by $n$ we obtain that $R_{\mathcal{A}}$ in~\eqref{eq:SecureOB} is an outer bound on the {secure} capacity region of the multiple unicast traffic over networks with single source and $m$ destinations.
		This concludes the proof of Theorem~\ref{thm:SecureOB}.
	\end{proof}

	\begin{rem}
		Since the eavesdropper Eve wiretaps any $k$ edges of her choice, intuitively Theorem~\ref{thm:SecureOB} states that, if she wiretaps $k$ edges of a cut with capacity $M$, we can at most hope to reliably transmit at rate $M-k$. 
		However, this holds only for the case of single source; indeed, as we will see in Section~\ref{sec:non_reversibility} through an example, higher rates can be achieved for networks having a single destination and multiple sources.
	\end{rem}

\section{Capacity achieving scheme for networks {with} two destinations}
\label{sec:two_destinations}
\label{sec:Ach2Dest}

{In this section, we} prove that the outer bound in Theorem~\ref{thm:SecureOB} is indeed tight for the case { of $m=2$ destinations. Towards this end, we design a secure transmission scheme whose achievable rate region matches the outer bound in Theorem~\ref{thm:SecureOB}.} In particular, our main result is stated in the following theorem.

\begin{thm}
	\label{thm:SecureLB2}
	The outer bound in~\eqref{eq:SecureOB} is tight for the case $m=2$, i.e., the secure capacity region of the multiple unicast traffic over networks with single source and $m=2$ destinations {is}
	\begin{subequations}
		\label{eq:Ach2}
		\begin{align}
		R_1 & \leq [M_{\{1\}}-k]^+\enspace,
		\\ R_2 & \leq [M_{\{2\}}-k]^+\enspace,
		\\ R_1 + R_2 & \leq [M_{\{1,2\}}-k]^+\enspace. \label{eq:sumrate}
		\end{align}
	\end{subequations}
\end{thm}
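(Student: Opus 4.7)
The plan is to show that the outer bound~\eqref{eq:SecureOB} is tight for $m=2$ by explicitly designing a secure scheme that meets it. Without loss of generality I assume $M_{\{1\}} \geq k$ and $M_{\{2\}} \geq k$: otherwise the corresponding $[M_{\{i\}}-k]^+$ vanishes, the outer bound forces $R_i = 0$, and the problem reduces (after keeping that unicast session silent) to secure single-unicast, which is resolved by the standard Cai-Yeung construction~\cite{cai2002secure}. Fix any $(R_1, R_2)$ strictly inside the region~\eqref{eq:Ach2}; the closure is obtained by continuity.

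The central idea is to generate, at the source, $k$ i.i.d.\ uniform random keys $K = (K_1, \dots, K_k) \in \mathbb{F}_q^k$ and view them as a \emph{common} (rate-$k$ multicast) message wanted by both $D_1$ and $D_2$, running alongside two non-overlapping unicast messages $W_1, W_2$ of rates $R_1, R_2$. This reduces the task to a standard single-source, two-sink network coding problem with one multicast and two private unicast demands. Invoking the separability-based construction of Ramamoorthy et al.~\cite{ramamoorthy2009single}, such a triple $(k, R_1, R_2)$ is jointly achievable over a sufficiently large field $\mathbb{F}_q$ whenever
\begin{align*}
k + R_1 \leq M_{\{1\}}, \quad k + R_2 \leq M_{\{2\}}, \quad k + R_1 + R_2 \leq M_{\{1,2\}},
\end{align*}
which, after subtracting $k$, is precisely~\eqref{eq:Ach2}. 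In the resulting linear network code every edge carries a fixed linear combination of $(K, W_1, W_2)$, and $D_i$ inverts the combinations on its incoming edges to recover both $K$ and $W_i$.

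On top of this non-secure code I would overlay a Cai-Yeung-style secure layer. For any wiretap set $\mathcal{E}_{\mathcal{Z}}$ with $|\mathcal{E}_{\mathcal{Z}}| = k$, Eve's observation reads $X^n_{\mathcal{E}_{\mathcal{Z}}} = A_{\mathcal{Z}} K + B_{\mathcal{Z}}(W_1, W_2)^T$ for a $k \times k$ matrix $A_{\mathcal{Z}}$ and a $k \times (R_1 + R_2)$ matrix $B_{\mathcal{Z}}$ determined by the network-code coefficients. If $A_{\mathcal{Z}}$ is invertible, then $X^n_{\mathcal{E}_{\mathcal{Z}}}$ is uniform on $\mathbb{F}_q^k$ and independent of $(W_1, W_2)$, so $I(W_{\{1,2\}} ; X^n_{\mathcal{E}_{\mathcal{Z}}}) = 0$ as required by Definition~\ref{defin:sec_cap}. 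Imposing simultaneously (i) correct decoding at $D_1, D_2$ and (ii) invertibility of $A_{\mathcal{Z}}$ for each of the $\binom{|\mathcal{E}|}{k}$ wiretap patterns amounts to non-vanishing of a finite product of polynomials in the code coefficients; a Schwartz-Zippel argument then furnishes a common non-root assignment over any $\mathbb{F}_q$ of size polynomial in $|\mathcal{E}|$ and $\binom{|\mathcal{E}|}{k}$.

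The main obstacle I anticipate is in the overlay step: the flow decomposition produced by separability fixes which linear combinations can appear on which edges, and one has to check that within this constrained coefficient family there is still enough freedom to make $A_{\mathcal{Z}}$ full rank on \emph{every} $k$-subset of edges — in particular on edges lying strictly in a private branch to one destination, where the multicast key $K$ must nevertheless mix in nontrivially. Showing that the keys can be propagated across both the common and the private subgraphs without breaking the separability structure or the decoding invertibility at $D_1, D_2$ is the key technical step; once it is in place, the scheme achieves every $(R_1, R_2)$ in~\eqref{eq:Ach2} and therefore matches the outer bound of Theorem~\ref{thm:SecureOB}, completing the proof for $m=2$.
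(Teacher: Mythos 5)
Your overall architecture is essentially the paper's: treat the $k$ keys as a rate-$k$ multicast demand riding alongside the two private demands, invoke the separability result of Ramamoorthy et al.\ to realize the triple $(k,R_1,R_2)$ whenever $k+R_1\le M_{\{1\}}$, $k+R_2\le M_{\{2\}}$, $k+R_1+R_2\le M_{\{1,2\}}$, and then argue that the key coefficients can be chosen so that Eve's $k$ observations reveal nothing. The paper does exactly this, splitting into the cases $k\ge M_{\{1,2\}}^\star$ and $k< M_{\{1,2\}}^\star$ according to whether the common subgraph $\mathcal{G}_3^\prime$ can carry all the keys, and in the latter case running the Cai--Yeung secure multicast on $\mathcal{G}_3^\prime$ while encrypting the residual private packets on the disjoint paths of $\mathcal{G}_1^\prime$ and $\mathcal{G}_2^\prime$.

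However, there is a genuine gap precisely at the step you yourself flag as ``the key technical step'': the security overlay is asserted, not proved, and the sufficient condition you propose to enforce --- invertibility of the $k\times k$ key matrix $A_{\mathcal{Z}}$ for \emph{every} $k$-subset $\mathcal{E}_{\mathcal{Z}}$ --- is not achievable in general. If Eve wiretaps two edges in series on the same routed path (or any $k$ edges whose carried symbols are linearly dependent), the corresponding rows of $A_{\mathcal{Z}}$ are proportional and $A_{\mathcal{Z}}$ is necessarily singular; the determinant polynomial you want to be non-vanishing is then identically zero for such $\mathcal{Z}$, and the Schwartz--Zippel argument collapses. The condition must be weakened to the standard secure-network-coding requirement $\mathrm{rank}(A_{\mathcal{Z}})=\mathrm{rank}\left(\left[A_{\mathcal{Z}}\,|\,B_{\mathcal{Z}}\right]\right)$; equivalently, one argues (as the paper does) that without loss of generality Eve wiretaps only the first edges of the disjoint private paths and the edges of $\mathcal{G}_3^\prime$, and then one only needs that any $k_2$ rows of the free encryption matrix $U$ on the private branches, together with any $k_1$ \emph{linearly independent} key combinations from the fixed multicast matrix $G$, form a full-rank matrix --- a statement that does admit a polynomial non-vanishing argument. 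You also need to verify that mixing $u_j^T K$ into $W_i^{(j)}$ on the private branches (which the plain Ramamoorthy code does not do, and which is indispensable since an unkeyed private edge would leak $W_i$) does not break decodability; this holds because $D_i$ recovers $K$ from the multicast layer and strips it off. Supplying these two points would close the gap and make your proof coincide with the paper's.
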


\begin{proof}
Clearly, from the result in Theorem~\ref{thm:SecureOB}, the rate region in~\eqref{eq:Ach2} is an outer bound on the {secure} capacity region.
Hence, we now need to prove that the rate region in~\eqref{eq:Ach2} is also achievable.
Towards this end, we start by providing the following definition of \textit{separable} graphs, {which we will leverage in the design of our scheme.}

\begin{defin}[{Separable Graph}]
	\label{def:sepNet}
	A graph $\mathcal{G} = (\mathcal{V}, \mathcal{E})$ with a single source and $m$ destinations is said to be \textbf{separable} if its edge set $\mathcal{E}$ can be partitioned as $\mathcal{E} = \sqcup_{\ell=1}^{2^m-1} \mathcal{E}_{\ell}^\prime$ such that $\mathcal{G}_{\ell}^\prime = (\mathcal{V},\mathcal{E}_{\ell}^\prime), \ {\forall \ell \in [2^m-1]}$ and
	\begin{align}
	M_{\mathcal{A}} & = \sum\limits_{\substack{\mathcal{J}\subseteq {[m]} \\ \mathcal{J} \cap \mathcal{A} \neq \emptyset} } M_{\mathcal{J}}^\star, \ \forall \mathcal{A} \subseteq {[m]} \label{eq:star_mincuts} \enspace,
	\end{align}
	where $M_{\mathcal{A}}$ is the min-cut capacity between the source $S$ and the set of destinations $D_{\mathcal{A}}:= \{D_i: i \in \mathcal{A}\}$  in $\mathcal{G}$ and $M_{\mathcal{J}}^\star$ is the min-cut capacity between the source $S$ and the set of destinations $D_{\mathcal{B}} := \{D_b: b \in \mathcal{B}\}, \ \forall \mathcal{B} \subseteq \mathcal{J}$ for the graph $\mathcal{G}^\prime_{\ell}$ with $\ell \in [1:2^m-1]$ being the decimal representation of the binary vector of length $m$ that has a one in all the positions indexed by $j \in \mathcal{J}$ and zero otherwise, with the least significant bit in the first position. 
\end{defin}

To better understand the above definition, consider a graph $\mathcal{G}$ with $m=2$ destinations. 
Then, the graph $\mathcal{G}$ is separable if it can be partitioned into $3$ graphs such that:
\begin{itemize}
	\item $\mathcal{G}_1^\prime$ has the following min-cut capacities: $M_{\{1\}}^\star$ from $S$ to $D_1$ and zero from $S$ to $D_2$,
	\item $\mathcal{G}_2^\prime$ has the following min-cut capacities: zero from $S$ to $D_1$ and $M_{\{2\}}^\star$ from $S$ to $D_2$,
	\item $\mathcal{G}_3^\prime$ has the following min-cut capacities: $M_{\{1,2\}}^\star$ from $S$ to $D_1$,  $M_{\{1,2\}}^\star$ from $S$ to $D_2$ and $M_{\{1,2\}}^\star$ from $S$ to $\{D_1,D_2\}$,
\end{itemize}
where the quantities $M_{\{1\}}^\star$, $M_{\{2\}}^\star$ and $M_{\{1,2\}}^\star$ can be computed using the following set of equations:
\begin{subequations}
	\begin{align}
	M_{\{1\}} &= M_{\{1\}}^\star + M_{\{1,2\}}^\star \enspace, \\
	M_{\{2\}} &= M_{\{2\}}^\star + M_{\{1,2\}}^\star \enspace, \\
	M_{\{1,2\}} & = M_{\{1\}}^\star +M_{\{2\}}^\star +M_{\{1,2\}}^\star\enspace. \label{eq:M12star}
	\end{align}
\label{eq:Mstar}
\end{subequations}
We now state the following lemma, which is a consequence of~\cite[Theorem 1]{ramamoorthy2009single} and we will use to prove the achievability of the rate region in~\eqref{eq:Ach2}.
	
	\begin{lem}
		\label{lemma:sep}
		Any graph with a single source and $m=2$ destinations is separable.
	\end{lem}

	For completeness we report the proof of Lemma~\ref{lemma:sep} in Appendix~\ref{app:seperable_graph_m_2}.
	By leveraging the result in Lemma~\ref{lemma:sep}, we are now ready to prove Theorem~\ref{thm:SecureLB2}. 
	In particular, we consider two cases depending on the value of $k$ (i.e., the number of edges {that} the eavesdropper wiretaps).
	Without loss of generality, we assume that $k < \min_{i \in {[2]}} M_i$, as otherwise secure communication to the set of destinations $\{D_i : k \geq M_i\}$ is not possible at any rate, and hence we can just remove this set of destinations from the network.
	
	\begin{enumerate}
		\item {\bf{Case 1:}} $k \geq M_{\{1,2\}}^\star$.
		In this case, by substituting the quantities in~\eqref{eq:Mstar} into~\eqref{eq:Ach2}, we obtain that the constraint in~\eqref{eq:sumrate} is redundant.
		Thus, we will now prove that the rate pair $(R_1, R_2) = (M_{\{1\}} - k, M_{\{2\}} -k)$ is securely achievable, which along with the time-sharing argument proves the achievability of the entire {rate} region in~\eqref{eq:Ach2}.
		
		We denote with $K_1, K_2, \ldots, K_k$ the $k$ key packets and with $W_i^{(1)}, W_i^{(2)}, \ldots, W_i^{(R_i)}$ (with $i\in {[2]}$) the $R_i$ message packets for $D_i$.
		With this, our scheme is as follows:
		\begin{itemize}
			\item We multicast $K_i, \forall i \in {[M_{\{1,2\}}^\star]}$, to both $D_1$ and $D_2$ using $\mathcal{G}_3^\prime$, which has edges denoted by $\mathcal{E}_3^\prime$. 
			This is possible {since} $\mathcal{G}_3^\prime$ has a min-cut capacity $M_{\{1,2\}}^\star$ to both $D_1$ and $D_2$ (see Definition~\ref{def:sepNet}). 
			
			\item 
			We unicast $K_{\ell}, \forall \ell \in [M_{\{1,2\}}^\star+1:k]$, to $D_i, \forall i \in {[2]}$, 
			using $k - M_{\{1,2\}}^\star$ paths out of the $M_{\{i\}}^\star$ disjoint paths in $\mathcal{G}_i^\prime$.
			We denote by $\hat{\mathcal{E}}_i$ the set that contains all the first edges of these paths. 
			Clearly, $|\hat{\mathcal{E}}_i| = k - M_{\{1,2\}}^\star, \forall i \in {[2]}$.
			Notice that $\hat{\mathcal{E}}_i \subseteq \mathcal{E}_i^\prime, \forall i \in {[2]}$ (see Definition~\ref{def:sepNet}). 
			\item 
			We send the $R_i, \forall i \in {[2],}$ encrypted message packets {(i.e., encoded with the keys)} of $D_i$ on the remaining $M_{\{i\}}^\star - k + M_{\{1,2\}}^\star$ disjoint paths in $\mathcal{G}_i^\prime$.
			We denote by $\bar{\mathcal{E}}_i$ the set that contains all the first edges of these paths in $\mathcal{G}_i^\prime$. Clearly, $|\bar{\mathcal{E}}_i| = R_i, \forall i \in {[2]}$, $\bar{\mathcal{E}}_i \subseteq \mathcal{E}_i^\prime$ and $\bar{\mathcal{E}}_i \cap \hat{\mathcal{E}}_i = \emptyset $ (see Definition~\ref{def:sepNet}).
		\end{itemize}
		
		This scheme achieves $R_i = M_{\{i\}}^\star - k + M_{\{1,2\}}^\star = M_{\{i\}} -k, \forall i \in [1:2]$, where the second equality follows by using the definitions in~\eqref{eq:Mstar}.
		Now we prove that this scheme is also secure. 
		We start by noticing that, thanks to Definition~\ref{def:sepNet}, the edges $\mathcal{E}_3^\prime$, $\hat{\mathcal{E}}_i$ and $\bar{\mathcal{E}}_i$, with $i \in {[2]}$, do not overlap.
		We write these transmissions in a matrix form (with $G$ and $U$ being {the} encoding matrices) and we obtain
		
		\begin{align*}
		\left[ \begin{array}{c} X_{\mathcal{E}_3^\prime} \\ X_{\hat{\mathcal{E}}_1} \\ X_{\hat{\mathcal{E}}_2} \end{array}\right] & = \underbrace{\left[ \begin{array}{cccc} g_{11} & g_{12} & \ldots & g_{1k} \\ g_{21} & g_{22} & \ldots & g_{1k} \\ \vdots & \vdots & \ddots & \vdots \\ g_{\ell 1} & g_{\ell2} & \ldots & g_{\ell k} \\  \end{array} \right]}_G \left[ \begin{array}{c} K_1 \\ K_2 \\ \vdots \\ K_k \end{array}\right], \ \ell = |\mathcal{E}_3^\prime| + 2 \left (k-M_{\{1,2\}}^\star \right )\enspace,
		\\
		\left[ \begin{array}{c} X_{\bar{\mathcal{E}}_1} \\ X_{\bar{\mathcal{E}}_2} \end{array}\right] & = \underbrace{\left[ \begin{array}{cccc} u_{11} & u_{12} & \ldots & u_{1k} \\ u_{21} & u_{22} & \ldots & u_{2k} \\ \vdots & \vdots & \ddots & \vdots \\ u_{r1} & u_{r2} & \ldots & u_{rk} \\  \end{array} \right]}_U  \left[ \begin{array}{c} K_1 \\ K_2 \\ \vdots \\ K_k \end{array}\right]  \oplus \left[ \begin{array}{c} W_1^{(1)} \\ \vdots \\ W_1^{(R_1)} \\ W_2^{(1)} \\ \vdots \\ W_2^{(R_2)} \end{array}\right],  \ r = R_1 + R_2\enspace.
		\end{align*}
		The eavesdropper Eve wiretaps $k_1 \leq k$ edges from the collection of edges $\{\mathcal{E}_3^\prime, \hat{\mathcal{E}}_1,\hat{\mathcal{E}}_2\}$, over which the linear combinations $X_{\mathcal{E}_3^\prime}$, $X_{\hat{\mathcal{E}}_1}$ and $X_{\hat{\mathcal{E}}_2}$ of keys are transmitted, and $k_2 = k - k_1$ edges from the collection of edges $\{\bar{\mathcal{E}}_1,\bar{\mathcal{E}}_2\}$ over which the messages {encoded} with the keys $X_{\bar{\mathcal{E}}_1}$ and $X_{\bar{\mathcal{E}}_2}$ are transmitted. 
		We here note that on the other edges $\mathcal{E}\backslash \{\mathcal{E}_3^\prime \cup \hat{\mathcal{E}}_1 \cup \bar{\mathcal{E}}_1 \cup \hat{\mathcal{E}}_2 \cup \bar{\mathcal{E}}_2 \}$ of the network, we either do not transmit any symbol or simply route the symbols from $\{ X_{\bar{\mathcal{E}}_1}, X_{\bar{\mathcal{E}}_2}, X_{\hat{\mathcal{E}}_1}, X_{\hat{\mathcal{E}}_2} \}$ (corresponding to the symbols transmitted on disjoint paths). Thus, without loss of generality, we can assume that Eve does not wiretap any of these edges. 
		Since the first $|\mathcal{E}_3^\prime|$ rows of $G$ (i.e., those that correspond to multicasting {the keys)} are determined by the network coding scheme for multicasting~\cite{AhlswedeIT2000}, we assume that we do not have any control over the construction of $G$. 
		
		Thus, we would like to construct the code matrix $U$ such that all the linear combinations of the keys used to encrypt the messages on $k_2$ edges are mutually independent and are independent from the linear combinations of the keys wiretapped on the $k_1$ edges (notice that this makes the symbols wiretapped by the eavesdropper completely independent from the messages).
		In particular, 
		since in the worst case Eve wiretaps $k_1$ edges which are independent linear combinations, we would like that any matrix formed by $k_1$ independent rows of the matrix $G$ and $k_2$ rows of the matrix $U$ is full rank. 
		Since there is a finite number of such choices and the determinant of each of these possible matrices can be written in a polynomial form -- which is not identically zero -- as a function of the entries of $U$, then we can choose the entries of $U$ such that all these matrices are invertible.
		Thus, we can always construct the code matrix $U$ such that the edges wiretapped by Eve have independent keys and hence Eve does not get any information about the message packets, i.e., the scheme is secure.
		This implies that the rate pair $(R_1, R_2) = (M_{\{1\}} - k, M_{\{2\}} -k)$ is securely 
		achievable.
		
		\item {\bf{Case 2:}} $k < M_{\{1,2\}}^\star$. 
		By substituting the quantities in~\eqref{eq:Mstar}, the rate region in~\eqref{eq:Ach2} becomes
		\begin{subequations}
			\label{eq:Ach2_star}
			\begin{align}
			R_i & \leq M_{\{i\}} -k = M_{\{i\}}^\star + M_{\{1,2\}}^\star - k, \forall i \in {[2]}\enspace,\\
			R_1 + R_2 & \leq M_{\{1,2\}} -k = M_{\{1\}}^\star + M_{\{2\}}^\star + M_{\{1,2\}}^\star - k\enspace.
			\end{align}
		\end{subequations} 
		
		We now show that we can achieve the following two corner points
		i.e., the rate pair
		\begin{align}
		(R_1, R_2) & = \left( \alpha (M_{\{1,2\}} - M_{\{2\}}) +(1-\alpha) (M_{\{1\}} - k), \right.  \nonumber
		\\& \quad \  \left. 
		\alpha (M_{\{2\}} - k)+(1-\alpha) (M_{\{1,2\}} - M_{\{1\}}) \right) \nonumber
		\\& \stackrel{{\rm{(a)}}}{=} (M_{\{1\}}^\star + \alpha (M_{\{1,2\}}^\star - k), M_{\{2\}}^\star + (1-\alpha) (M_{\{1,2\}}^\star - k))\enspace,
		\label{eq:RatePairCase2}
		\end{align}
		for $\alpha \in \{0,1\}$, where the equality in $\rm{(a)}$ follows by using the definitions in~\eqref{eq:Mstar}. This along with the time-sharing argument proves the achievability of the entire {rate} region in~\eqref{eq:Ach2_star}.
		We recall that we denote with $K_1, K_2, \ldots, K_k$ the $k$ key packets and with $W_i^{(1)}, W_i^{(2)}, \ldots, W_i^{(R_i)}$ (with $i\in {[2]}$) the $R_i$ message packets for $D_i$.
		With this, our scheme is as follows:
		\begin{itemize}	
			\item Using the graph $\mathcal{G}_3^\prime$ we multicast to both destinations $D_1$ and $D_2$: 
			(i) $K_i, \forall i \in {[k]}$, 
			(ii) $\alpha (M_{\{1,2\}}^\star - k)$ encrypted message packets {(i.e., encoded with the keys)} for $D_1$ and 
			(iii) $(1-\alpha) (M_{\{1,2\}}^\star - k)$ encrypted message packets for $D_2$. 
			Recall that the edges of the graph $\mathcal{G}_3^\prime$ are denoted by $\mathcal{E}_3^\prime$ (see Definition~\ref{def:sepNet}).
			We also highlight that the message packets multicast to the two destinations are encrypted using the key packets, where the encryption is based on the secure network coding result on multicasting~\cite{cai2002secure}, which ensures perfect security from an adversary wiretapping any $k$ edges.
			\item 
			We send  
			$M_{\{i\}}^\star$ encrypted message packets of $D_i$ on the $M_{\{i\}}^\star$ disjoint paths to $D_i$ in the graph $\mathcal{G}_i^\prime$, and denote by $\hat{\mathcal{E}}_i$ the set that contains all the first edges of these paths for $i \in {[2]}$.
		\end{itemize}
		
		This scheme achieves the rate pair in~\eqref{eq:RatePairCase2}.
		Now we prove that this scheme is also secure. 
		For ease of representation, in what follows we let $R_1^\star = \alpha (M_{\{1,2\}}^\star - k)$ and $R_2^\star = (1-\alpha) (M_{\{1,2\}}^\star - k)$.
		We again notice that, thanks to Definition~\ref{def:sepNet}, the edges $\mathcal{E}_3^\prime$, $\hat{\mathcal{E}}_1$ and $\hat{\mathcal{E}}_2$ do not overlap.
		We write these transmissions in a matrix form (with $G$, $U$ and $S$ being encoding matrices) and we obtain,
		\begin{align*}
		& X_{\mathcal{E}_3^\prime}  = \underbrace{\left[ \begin{array}{cccc} g_{11} & g_{12} & \ldots & g_{1k} \\ g_{21} & g_{22} & \ldots & g_{1k} \\ \vdots & \vdots & \ddots & \vdots \\ g_{\ell 1} & g_{\ell2} & \ldots & g_{\ell k}  \end{array} \right]}_G \left[ \begin{array}{c} K_1 \\ K_2 \\ \vdots \\ K_k  \end{array}\right]  \oplus \underbrace{\left[ \begin{array}{cccc} s_{11} & s_{12} & \ldots & s_{1k} \\ s_{21} & s_{22} & \ldots & s_{1k} \\ \vdots & \vdots & \ddots & \vdots \\ s_{\ell 1} & s_{\ell2} & \ldots & s_{\ell k}  \end{array} \right]}_S \left[ \begin{array}{c} W_1^{(1)} \\ \vdots \\ W_1^{(R_1^\star)} \\ W_2^{(1)} \\ \vdots \\ W_2^{(R_2^\star)}  \end{array}\right] ,  \ \ell = |\mathcal{E}_3^\prime|\enspace,
		\end{align*}
		\begin{align*}
		& \left[ \begin{array}{c} X_{\hat{\mathcal{E}}_1} \\ X_{\hat{\mathcal{E}}_2} \end{array}\right]  \!\!= \!\! \underbrace{\left[ \begin{array}{cccc} u_{11} & u_{12} & \ldots & u_{1k} \\ u_{21} & u_{22} & \ldots & u_{2k} \\ \vdots & \vdots & \ddots & \vdots \\ u_{r1} & u_{r2} & \ldots & u_{rk} \\  \end{array} \right]}_U  \left[ \begin{array}{c} K_1 \\ K_2 \\ \vdots \\ K_k \end{array}\right] \!\! \oplus \!\! \left[ \begin{array}{c} m_1^{(R_1^\star + 1)} \\ \vdots \\ W_1^{(R_1)} \\ W_2^{(R_2^\star + 1)} \\ \vdots \\ W_2^{(R_2)} \end{array}\right]\!, \ r \!=\! R_1 \!+\! R_2 \!-\! (M_{\{1,2\}}^\star \!-\! k)\enspace.
		\end{align*}
		
		The eavesdropper Eve wiretaps $k_1 \leq k$ edges from $\mathcal{E}_3^\prime$, over which the linear combinations $X_{\mathcal{E}_3^\prime}$ of key packets and message packets are sent, and $k_2 = k - k_1$ edges from the collection of edges $\{ \hat{\mathcal{E}}_1,\hat{\mathcal{E}}_2 \}$ over which the messages {encoded} with the keys $X_{\hat{\mathcal{E}}_1}$ and $ X_{\hat{\mathcal{E}}_2}$ are transmitted. 
		Similar to Case $1$, on the other edges $\mathcal{E}\backslash \{\mathcal{E}_3^\prime \cup \hat{\mathcal{E}}_1 \cup \hat{\mathcal{E}}_2\}$ of the network, we either do not transmit any symbol or simply route the symbols from $\{X_{\hat{\mathcal{E}}_1}, X_{\hat{\mathcal{E}}_2} \}$ (corresponding to the symbols transmitted on disjoint paths). Thus, without loss of generality, we can assume that the eavesdropper does not wiretap any of these edges. 
		Since the matrices $G$ and $S$ are determined by the secure network coding scheme for multicasting~\cite{cai2002secure}, we do not have any control over their construction.
		Thus, we would like to construct the code matrix $U$ in order to ensure security. 
		Again, similar to the argument used in Case $1$, we can create $U$ such that any subset of $k_2$ rows of $U$ are linearly independent and not in the span of any subset of $k_1$ rows of $G$. 
		With this, the keys used to encrypt the messages over any $k_2$ edges of $\{\hat{\mathcal{E}}_1, \hat{\mathcal{E}}_2\}$ are mutually independent and independent from the keys used over any $k_1$ edges of $\mathcal{E}_3^\prime$. This, together with the fact that the messages transmitted using $\mathcal{G}_3^\prime$ are already secure, makes our scheme secure. 
		This implies that the rate pair $(R_1,R_2)$ in~\eqref{eq:RatePairCase2} is securely achievable.

	\end{enumerate}
	This concludes the proof of Theorem~\ref{thm:SecureLB2}.
\end{proof}

\begin{figure}[t]
	\centering
	\includegraphics[width=400px]{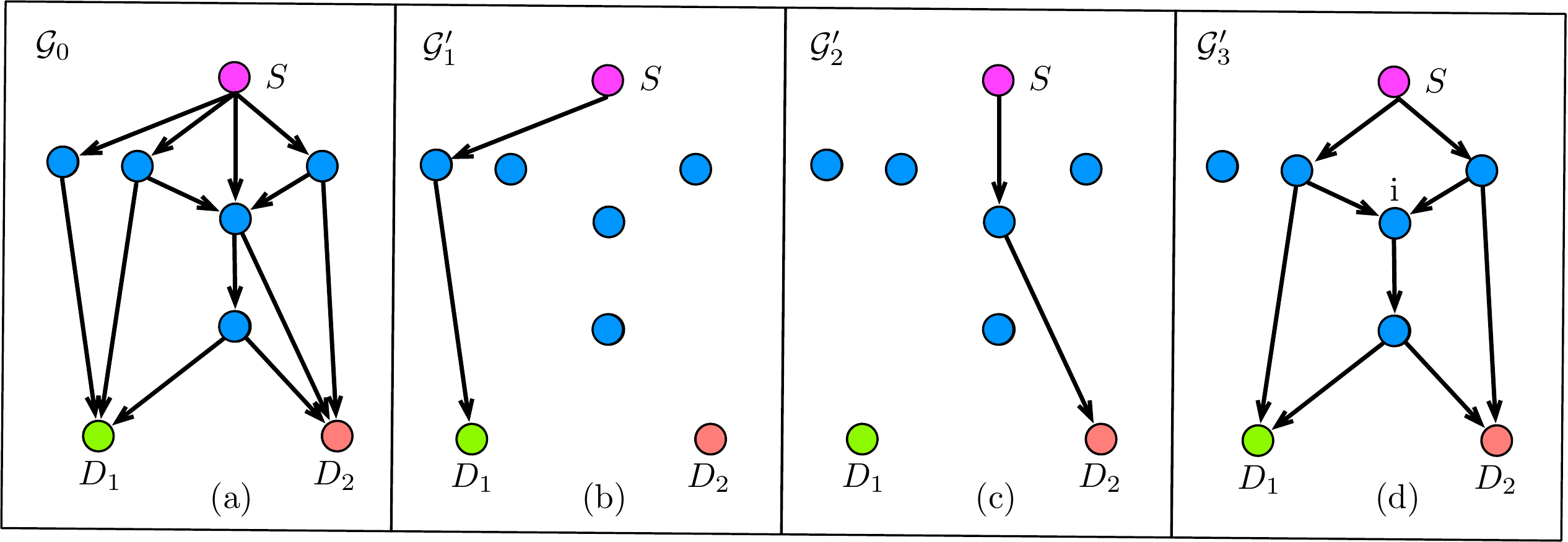}
	\caption{A $2$-destination separable network $\mathcal{G}_0$ in {(a)} and its partition graphs $\mathcal{G}_i^\prime, i \in {[3]}$ {in (b)-(d)}.}
	\label{fig:toy_example_partitions}
\end{figure} 

{ We next illustrate the above described scheme for the network $\mathcal{G}_0$ in Fig.~\ref{fig:toy_example_partitions}(a). We first note that $\mathcal{G}_0$ has min-cut capacities $M_{\{1\}} = M_{\{2\}}  = 3$ and $M_{\{1,2\}} = 4$, and it can be partitioned into three graphs $\mathcal{G}_i^\prime, i \in [3],$ as shown in Figs.~\ref{fig:toy_example_partitions}(b)-(d), with min-cut capacities equal to $M_{\{1\}}^\star = M_{\{2\}}^\star  = 1$ and $M_{\{1,2\}}^\star = 2$, respectively. We assume that the adversary eavesdrops any $k=2$ edges of her choice. For this case, the source should be able to securely communicate at a rate $(R_1,R_2) = (1,1)$ towards the $m=2$ destinations. This rate pair can be achieved as follows:
\begin{enumerate}
\item Over the set of edges in $\mathcal{G}^\prime_1$, the source transmits $W_1 \oplus K_1 \oplus 2 K_2 $; the intermediate node simply routes this transmission to $D_1$;
\item Over the set of edges in $\mathcal{G}^\prime_2$, the source transmits $W_2 \oplus K_1 \oplus 3 K_2 $; the intermediate node simply routes this transmission to $D_2$;
\item Over the set of edges in $\mathcal{G}^\prime_3$, the source transmits $K_1$ to one intermediate node and $K_2$ to the other intermediate node. 
The intermediate node denoted as $\text{i}$ in Fig.~\ref{fig:toy_example_partitions}(d) receives $K_1$ and $K_2$ and transmits $K_1 \oplus K_2$ on its outgoing edges. Note that with this strategy $D_1$ and $D_2$ receives $K_1$ and $K_2$. It therefore follows that $D_i,i \in [2],$ can successfully recover~$W_i$.
\end{enumerate}

We conclude this section with an observation on separable graphs. In particular, we show that, although for the case of $m=2$ destinations any graph is separable, in general the same does not hold for $m \geq 3$.

\begin{rem}
Consider the network in Fig.~\ref{fig:non_sep_nw}, which consists of $m=3$ destinations and has the following min-cut capacities: $M_{\{1\}} = 1$, $M_{\{2\}} = 1$, $M_{\{3\}} = 1$, $M_{\{1,2\}} = 2$, $M_{\{2,3\}} = 2$, $M_{\{1,3\}} = 2$ and $M_{\{1,2,3\}} = 2$. 
With this, we can find $M_{\mathcal{J}}^\star, \  \mathcal{J} \subseteq [3]$, by solving~\eqref{eq:star_mincuts}. In particular, we obtain: $M_{\{1\}}^\star = M_{\{2\}}^\star = M_{\{3\}}^\star = 0$, $M_{\{1,2\}}^\star = M_{\{2,3\}}^\star = M_{\{1,3\}}^\star = 1$ and $M_{\{1,2,3\}}^\star = -1$. Since a graph can not have a negative min-cut capacity, we readily conclude that a separation of the form defined in Definition~\ref{def:sepNet} is not possible.  
\end{rem}
}

\begin{figure}[t]
	\centering
	\includegraphics[width=0.3\columnwidth]{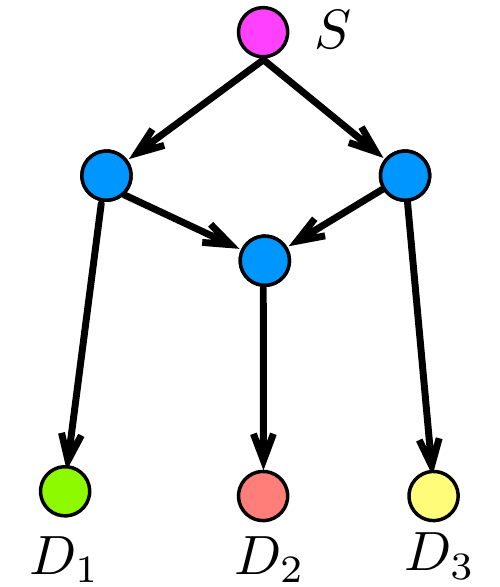}
	\caption{{Example of a non-separable graph.}}
	\label{fig:non_sep_nw}
\end{figure}

\section{Secure scheme for combination networks}
\label{sec:combination_nw}
{In this section, we focus on a special class of networks, referred to as combination networks, and design a secure transmission scheme. Before delving into the study of such networks, we note that the capacity-achieving scheme for $m=2$ destinations described in Section~\ref{sec:two_destinations} uses some part of the network to multicast the keys and the remaining part to communicate the encrypted messages (i.e., messages encoded with the keys).
Therefore, we now ask} the following question: can we extend this idea to get {a capacity-achieving} scheme for networks with arbitrary number of destinations? In other words, can we separate, {over different parts of the network,} the key transmissions and {the message} transmissions? {We next show that this is not possible through an example.}
\begin{figure}[t]
	\centering
	\includegraphics[width=0.5\columnwidth]{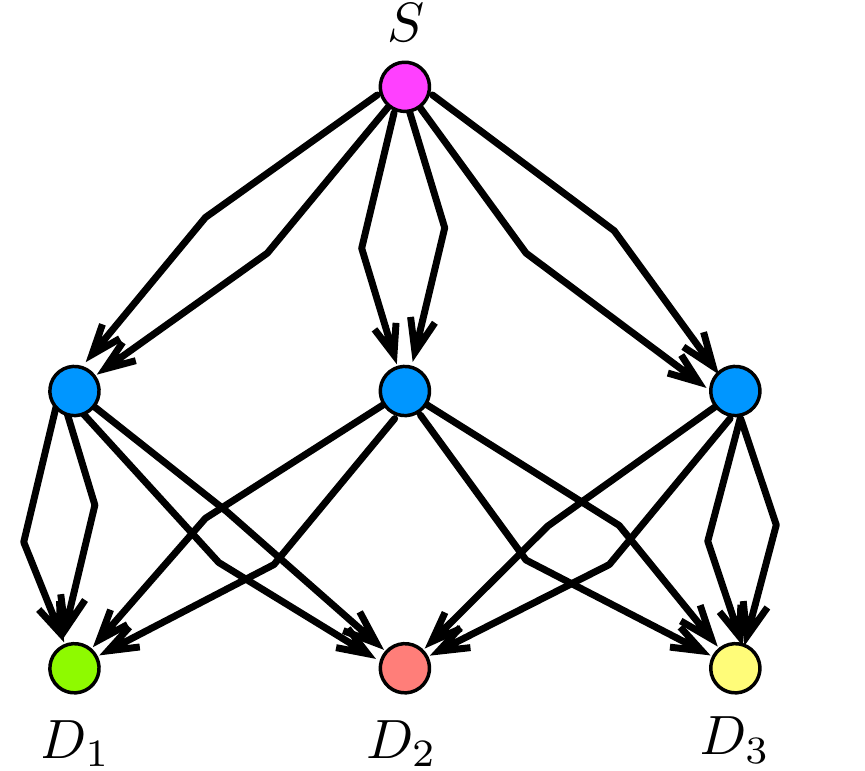}
	\caption{Network example to show that {using different parts of the network to transmit the keys and the encrypted messages is not optimal.}}
	\label{fig:net_key_sprtn}
\end{figure}  
Consider the network shown in Fig.~\ref{fig:net_key_sprtn}, { which consists of $m=3$ destinations, and where the} adversary can eavesdrop any {$k=3$} edges of her choice. 
{For this network we have the following min-cut capacities: $M_{\{1\}} = M_{\{2\}} = M_{\{3\}} =4$, $M_{\{1,2\}} = M_{\{1,3\}} = M_{\{2,3\}} =M_{\{1,2,3\}} =6$. We would like to show that the triple $(R_1,R_2,R_3) = (1,1,1)$ -- obtained from the outer bound in Theorem~\ref{thm:SecureOB} -- can not be achieved when the key packets and the encrypted messages are transmitted over different parts of the network.}
It is not difficult to see that, out of {the $6$} outgoing edges from the source, multicasting $3$ keys\footnote{Note that $3$ keys are required since the adversary eavesdrops $k=3$ edges of her choice.} {requires a number of edges strictly greater than $4$. Thus, we would be left with strictly less than $2$ edges, which are not sufficient to transmit $3$ message packets, i.e., one for each destination.} 
{ It therefore follows that, with this strategy, the rate triple $(R_1,R_2,R_3) = (1,1,1)$} can not be securely achieved. 

However, we now {design a transmission} scheme, where the messages and the keys are encoded jointly and {show that the rate triple $(R_1,R_2,R_3) = (1,1,1)$ can indeed be securely achieved. In what follows, we let: (i) $W_i, i \in[3],$ be the message for $D_i$, (ii) $K_i, i \in[3]$, be the three random packets transmitted by the source to guarantee security (recall that the eavesdropper wiretaps any $k=3$ edges of her choice), and (iii) $X_i, i \in [6]$, be the symbols transmitted by the source on its outgoing edges (enumerated from left to right with reference to Fig.~\ref{fig:net_key_sprtn}). {Intermediate nodes simply route the received symbols on their outgoing edges, i.e., there is no coding operation at the intermediate nodes.} With this, we now define our scheme in matrix form as follows,} 
\begin{align}
\begin{bmatrix}
X_1 \\
X_2 \\
X_3 \\
X_4 \\
X_5 \\
X_6
\end{bmatrix}
=
\underbrace{
	\begin{bmatrix}
	0 & 0 & 0 & 1 & 0 & 0 \\
	0 & 0 & 0 & 1 & 1 & 1 \\
	0 & 0 & 0 & 1 & 2 & 4 \\
	1 & 0 & 0 & 1 & 3 & 2 \\
	4 & 6 & 4 & 1 & 4 & 2 \\
	2 & 4 & 2 & 1 & 5 & 4 \\
	\end{bmatrix}}_{B}
\begin{bmatrix}
W_1 \\
W_2 \\
W_3 \\
K_1 \\
K_2 \\
K_3
\end{bmatrix},
\label{eq:SystModelIO2}
\end{align}
{ where $B \in \mathbb{F}_7$ is the encoding matrix. We start by noting that,} for every set of {$3$} edges, we have linearly independent keys added to the linear combinations of messages, {and hence the scheme is secure. Moreover, the destinations can successfully recover their message by using the following decoding scheme:}    
\begin{itemize}
	\item Destination 1: $W_1  = 6 X_1 + 3 X_2 + 4 X_3 +  X_4$,
	\item Destination 2: $W_2  = 6 X_1 + 4 X_2 + 3 X_5 +  X_6$,
	\item Destination 3: $W_3  = 5 X_3 + 6 X_4 +  X_5 +  2 X_6$ .		
\end{itemize}

{Thus, the rate triple $(R_1,R_2,R_3) = (1,1,1)$ can be securely achieved. This example shows that using different parts of the network to transmit the keys and the encrypted messages, in general is not optimal.} This is partially due to the fact that destinations do not need to decode {each key individually, as long as they can successfully recover their message.}

\subsection{{Secure Transmission Scheme}}

We now {leverage the observations drawn for the network in Fig.~\ref{fig:net_key_sprtn} to design a secure transmission scheme for a class of networks,  referred to as combination networks.
As formally defined in Definition~\ref{defn:combin_networks}, these networks have a two-layer topology, they are separable (see Definition~\ref{def:sepNet}) and they can have an arbitrary number of destinations.}

\begin{defin}[{Combination Network}]
	\label{defn:combin_networks}
A combination network parameterized by $(t,m, \{\mathcal{M}_i, \forall i \in [m]\} )$ is defined as follows. The source node $S$ is connected to $t$ {intermediate} nodes that form the first layer of the network. {Each intermediate node has one incoming edge from the source.} On the second layer, there are $m$ destination nodes $D_1, D_2, \ldots, D_m$, such that {$D_i, i \in [m],$} is connected to a subset of {intermediate nodes given by $\mathcal{M}_i \subseteq [t]$. {Each destination has at most one incoming edge from intermediate node $i \in [t]$.}} 
\end{defin}

{An example of combination network is shown in Fig.~\ref{fig:example_t6_m3}, for which}
$t = 6$, $m =3$, and $\mathcal{M}_1 = \{1,2,4\}$, $\mathcal{M}_2=\{3,4,5,6\}$ and $\mathcal{M}_3 = \{2,3\}$.

\begin{figure}[t]
	\centering
	\includegraphics[width=0.5\linewidth]{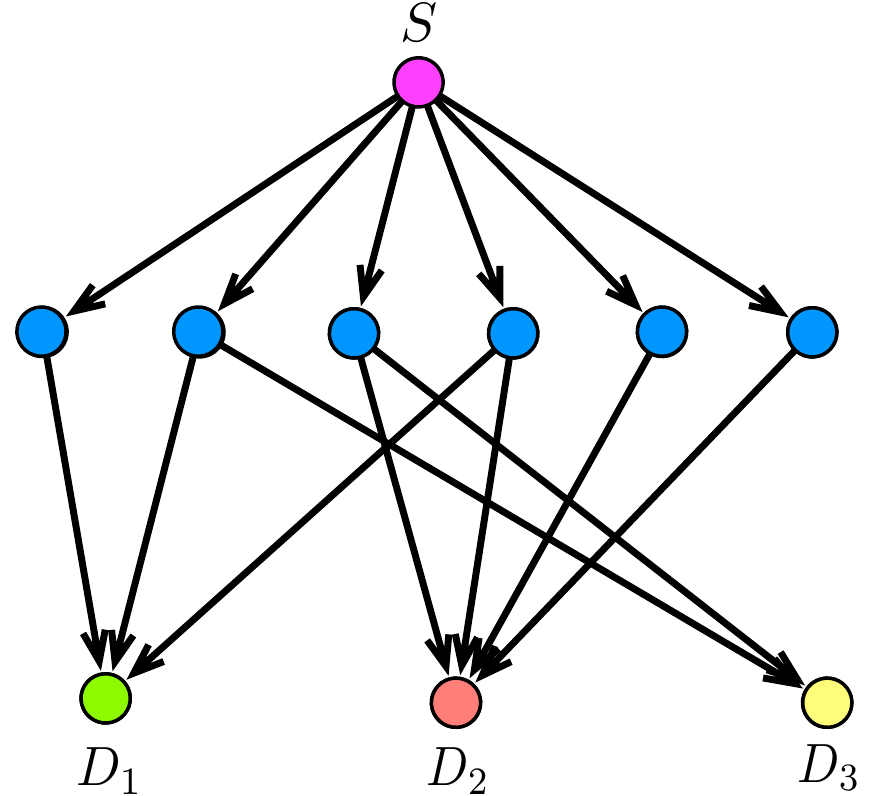}
	\caption{An example of a combination network with $t=6$ and $m=3$.}
	\label{fig:example_t6_m3}
\end{figure}   

The rate region achieved by our {designed secure transmission} scheme depends on $m$ carefully constructed null spaces. {We observe that each receiver $D_i$ will use  $R_i$ vectors to ``decode" (we will refer to these as ``decoding vectors"), to retrieve the $R_i$ messages it requests. {These vectors need to enable to cancel out the keys}, need to be linearly independent, and need to use only the encoded transmissions of the source that the receiver has access to. The intuition behind our scheme design is to construct the null spaces where these decoding vectors reside. 
 }
 We {now show} the construction of {such} null spaces.
Consider a {Vandermonde matrix} $V$ with $k$ rows and $t$ columns as shown in~\eqref{eq:vandermonde}, {where $\alpha_i \in \mathbb{F}_q,  \forall i \in [t]$, are all distinct.}
\begin{align}
\label{eq:vandermonde}
V = \left[ \begin{array}{cccc} 1 & 1 & \ldots & 1 \\ \alpha_1 & \alpha_2 & \ldots & \alpha_t \\ \vdots \\ \alpha_1^{k-1} & \alpha_2^{k-1} & \ldots & \alpha_t^{k-1}
\end{array}\right].
\end{align}
{Note that $t > k$, otherwise secure communication is not possible, i.e., if $k \geq t$, then the adversary wiretaps the entire communication from the source.}
Since $V$ is a Vandermonde matrix, it has the property that any {any $k \times k$ submatrix is full rank, i.e., any} set of $k$ columns are linearly independent. Moreover, the right null space of $V$ {is} of dimension $t-k$.  
{This matrix will be used to encode the keys.}
For each destination $D_i$,  $i \in [m]$, we consider the following matrix $V_i$:
\begin{align}
\label{eq:vandermonde_i}
V_i & = \left[ \begin{array}{c} V \\ C_i \end{array} \right],
\end{align}
where $C_i$ is a matrix having $t - \left| \mathcal{M}_i \right|$  rows and $t$ columns. The rows of $C_i$ are given by $\{c_j,  j \in [t]\setminus \mathcal{M}_i \}$, where $c_j$ {is a vector of length $t$ with a one in the $j$-th position and zeros everywhere else. {The role of the matrix $C_i$ is to restrict receiver $D_i$ to only use the source encoded transmissions it actually has access to.} 
For instance, with reference to the example in Fig.~\ref{fig:example_t6_m3}, we would have
\begin{align*}
C_1 = \begin{bmatrix}
0 & 0 & 1 & 0 & 0 & 0 \\
0 & 0 & 0 & 0 & 1 & 0 \\
0 & 0 & 0 & 0 & 0 & 1
\end{bmatrix},
\quad
C_2 = \begin{bmatrix}
1 & 0 & 0 & 0 & 0 & 0 \\
0 & 1 & 0 & 0 & 0 & 0 
\end{bmatrix},
\quad 
C_3 = \begin{bmatrix}
1 & 0 & 0 & 0 & 0 & 0 \\
0 & 0 & 0 & 1 & 0 & 0 \\
0 & 0 & 0 & 0 & 1 & 0 \\
0 & 0 & 0 & 0 & 0 & 1 
\end{bmatrix}.
\end{align*}
With this construction, we have that: (i)} 
all rows of $V$ are linearly independent ({ because of the property of the Vandermonde matrix in~\eqref{eq:vandermonde}); (ii)} all rows of $C_i$ are linearly independent; {(iii)} any vector in the span of the rows of $V$ {has} a weight {of} at least $t-k+1$ (because $V$ is {the} generator matrix {of a} $(t,k,t-k+1)$ {Maximum Distance Separable (MDS)} code); {(iv)} any vector in the span of the rows of $C_i$ {has a weight of} at most $t - \left| \mathcal{M}_i \right|$. 
Thus, as long as $t-k+1 > t - \left| \mathcal{M}_i \right|$, i.e.,  $\left| \mathcal{M}_i \right| \geq k$, {then} all rows of $V_i$ are linearly independent.
Let $N_i$ be the right null space of $V_i$, {then} $N_i$ will be of dimension $t - (k+ t - \left| \mathcal{M}_i \right|)$, i.e., $\left| \mathcal{M}_i \right| -k$ if $\left| \mathcal{M}_i \right| \geq k$. 
{For the case when} $\left| \mathcal{M}_i \right| < k$, $V_i$ will be a full rank matrix {of rank $t$} and $N_i$ will be {an empty space.} Thus,
\begin{align}
\text{dim} (N_i) & = \left [ \left| \mathcal{M}_i \right| -k \right ]^+, \forall i \in [m]. \label{eq:dim_N_i}
\end{align} 
{With the definition of the null spaces $N_i, i \in [m]$, above, we are now ready to present our achievable rate region for combination networks. }

\begin{prop}
	\label{thm:SecureAch_combi}
For the combination network $(t,m, \{\mathcal{M}_i, \forall i \in [m]\} )$, {assume that, for all $i \in [m]$, we can select $R_i$ vectors from $N_i$ such that the selected $\sum\limits_{i=1}^{m} R_i$ vectors are {linearly} independent. Then, the rate tuple $(R_1, R_2, \ldots, R_m)$ is securely achievable and the convex hull of all these feasible rate tuples is the achievable region.}
\end{prop}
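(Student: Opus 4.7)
The plan is to adopt a linear encoding at the source of the form $X = V^T K + H M$, where $X \in \mathbb{F}_q^t$ is the vector sent on the $t$ source-outgoing edges, $K \in \mathbb{F}_q^k$ is a uniform random key vector, and $M \in \mathbb{F}_q^R$ concatenates the $R = \sum_{i=1}^m R_i$ message packets; each intermediate node then simply routes its received symbol on its outgoing edges, so $D_i$ observes $\{X_j : j \in \mathcal{M}_i\}$. I will take $q \geq t$ so that the Vandermonde matrix $V$ has $t$ distinct evaluation points, and I will assume $k < t$ (else no secure communication is possible at positive rate).

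First I would fix the encoding matrix $H$ explicitly. Collect the selected null-space vectors into $N^\star = [\,n_{1,1}\mid \cdots \mid n_{m,R_m}\,] \in \mathbb{F}_q^{t\times R}$. The hypothesis says its $R$ columns are linearly independent, so $(N^\star)^T$ has rank $R$ and is surjective onto $\mathbb{F}_q^R$; I then choose $H \in \mathbb{F}_q^{t\times R}$ as any solution of the linear system $(N^\star)^T H = I_R$, which is solvable column-by-column. For decoding, destination $D_i$ applies the linear functional $n_{i,j}^T$ for each $j \in [R_i]$. Since $n_{i,j}\in N_i$ is supported on $\mathcal{M}_i$ (by the rows of $C_i$ present in the definition of $V_i$), this inner product only uses symbols $D_i$ actually receives, and
\begin{align*}
n_{i,j}^T X = (V n_{i,j})^T K + \bigl((N^\star)^T H\bigr)_{(i,j),\cdot}\, M = 0 + e_{(i,j)}^T M = W_i^{(j)},
\end{align*}
where the first term vanishes because $n_{i,j}$ lies in the right null space of $V$, and the second collapses to $W_i^{(j)}$ by the defining property of $H$. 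This is the key point that eliminates interference from the other destinations' packets, so $D_i$ reliably recovers all of $W_i$.

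For security I would invoke the standard one-time-pad argument enabled by the MDS property of $V$. For any wiretap set $\mathcal{E}_Z$ of source-outgoing edges with $|\mathcal{E}_Z| \leq k$, the submatrix $(V^T)_{\mathcal{E}_Z}$ has full row rank since any $k$ columns of $V$ are linearly independent, so $(V^T)_{\mathcal{E}_Z} K$ is uniformly distributed on $\mathbb{F}_q^{|\mathcal{E}_Z|}$ and independent of $M$; hence $X_{\mathcal{E}_Z} = (V^T)_{\mathcal{E}_Z}K + H_{\mathcal{E}_Z}M$ is uniform for every fixed $M$ and $I(M;X_{\mathcal{E}_Z})=0$. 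A wiretap on any $k$ network edges (not just source-outgoing ones) reduces to the above because downstream edges in a combination network only carry routed copies of source-outgoing symbols. Finally, the convex hull of feasible rate tuples is obtained by a standard time-sharing argument across block-length allocations. I do not anticipate a substantive conceptual obstacle: the only non-routine step is the existence of $H$, which is handed to us immediately by the rank hypothesis on $N^\star$.
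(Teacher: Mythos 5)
Your proposal is correct and follows essentially the same route as the paper: the encoding $X=V^TK+HM$ is exactly the paper's $[E\ \ V^T]$ construction with $H=E$, your condition $(N^\star)^TH=I_R$ is the transpose of the paper's $E^TD=\mathbf{I}_{\sum_i R_i}$ and exists for the same rank reason, and the decoding and MDS-based one-time-pad security arguments coincide with those in the paper and its appendix. Your write-up is somewhat more explicit on the security step (uniformity of $X_{\mathcal{E}_Z}$ given $M$) than the paper's brief "independent keys" remark, but there is no substantive difference.
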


\begin{proof}

{We next describe the different encoding/decoding operations of our scheme for a specific tuple $(R_1, R_2, \ldots, R_m)$ that satisfies the condition in Proposition~\ref{thm:SecureAch_combi}\footnote{We  assume that the $R_i$s with $i\in [m]$ are all integers. This assumption is without loss of generality since: (i) rational $R_i$s can be characterized by time-sharing the network with integer values of achievable rate tuples; (ii) rate tuples over real numbers can be approximated with rational rate tuples.}. Towards this end, we use the notation summarized in Table~\ref{table:Notation}.}

\begin{table*}
\caption{{Notation used for the secure transmission scheme over combination networks.}}
\label{table:Notation}
\begin{center}
\begin{tabular}{ |c||c| }
 \hline
 {\bf{Quantity}} & {\bf{Definition}}   \\
 \hline
\hline
{$e_i$} & {Edge from the source $S$ to the intermediate node $i \in [t]$} \\ \hline
{$X_{e_i}$} & {Symbols transmitted on edge $e_i, i \in [t]$} \\ \hline
{ $W_i$} & {Message packet for $D_i, i \in [m]$ such that $W_i := [W_i^{(1)}, W_i^{(2)}, \ldots, W_i^{(R_i)}]$} \\ \hline
{ $K_i, i \in [k]$} & {Random packet to ensure {secure} communication, with $K:= [K_1, K_2, \ldots, K_k]$} \\
 \hline
\end{tabular}
\end{center}
\vspace{-6mm}
\end{table*}

\begin{itemize}
\item {\bf{Encoding.}} { The source $S$ transmits the following symbols on its outgoing edges
	\begin{align}
\label{eq:enc}
         \begin{bmatrix}
X_{e_{1}} & X_{e_{2}} & \ldots & X_{e_{t}}  \end{bmatrix}^T
=
\begin{bmatrix} E &  V^{T} \end{bmatrix}
\begin{bmatrix} W_1 & W_2 & \ldots & W_m & K\end{bmatrix}^T,
	\end{align}
where $E$ is a matrix of $t$ rows and $\sum_{i=1}^m R_i$ columns, and $V$ is the Vandermonde matrix defined in~\eqref{eq:vandermonde}. 
Upon receiving a transmission from the source $S$, the intermediate node $i \in [t]$ simply routes this transmission on its outgoing edges, i.e., there is no coding operation at the intermediate nodes.}
	
\item {\bf{Security.}} Since any $k$ rows of $V^T$ are linearly independent, {then} any set of $k$ symbols transmitted on the first layer { -- and similarly on the second layer, since each intermediate node simply routes the received transmission on its outgoing edges --} are encoded with independent keys. {Thus, the adversary that eavesdrops $k$ edges will not be able to obtain any information about the messages, i.e.,} the scheme is secure.

\item {\bf{Decoding.}} 
{Each receiver $D_i$ will use the $R_i$ linearly independent vectors to {multiply} the vector
$         \begin{bmatrix}
X_{e_{1}} & X_{e_{2}} & \ldots & X_{e_{t}}  \end{bmatrix}^T$, and retrieve the $R_i$ private messages it requests.
Note that, because of the null spaces construction, each receiver only observes the symbols it actually has access to, and each receiver is able to cancel out the keys. }
{In Appendix~\ref{app:decoding}, we prove} that there {exists} a choice {of the matrix $E$ in~\eqref{eq:enc},} which {ensures that} all the destinations reliably decode their {intended messages.}	
\end{itemize}
\end{proof}

{
\subsection{On the Optimality of the Designed Secure Transmission Scheme}
We now conclude this section with a discussion on the rate performance achieved by the proposed secure transmission scheme. Towards this end, we start by noting that, for a combination network with parameters $(t,m, \{\mathcal{M}_i, \forall i \in [m]\} )$,} the min-cut capacity between the source $S$ and the set of destinations $D_{\mathcal{A}}:= \{D_i: i \in \mathcal{A}\}$ is given by $M_{\mathcal{A}} = \left| \bigcup\limits_{i \in \mathcal{A}} \mathcal{M}_i  \right|$. {By substituting this inside~\eqref{eq:SecureOB} in Theorem~\ref{thm:SecureOB}, we get the following outer bound,}
\begin{cor}
	\label{thm:SecureOB_combi}
	An outer bound on the secure capacity region for the multiple unicast traffic over the {combination network with parameters $(t,m, \{\mathcal{M}_i, \forall i \in [m]\} )$} is { given by} 
	\begin{align}
	R_{\mathcal{A}} \leq \left[\left| \bigcup\limits_{i \in \mathcal{A}} \mathcal{M}_i  \right| - k\right]^+,  \ \ \forall \mathcal{A} \subseteq {[m]} \enspace,	\label{eq:SecureOB_combi}	
	\end{align} 
	where $R_{\mathcal{A}} := \sum\limits_{i \in \mathcal{A}} R_i $.
\end{cor}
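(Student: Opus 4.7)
The plan is to derive Corollary~\ref{thm:SecureOB_combi} as a direct specialization of Theorem~\ref{thm:SecureOB} to the combination network topology. Since Theorem~\ref{thm:SecureOB} already gives $R_{\mathcal{A}} \leq [M_{\mathcal{A}} - k]^+$ for every $\mathcal{A} \subseteq [m]$ on arbitrary networks, the only nontrivial task is to identify $M_{\mathcal{A}}$ for the specific two-layer topology of Definition~\ref{defn:combin_networks} and show that it equals $\left|\bigcup_{i \in \mathcal{A}} \mathcal{M}_i\right|$. Plugging this identification into~\eqref{eq:SecureOB} immediately yields~\eqref{eq:SecureOB_combi}.

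To establish the min-cut identity, I would proceed by exhibiting one cut of the claimed size and then arguing it is minimum via max-flow/min-cut. First I would describe the candidate cut: let $\mathcal{T}_{\mathcal{A}} := \bigcup_{i \in \mathcal{A}} \mathcal{M}_i \subseteq [t]$ and consider the edge set $\mathcal{E}_{\mathcal{A}} := \{e_j : j \in \mathcal{T}_{\mathcal{A}}\}$ of edges from $S$ to the intermediate nodes indexed by $\mathcal{T}_{\mathcal{A}}$. Because every $D_i$ with $i \in \mathcal{A}$ has its incoming edges from a subset of intermediate nodes in $\mathcal{M}_i \subseteq \mathcal{T}_{\mathcal{A}}$, removing $\mathcal{E}_{\mathcal{A}}$ disconnects $S$ from $D_{\mathcal{A}}$; hence $M_{\mathcal{A}} \leq \left|\mathcal{T}_{\mathcal{A}}\right|$.

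For the reverse inequality, I would exhibit $\left|\mathcal{T}_{\mathcal{A}}\right|$ edge-disjoint paths from $S$ to $D_{\mathcal{A}}$. For each $j \in \mathcal{T}_{\mathcal{A}}$, pick any $i(j) \in \mathcal{A}$ with $j \in \mathcal{M}_{i(j)}$ and take the length-two path $S \to j \to D_{i(j)}$. These paths use distinct first-layer edges $e_j$ because the indices $j$ are distinct, and they use distinct second-layer edges because each destination has at most one incoming edge from each intermediate node and the intermediate nodes are distinct across paths. By Menger's theorem (equivalently, max-flow/min-cut), this yields $M_{\mathcal{A}} \geq \left|\mathcal{T}_{\mathcal{A}}\right|$, and combined with the previous bound gives equality.

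This step is entirely routine and I do not anticipate any real obstacle; the only thing worth being careful about is verifying that no smaller cut can be obtained by cutting a mixture of first- and second-layer edges, which is precisely what the edge-disjoint paths argument rules out. Once $M_{\mathcal{A}} = \left|\bigcup_{i \in \mathcal{A}} \mathcal{M}_i\right|$ is established, substitution into Theorem~\ref{thm:SecureOB} concludes the proof of Corollary~\ref{thm:SecureOB_combi}.
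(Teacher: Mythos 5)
Your proposal is correct and follows exactly the paper's route: the paper simply asserts that $M_{\mathcal{A}} = \left|\bigcup_{i \in \mathcal{A}} \mathcal{M}_i\right|$ for a combination network and substitutes this into Theorem~\ref{thm:SecureOB}. The only difference is that you actually verify the min-cut identity (via the first-layer cut and the edge-disjoint two-hop paths), a detail the paper leaves implicit; your verification is sound.
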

{We now prove that our designed secure transmission scheme is indeed optimal for the case of $m=2$ destinations. In other words, we show that the outer bound in Corollary~\ref{thm:SecureOB_combi} is achievable when $m=2$. Formally, we have}

\begin{prop}
	\label{thm:capacity}
	For the combination network {with parameters $(t,2, \{\mathcal{M}_i, \forall i \in [2]\} )$}, the secure capacity region is given by 
\begin{subequations}
\label{eq:Cap2CombNet}
	\begin{align}
	{R_1} & \leq \left[\left|\mathcal{M}_1  \right| - k\right]^+ , \\
	{ R_2} & \leq \left[\left|\mathcal{M}_2  \right| - k\right]^+ ,\\
	{ R_1 + R_2} & \leq \left[\left|\mathcal{M}_1 \cup \mathcal{M}_2  \right| - k\right]^+.
	\end{align}
\end{subequations}
\end{prop}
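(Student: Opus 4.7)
The converse direction is immediate: specializing Corollary~\ref{thm:SecureOB_combi} to $m=2$ gives the three inequalities in~\eqref{eq:Cap2CombNet}, since for a combination network we have $M_{\{i\}}=|\mathcal{M}_i|$ and $M_{\{1,2\}}=|\mathcal{M}_1\cup\mathcal{M}_2|$. So the real work is showing that every rate pair $(R_1,R_2)$ satisfying~\eqref{eq:Cap2CombNet} is securely achievable. My plan is to invoke Proposition~\ref{thm:SecureAch_combi}: it suffices to exhibit $R_1$ vectors in $N_1$ and $R_2$ vectors in $N_2$ such that the concatenated collection of $R_1+R_2$ vectors is linearly independent. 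A generic/greedy choice succeeds as soon as the numerical conditions $R_1\le\dim(N_1)$, $R_2\le\dim(N_2)$, and $R_1+R_2\le\dim(N_1+N_2)$ are met, so the proof reduces to verifying these three dimension inequalities for every $(R_1,R_2)$ in the outer bound.

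The first two are immediate from~\eqref{eq:dim_N_i}, which gives $\dim(N_i)=[|\mathcal{M}_i|-k]^+$, matching exactly the individual constraints in~\eqref{eq:Cap2CombNet}. For the sum-rate, I will apply the standard dimension formula
\begin{equation*}
\dim(N_1+N_2)=\dim(N_1)+\dim(N_2)-\dim(N_1\cap N_2)
\end{equation*}
and compute $\dim(N_1\cap N_2)$ explicitly. The key observation is that a vector lies in $N_1\cap N_2$ iff it lies in the null space of the matrix obtained by stacking $V$ with all rows $c_j$ for $j\in([t]\setminus\mathcal{M}_1)\cup([t]\setminus\mathcal{M}_2)=[t]\setminus(\mathcal{M}_1\cap\mathcal{M}_2)$. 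In other words, such vectors are supported on $\mathcal{M}_1\cap\mathcal{M}_2$ and their restriction to these coordinates must annihilate the corresponding submatrix of $V$. Because $V$ is Vandermonde, any $\min(k,|\mathcal{M}_1\cap\mathcal{M}_2|)$ of its columns are linearly independent, so exactly as in the derivation of~\eqref{eq:dim_N_i} I obtain
\begin{equation*}
\dim(N_1\cap N_2)=[|\mathcal{M}_1\cap\mathcal{M}_2|-k]^+.
\end{equation*}

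The remaining step is a short case analysis comparing $\dim(N_1+N_2)=[|\mathcal{M}_1|-k]^+ + [|\mathcal{M}_2|-k]^+ - [|\mathcal{M}_1\cap\mathcal{M}_2|-k]^+$ with the effective sum-rate bound $\min\bigl([|\mathcal{M}_1|-k]^+ + [|\mathcal{M}_2|-k]^+,\ [|\mathcal{M}_1\cup\mathcal{M}_2|-k]^+\bigr)$. When $|\mathcal{M}_1\cap\mathcal{M}_2|\ge k$, all three plus-operators are inactive and the identity $|\mathcal{M}_1|+|\mathcal{M}_2|-|\mathcal{M}_1\cap\mathcal{M}_2|=|\mathcal{M}_1\cup\mathcal{M}_2|$ yields $\dim(N_1+N_2)=|\mathcal{M}_1\cup\mathcal{M}_2|-k$, matching the sum-rate bound exactly. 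When $|\mathcal{M}_1\cap\mathcal{M}_2|<k$, we get $\dim(N_1+N_2)=[|\mathcal{M}_1|-k]^+ + [|\mathcal{M}_2|-k]^+$, which automatically upper-bounds the effective sum-rate constraint because the individual constraints already do so. Hence in either case the desired inequality holds.

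I expect the main obstacle, and the one point most likely to need a careful argument, to be the identification of $\dim(N_1\cap N_2)$: one has to be careful that duplicated ``coordinate rows'' in $C_1$ and $C_2$ do not alter the count, and one has to invoke the MDS/Vandermonde property on the \emph{submatrix} of $V$ indexed by $\mathcal{M}_1\cap\mathcal{M}_2$, rather than on $V$ itself. Once that dimension count is in hand, the rest of the proof is a routine case split, and the generic linear-independence claim used to apply Proposition~\ref{thm:SecureAch_combi} can be justified (over a sufficiently large field $\mathbb{F}_q$) by the standard polynomial non-vanishing argument already used in the proof of Theorem~\ref{thm:SecureLB2}.
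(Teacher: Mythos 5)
Your proposal is correct and follows essentially the same route as the paper's proof in Appendix~\ref{app:2CombNetAch}: converse from Corollary~\ref{thm:SecureOB_combi}, achievability by reducing to the three dimension inequalities for $N_1$, $N_2$, and $N_1+N_2$, computing $\dim(N_1\cap N_2)=[|\mathcal{M}_1\cap\mathcal{M}_2|-k]^+$ from the stacked matrix $[V;C_1;C_2]$ via the Vandermonde/MDS property, and finishing with the same case split. The only cosmetic difference is that the paper establishes the selectability of the $R_1+R_2$ independent vectors by achieving the corner points and taking a convex hull (Proposition~\ref{prop:pick_ind_vec_2}), whereas you assert a direct greedy selection for every admissible integer pair; both are valid.
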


\begin{proof}
{Clearly, from the result in Corollary~\ref{thm:SecureOB_combi}, the rate region in~\eqref{eq:Cap2CombNet} is an outer bound on the secure capacity region. Hence, we now need to prove that the rate region in~\eqref{eq:Cap2CombNet} is also achievable. This proof can be found in Appendix~\ref{app:2CombNetAch}.}
\end{proof}

{Although we could prove the optimality of our designed secure transmission scheme only for the case of $m=2$ destinations, we performed extensive numerical evaluations that indeed suggest that the scheme {could be} optimal even for the case of larger values of $m$. In particular, in our simulations we considered up to $m=8$ destinations and, for all the considered network configurations, we verified that the rate region achieved by our designed scheme coincides with the outer bound in~\eqref{eq:SecureOB_combi}.}

\section{Polynomial time scheme for networks with arbitrary topologies and arbitrary number of destinations}
\label{sec:two_phase_scheme}
\label{sec:two_phase}
We now propose the design of a secure transmission scheme for networks with arbitrary topologies and arbitrary number of destinations. This scheme consists of two phases, namely the key generation phase (in which secret keys are generated between the source and the $m$ destinations) and {the} message sending phase (in which the message packets are first {encoded} using the secret keys and then transmitted to the $m$ destinations). The corresponding achievable rate region is presented in Theorem~\ref{thm:SimpleGuarantee}.
\begin{thm}
\label{thm:SimpleGuarantee}
Let $(\hat{R}_1, \hat{R}_2, \ldots, \hat{R}_m)$ be an achievable rate $m$-tuple in the absence of the eavesdropper.
Then, the rate $m$-tuple $(R_1, R_2,  \ldots, R_m)$ with
\begin{align}
\label{eq:rate2phas}
R_i = \hat{R}_i \left (1 - \frac{k}{M} \right), \forall i \in {[m]}\enspace,
\end{align}
where $M$ is the minimum min-cut between the source and any destination, is securely achievable in the presence of an {adversary who eavesdrops} any $k$ edges of her choice.
\end{thm}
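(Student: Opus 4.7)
The plan is to establish the rate tuple $(R_1,\ldots,R_m)$ by a two-phase time-sharing construction. I would split the $n$ channel uses into a \emph{key-generation phase} of duration $\alpha n$ with $\alpha = k/M$ and a \emph{message-sending phase} of duration $(1-\alpha)n = (M-k)n/M$. In phase~1, since the min-cut from $S$ to every destination $D_i$ is at least $M$ and Eve wiretaps $k$ edges, I would invoke the secure multicast construction of Cai et al.~\cite{cai2002secure} to deliver a common pool of $M-k$ fresh key packets per time slot to all $m$ destinations; over $\alpha n$ slots this yields a shared pool of $\alpha n(M-k) = k(M-k)n/M$ key packets that are statistically independent of Eve's phase-1 observations.

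In phase~2, I would run the given (eavesdropper-free) scheme achieving $(\hat R_1,\ldots,\hat R_m)$, assuming linear coding at intermediate nodes without loss of generality, and piggyback on every edge a key-dependent mask. At each phase-2 time slot I would consume $k$ fresh keys $K^{(t)} = (K_1^{(t)},\ldots,K_k^{(t)})$ from the phase-1 pool, and on every source-outgoing edge $e$ add the linear form $\sum_{j=1}^{k}\beta_{e,j}\,K_j^{(t)}$ to the unencrypted symbol $Y_e$. By linearity at intermediate nodes, the symbol carried on an arbitrary edge $e$ then takes the form $Y_e + \tilde g_e(K^{(t)})$, where the mask $\tilde g_e$ is the linear form $\sum_{j=1}^k \tilde\beta_{e,j}\,K_j^{(t)}$ whose coefficients $\{\tilde\beta_{e,j}\}$ are determined by the source-side choice $\{\beta_{e,j}\}$ together with the global encoding coefficients of the underlying network code. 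I would then invoke a standard algebraic/Schwartz--Zippel argument over a sufficiently large finite field to choose $\{\beta_{e,j}\}$ so that for \emph{every} subset $E_Z \subseteq \mathcal{E}$ of size $k$, the $k \times k$ matrix $[\tilde\beta_{e,j}]_{e\in E_Z,\, j\in [k]}$ is nonsingular.

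Putting the two phases together, each $D_i$ already knows $K^{(t)}$ from phase~1, subtracts $\tilde g_e(K^{(t)})$ from every incoming edge in phase~2, and applies the original decoder to recover $\hat R_i(1-\alpha)n = \hat R_i(1-k/M)n$ message packets, matching the claimed rate in~\eqref{eq:rate2phas}. The key budget closes exactly because $\alpha(M-k) = k(1-\alpha)$ when $\alpha = k/M$, so the per-slot key production of phase~1 matches the $k$ keys per phase-2 time slot consumed by the encryption. For security, Cai--Yeung guarantees Eve's phase-1 view is independent of the keys, while the $k$-wise nonsingularity of $[\tilde\beta_{e,j}]$ makes $(\tilde g_e(K^{(t)}))_{e\in E_Z}$ a uniform bijective image of $K^{(t)}$ on $\mathbb{F}_q^k$ and independent of $W_{[m]}$, so the masked observations $\{Y_e + \tilde g_e(K^{(t)})\}_{e\in E_Z}$ are independent of the messages as well; combining the two phases yields $I(W_{[m]}; X_{E_Z}^n) = 0$.

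The main obstacle I anticipate is the joint design of the source-side mask coefficients $\{\beta_{e,j}\}$ together with the underlying linear network code so that the induced network-wide mask matrix $[\tilde\beta_{e,j}]_{e\in\mathcal{E},\, j \in [k]}$ has the strong MDS-like property that \emph{every} $k$-row submatrix is nonsingular; the determinant of each such submatrix is a polynomial in the $\{\beta_{e,j}\}$ that must be shown to be not identically zero before the Schwartz--Zippel step applies, and this verification is the technical crux. Once this is established, the rate and secrecy guarantees follow, and the split $\alpha = k/M$ is precisely the one that equates key production in phase~1 with key consumption in phase~2.
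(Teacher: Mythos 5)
Your two-phase architecture, the time split $\alpha = k/M$, and the use of Cai--Yeung secure multicast for key generation all coincide with the paper's proof, and the key-budget accounting is correct. The gap is in your phase-2 security argument. You require that the induced network-wide mask matrix $[\tilde\beta_{e,j}]_{e\in\mathcal{E},\,j\in[k]}$ have \emph{every} $k\times k$ submatrix nonsingular, and you correctly identify that one must check the corresponding determinants are not identically zero before applying Schwartz--Zippel. That check fails: whenever an intermediate node forwards (or scales) a symbol from an incoming edge $e_1$ onto an outgoing edge $e_2$ --- which is unavoidable at any relay of in-degree one, i.e., in essentially every network with paths of length greater than one --- the rows $\tilde\beta_{e_1}$ and $\tilde\beta_{e_2}$ are proportional, so every $k$-row submatrix containing both has determinant identically zero in the $\{\beta_{e,j}\}$. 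The property you need is therefore unattainable, and the Schwartz--Zippel step cannot go through as stated. Security is still salvageable, but under the weaker (and correct) condition that for every $k$-subset $E_{\mathcal{Z}}$, writing the wiretapped symbols as $A_{E_{\mathcal{Z}}}W + \tilde B_{E_{\mathcal{Z}}}K^{(t)}$, the column space of $A_{E_{\mathcal{Z}}}$ is contained in that of $\tilde B_{E_{\mathcal{Z}}}$ (so the observation is uniform on a coset independent of $W$); establishing this containment jointly with the network code is the part your argument leaves open.

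The paper avoids this difficulty entirely by exploiting that, for single-source multiple unicast without security, \emph{routing} is capacity-achieving (its Lemma on the unsecure capacity region), so in the message phase every edge carries a single uncoded message packet. It then expands the $Tk$ base keys via an MDS code into one pad per message packet, any $Tk$ of which are mutually independent, and one-time-pads each packet individually. An eavesdropper on $k$ edges then sees at most $Tk$ distinct padded packets with jointly independent pads, and secrecy is immediate --- no network-wide algebraic condition on mask propagation is needed. If you replace your ``linear coding plus propagated source-side masks'' step with this routing-plus-per-packet-MDS-pad step (or prove the column-space containment condition above), your argument closes.
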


\begin{proof}
Let {$M_{\{i\}}$} be the min-cut capacity between the source and the destination $D_i$ with $i \in {[m]}$.  
We define $M$ as the minimum among all these individual min-cut capacities, i.e.,  $M = \min_{i \in {[m]}} M_{\{i\}}$.
Let $(\hat{R}_1, \hat{R}_2, \ldots, \hat{R}_m) \in \mathbb{R}^m$ be the unsecure rate $m$-tuple achieved in the absence of the eavesdropper. 
We start by approximating this rate $m$-tuple with rational numbers; notice that this is always possible since the set of rationals $\mathbb{Q}$ is dense in $\mathbb{R}$.
Moreover, an information flow through the network (from the source $S$ to an artificial destination $D^\prime$ connected to all the destinations $D_i, i \in {[m]}$ -- see also {Appendix~\ref{app:single_source_unsecure}})
that achieves this rate $m$-tuple might involve fractional flows over the edges since the rate $m$-tuple may be fractional. To make the rate $m$-tuple {integer} and thereby also the flow over each edge, we multiply the capacity of each edge by a common factor $T$, {which is the least common multiple among the denominators of all the fractional flows.}
This implies that to achieve $(\hat{R}_1, \hat{R}_2, \ldots, \hat{R}_m)$, {then} $(T\hat{R}_1, T\hat{R}_2, \ldots, T\hat{R}_m)$ {has to be} achieved over $T$ instances of the network after which the flow over each edge is an integer.
In what follows, we describe our coding scheme and show that 
\begin{align}
\label{eq:RateTupleTS}
({R}_1, {R}_2, \ldots, {R}_m) \!=\!\! \left(1\!-\!\frac{k}{M} \right)\! (\hat{R}_1, \hat{R}_2, \ldots, \hat{R}_m)
\end{align}
is {securely} achievable.
This particular scheme consists of the following two phases.
\begin{itemize}
\item {\it Key generation.} This first phase -- in which secure keys are generated between the source and the destinations -- consists of $k$ subphases.
In each subphase, the source multicasts $M-k$ random packets securely to all destinations. This is possible thanks to the secure network coding result of~\cite{cai2002secure}, since the minimum min-cut capacity is $M$ and Eve has access to $k$ edges.  Thus, at the end of this phase, a total of $Tk(M-k)$ secure keys are generated, since in each phase we use the network $T$ times. 

\item {\it Message sending.} We choose $Tk$ packets out of the $Tk(M-k)$ securely shared (in the key generation phase) random packets. For each choice of $Tk$ packets, we convert the unsecure scheme achieving $(T\hat{R}_1, T\hat{R}_2, \ldots, T\hat{R}_m)$ to a secure scheme achieving the same rate $m$-tuple. Towards this end, we expand the $Tk$ shared packets into $\sum_{j=1}^m T\hat{R}_j$ packets using an MDS code matrix. With this, we have the same number of random packets as the message packets.  
We then {encode} the message packets with the random packets
and transmit them as it was done in the corresponding unsecure scheme.  
We repeat this process until we run out of the shared random packets, i.e., we repeat this process $M-k$ times by using $T$ instances of the network each time.
\end{itemize}

\noindent
{\it{Proof of security.}} 
We know that, in the absence of security considerations, a time-sharing based scheme is optimal (i.e., capacity achieving) for the multiple unicast traffic over networks with single source, i.e., network coding is not beneficial~\cite{KoetterTON2003} (see also {Appendix~\ref{app:single_source_unsecure}}).
Given that we are not using network coding operations and since each edge carries an integer information flow, then the eavesdropper will be able to wiretap at most $Tk$ different messages each {encoded} with an independent key.
Hence, the eavesdropper will not be able to obtain any information about any of the $m$ messages.

\noindent	 
{\it{Analysis of the achieved rate $m$-tuple.}}
The secure scheme described above requires a total of $M$ phases. 
In particular, in the first $k$ phases we generate the secure keys and in the remaining $M-k$ phases we securely transmit at rates of $(T\hat{R}_1, T\hat{R}_2, \ldots, T\hat{R}_m)$, over $T$ network instances.
Thus, the achieved secure message rate $(R_1, R_2, \ldots, R_m)$ is
\begin{align}
\label{eq:rateTS}
R_j  & = \frac{M-k}{M} \hat{R}_j =  \left( 1-\frac{k}{M}\right ) \hat{R}_j, \forall j \in {[m]}\enspace.
\end{align}
This concludes the proof of Theorem~\ref{thm:SimpleGuarantee}.
\end{proof}

We now compare our two-phase scheme with the optimal scheme for networks with two destinations. We also analyze and discuss potential reasons behind the sub-optimality of the two-phase scheme.
\subsection{Complexity of the two phase scheme and a comparison with the optimal scheme}
The capacity achieving scheme for $m=2$ destinations that we have proposed (see Section~\ref{sec:Ach2Dest}) first requires that we edge-partition the original graph $\mathcal{G}$ into three graphs (i.e., an edge in $\mathcal{G}$ appears in only one of these three graphs).
At this stage, this step requires an exhaustive search over all possible paths in the network, which requires an exponential number of operations in the number of nodes. It therefore follows that the scheme proposed in Section~\ref{sec:Ach2Dest}, even though it allows to characterize the secure capacity region, {could be} of exponential complexity.

Differently, the two-phase scheme proposed in {this section} runs in polynomial time {in the network size.} This is because all the operations that it requires (i.e., finding a $T$ such that over $T$ instances all flows are integer, multicasting the keys in the key generation phase, encrypting messages at the source {(i.e., encoding the messages with the keys)} and routing the encrypted messages) can be performed in polynomial time in the number of edges {and nodes in the network.}

The two-phase scheme described in {this section} is sub-optimal and does not achieve the outer bound in~\eqref{eq:SecureOB}. However, this scheme offers a guarantee on the {secure rate region} that can always be achieved as a function of any rate $m$-tuple that is achievable in the absence of the eavesdropper Eve (see~\eqref{eq:rate2phas} in Theorem~\ref{thm:SimpleGuarantee}). {In what follows, we seek to identify some of the reasons for which this scheme is sub-optimal.}

\begin{figure}[t]
\centering
\subfigure[]{
\includegraphics[width=0.37\textwidth]{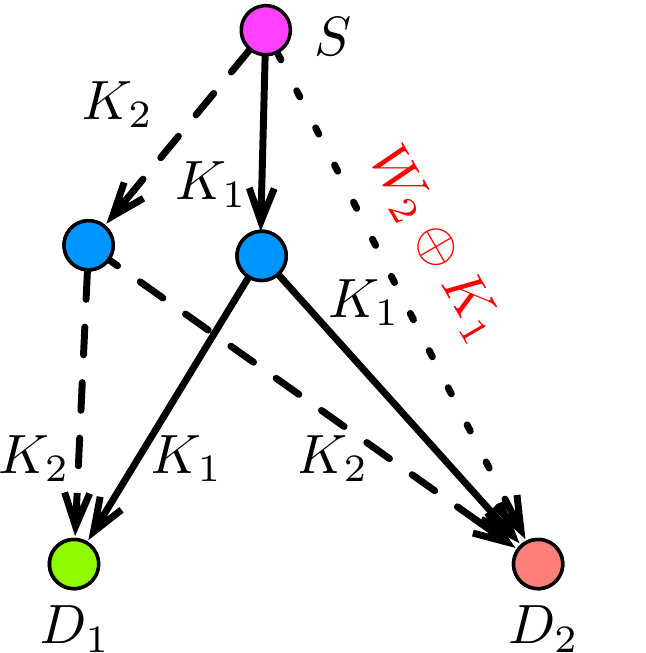}
\label{fig:cs2}
}
\hfill
\subfigure[Rate region for the network in Fig.~\ref{fig:cs2}.]{
\includegraphics[width=0.58\textwidth]{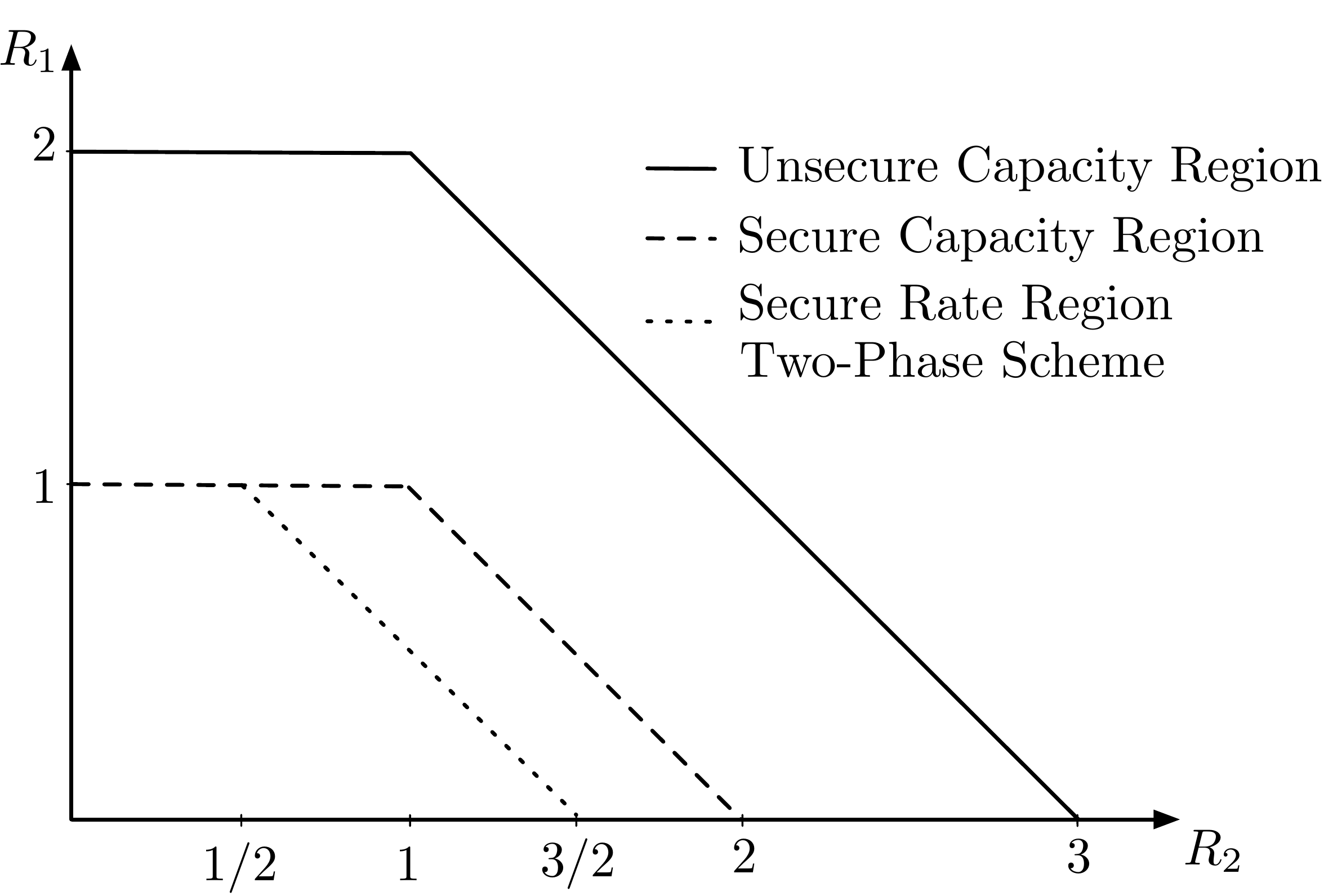}%
\label{fig:rate}
}
\caption{
Network example for which the two-phase scheme is not optimal.}
\end{figure}

One reason behind the sub-optimality is that in the key generation phase some edges in the network are not used.
Indeed, when we multicast the $M$ random packets to generate the keys (where $M$ is the minimum of the min-cut capacities and $k$ is the number of edges wiretapped by the eavesdropper) -- out of which $M-k$ linear combinations are secure keys -- it might have been possible to use the other edges (i.e., those through which the random packets do not flow) to transmit some encrypted message packets.
For instance, consider the network example in Fig.~\ref{fig:cs2}, where the eavesdropper wiretaps $k=1$ edge of her choice.
Our two-phase scheme would multicast $M = \min\limits_{i \in {[2]}} M_{\{i\}} =2$ random packets $K_1$ and $K_2$ ($K_1$ is transmitted over the solid edges and $K_2$ over the dashed edges in Fig.~\ref{fig:cs2}), out of which $M-k=1$ is securely received by $D_1$ and $D_2$. Hence, the combination $K_1 \oplus K_2$ can be used to securely transmit the message packets. However, we see that in the first phase the dotted edge (i.e., the one that connects $S$ directly to $D_2$) is not used. This brings to a reduction in the achievable rate region since this edge could have been used to securely transmit a message packet to $D_2$ by using $W_2 \oplus K_1$ as shown in Fig.~\ref{fig:cs2}.
Given this, we believe that one reason that makes the two-phase scheme suboptimal is the fact that it does not fully leverage all the network resources.
In Fig.~\ref{fig:rate}, we plotted different rate regions for the network in Fig.~\ref{fig:cs2}, which has min-cut capacities $M_{\{1\}}=2$, $M_{\{2\}}=3$ and $M_{\{1,2\}}=3$.
In particular, the region contained in the solid curve is the unsecure capacity region (given by~\eqref{eq:UnsecureOB} in Lemma~\ref{thm:UnCapacity}), the region inside the dashed curve is the secure capacity region (given by~\eqref{eq:Ach2} in Theorem~\ref{thm:SecureLB2}) and the region contained inside the dotted {curve} is the secure rate region that can be achieved by the two-phase scheme (given by~\eqref{eq:rate2phas} in Theorem~\ref{thm:SimpleGuarantee}).

\begin{figure}[t]
	\centering
	\subfigure[]{
		\includegraphics[width=0.33\textwidth]{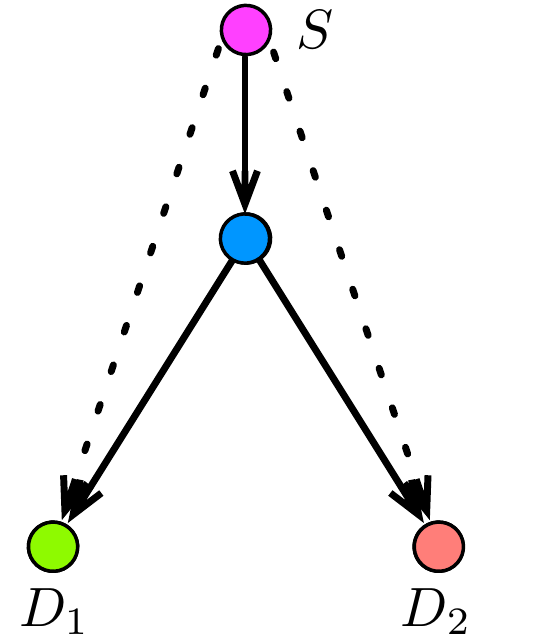}
		\label{fig:cs3}
	}
	\hfill
	\subfigure[Rate region for the network in Fig.~\ref{fig:cs3}.]{
		\includegraphics[width=0.62\textwidth]{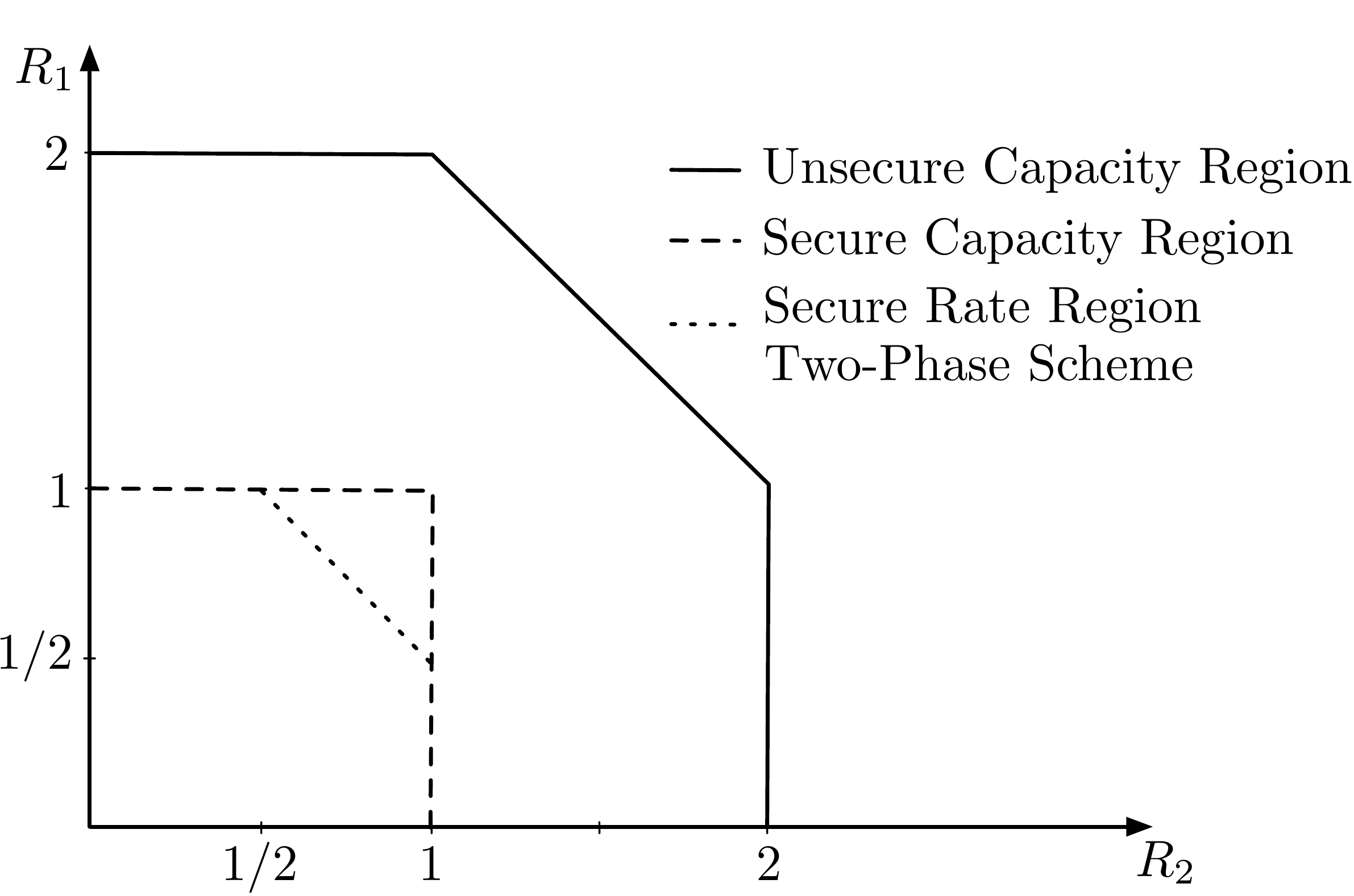}%
		\label{fig:rate3}
	}
	\caption{Network example for which the two-phase scheme is not optimal.}
	
\end{figure}

{Another} reason behind the sub-optimality is that not all {the} edges are suitable for multicasting keys. 
In particular, {tree-like structures} are more suitable to multicasting keys rather than disjoint paths to different destinations. 
To see this, we consider the network shown in Fig.~\ref{fig:cs3} {with} two destinations. 
This network has one tree structure {(represented by the solid edges in Fig.~\ref{fig:cs3})} and one set of disjoint paths to both the destinations {(represented by the two dashed edges in Fig.~\ref{fig:cs3})}. 
In { the two-phase} scheme we use both {the tree structure and the two disjoint paths} to transmit keys as well as messages, while in the optimal scheme we use the tree structure to transmit keys and the disjoint {paths} to transmit messages. In Fig.~\ref{fig:rate3}, we plotted different rate regions for the network in Fig.~\ref{fig:cs3}, which has min-cut capacities $M_{\{1\}}=2$, $M_{\{2\}}=2$ and $M_{\{1,2\}}=3$. The region contained in the solid curve is the unsecure capacity region (given by~\eqref{eq:UnsecureOB} in Lemma~\ref{thm:UnCapacity}), the region inside the dashed curve is the secure capacity region (given by~\eqref{eq:Ach2} in Theorem~\ref{thm:SecureLB2}) and the region contained inside the dotted {curve} is the secure rate region that can be achieved by the two-phase scheme (given by~\eqref{eq:rate2phas} in Theorem~\ref{thm:SimpleGuarantee}).

From Fig.~\ref{fig:rate} and Fig~\ref{fig:rate3}, we indeed observe that the rate region achieved by the two-phase scheme is contained inside the secure capacity region.
We can have a {more complete} comparison for networks with $2$ destinations. 
{For instance, consider networks for which the} min-cut capacities to both the destinations are {the} same. {Then,} depending on the number of edges {that} the adversary is eavesdropping, the {capacity} region and the region achieved using the {two-phase} scheme are shown in Fig.~\ref{fig:comp_opt_two_phase_m_2}. These {figures} are drawn {by} using Theorem~\ref{thm:SecureLB2} {(regions inside the dashed curve)} and Theorem~\ref{thm:SimpleGuarantee} {(regions inside the dotted curve)}. Unsecure capacity results {(regions inside the solid curve)} are obtained from Lemma~\ref{thm:UnCapacity}.

\begin{figure}[t]
	\centering
	\subfigure[Case 1: When $M_{\{1,2\}}^\star {\leq} k$.]{
		\includegraphics[width=0.47\textwidth]{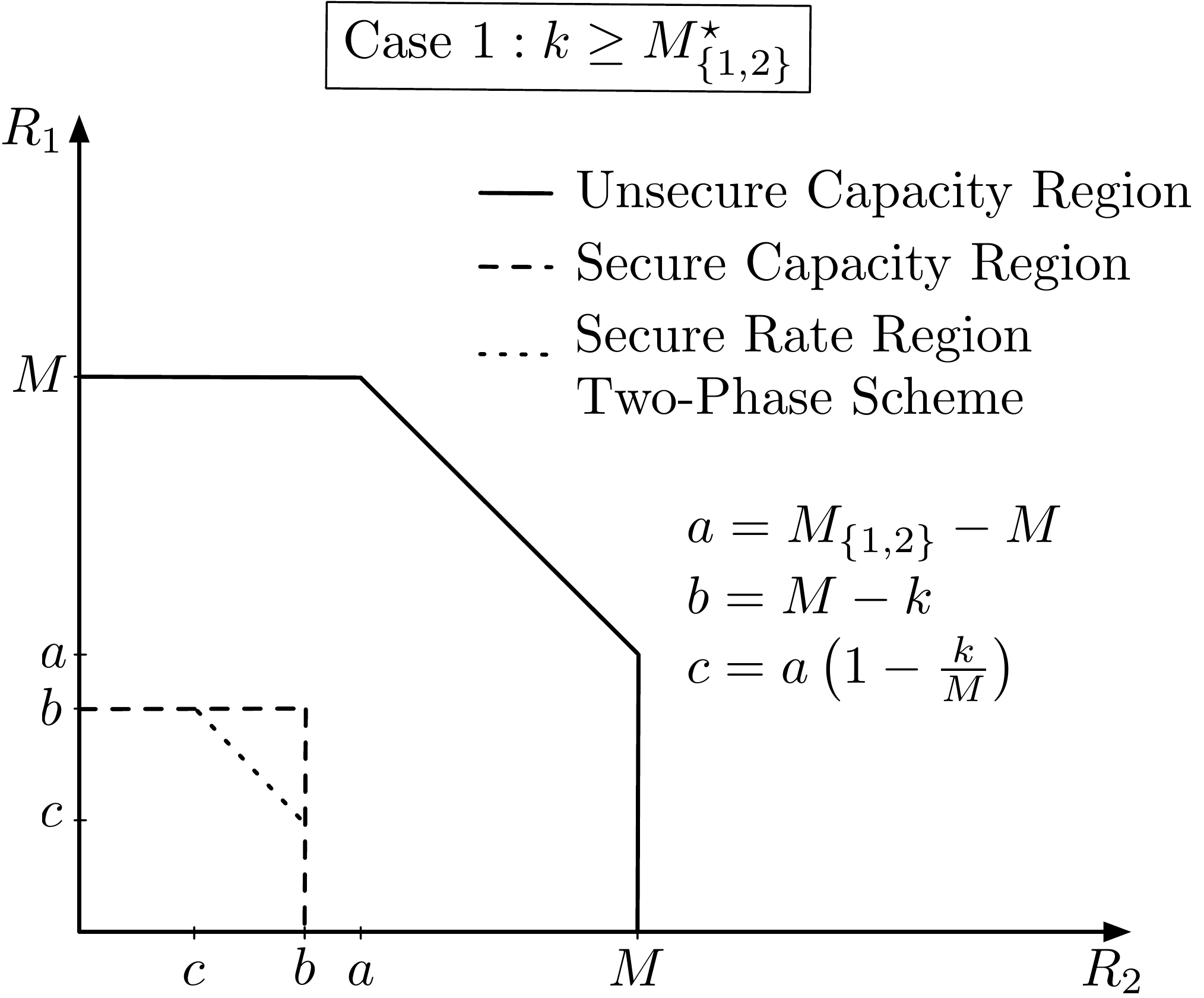}
		\label{fig:comp_opt_two_phase_m_2_case1}
	}
	\hfill
	\subfigure[Case 2: When $M_{\{1,2\}}^\star > k $.]{
		\includegraphics[width=0.47\textwidth]{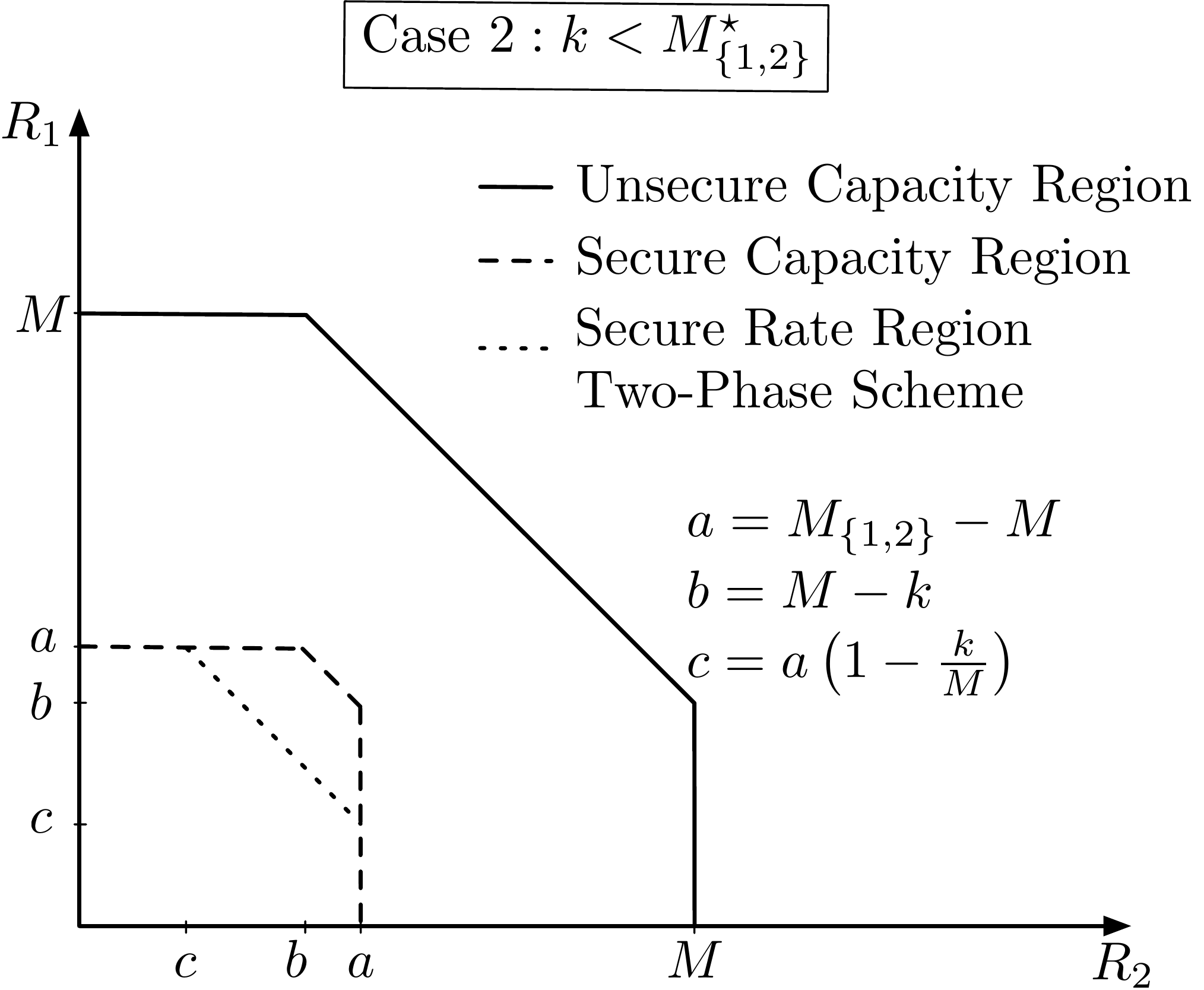}%
		\label{fig:comp_opt_two_phase_m_2_case2}
	}
	\caption{Comparison of secure capacity with the rate region achieved {by the two-phase} scheme for networks with two destinations.}
	\label{fig:comp_opt_two_phase_m_2}		
\end{figure}

\section{{Comparisons, Non-reversibility and Additional Instances of Multiple Unicast Traffic}}
\label{sec:discussions}
{In this section, we conclude the paper with some comparisons and analysis of other instances of multiple unicast traffic. 
In particular, in Section~\ref{sec:CompUnsRate}, we compare the secure rate region for $m=2$ destinations in Theorem~\ref{thm:SecureLB2} with the capacity region when the adversary is absent. The goal of this analysis is to quantify the rate loss that is incurred to guarantee security. 
In Section~\ref{sec:non_reversibility}, we prove that the secure capacity region for $m=2$ destinations is non-reversible. Specifically, we show that, if we switch the role of the source and destinations and we reverse the directions of the edges, then the new secure capacity region differs from the original one.}
This is a surprising result since it implies that -- different from the unsecure case where {non-reversible} networks must necessary have non-linear network coding solutions~\cite{Koetter04networkcodes,riis2007reversible} -- under security constraints even networks with linear network coding solutions can be {non-reversible} if the traffic is multiple unicast.
{Finally, in Section~\ref{sec:OtherInst}, we consider other instances of multiple unicast traffic, such as networks with erasure links and noiseless networks with two sources and two destinations. For some specific network topologies, we derive the secure capacity region. This analysis sheds light on how coding should be performed across different unicast sessions.}

\subsection{Comparison with the Unsecure Capacity Region}
\label{sec:CompUnsRate}
The {unsecure capacity region} (i.e., capacity in the absence of the eavesdropper) for {a multiple unicast network with} a single source and multiple destinations described in Section~\ref{sec:setting}, is well known~\cite[Theorem 9]{KoetterTON2003} and given by the following {lemma.}
For completeness we report the proof of the following lemma in {Appendix~\ref{app:single_source_unsecure}.}

\begin{lem}
	\label{thm:UnCapacity}
	The unsecure capacity region for the multiple unicast traffic over networks with single source node and $m$ destination nodes is given by
	\begin{align}
	\label{eq:UnsecureOB}
	R_{\mathcal{A}} \leq M_{\mathcal{A}},  \ \ \forall \mathcal{A} \subseteq {[m]}\enspace,
	\end{align} 
	where $R_{\mathcal{A}} := \sum\limits_{i \in \mathcal{A}} R_i $ and $M_{\mathcal{A}}$ is the min-cut capacity between the source $S$ and the set of destinations $D_{\mathcal{A}}:= \{D_i: i \in \mathcal{A}\}$.
\end{lem}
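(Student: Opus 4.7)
The plan is to prove the lemma in two parts: a cut-set converse and a routing-based achievability.

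For the converse, I would mimic the derivation of Theorem~\ref{thm:SecureOB} but drop every term that involves the eavesdropper. Specifically, for any $\mathcal{A} \subseteq [m]$, let $\mathcal{E}_{\mathcal{A}}$ denote a min-cut between $S$ and $D_{\mathcal{A}}$. Using the reliable decoding requirement $H(W_{\mathcal{A}} \mid X^n_{\mathcal{I}(D_{\mathcal{A}})}) = 0$ and the fact that $X^n_{\mathcal{I}(D_{\mathcal{A}})}$ is a deterministic function of $X^n_{\mathcal{E}_{\mathcal{A}}}$, I obtain the chain $n R_{\mathcal{A}} = H(W_{\mathcal{A}}) = I(W_{\mathcal{A}}; X^n_{\mathcal{E}_{\mathcal{A}}}) \leq H(X^n_{\mathcal{E}_{\mathcal{A}}}) \leq n M_{\mathcal{A}}$, where the last inequality uses the unit-capacity assumption on each edge together with $|\mathcal{E}_{\mathcal{A}}| = M_{\mathcal{A}}$. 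Dividing by $n$ yields the claimed outer bound.

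For the achievability, the key insight is that, for single-source multiple-unicast traffic, no coding across sessions is required: any rate tuple satisfying the cut-set constraints can be realized by routing alone. Concretely, I would first show that the system $R_{\mathcal{A}} \leq M_{\mathcal{A}}$, $\forall \mathcal{A} \subseteq [m]$, is feasible for a single-source multicommodity flow rooted at $S$ with demand $R_i$ at $D_i$. This is done by exploiting the submodularity of the set function $\mathcal{A} \mapsto M_{\mathcal{A}}$ (coming from the submodularity of cuts in a single-source graph), which implies that the cut-set region is a contra-polymatroid; a base of this contra-polymatroid can always be decomposed into a sum of $s$-$D_i$ flows via Edmonds' polymatroid intersection / matroid-union type argument. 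Equivalently, one can invoke LP duality for single-source multicommodity flows with a common source. Once such a fractional flow is in place, I would scale the network by the least common multiple $T$ of the denominators of all flow values so that every edge carries integer flow, then implement the flow over $T$ parallel network uses by routing each unit along its designated path, with time-sharing across the $T$ copies recovering the target rate tuple.

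The main obstacle is the feasibility step in the achievability, namely verifying that the cut-set inequalities alone guarantee the existence of a feasible single-source multicommodity flow. This is a classical but nontrivial polymatroidal result; rather than re-derive it, my plan is to invoke the characterization in~\cite{KoetterTON2003} (essentially, that for single-source networks the set of achievable rate vectors under routing coincides with the cut-set region). Once this is granted, the construction of encoding functions is immediate: each edge either carries one symbol of some $W_i$ according to its role in the flow decomposition, or is left unused, and the decoders simply read off the symbols they receive.
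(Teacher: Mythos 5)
Your converse is the same as the paper's (the paper simply sets $k=0$ in the proof of Theorem~\ref{thm:SecureOB}; your direct chain $nR_{\mathcal{A}} = I(W_{\mathcal{A}};X^n_{\mathcal{E}_{\mathcal{A}}}) \leq nM_{\mathcal{A}}$ is the same calculation), so the only substantive difference is in the achievability. There, the step you flag as the ``main obstacle'' --- that the cut-set inequalities alone imply a feasible single-source flow delivering $R_i$ to $D_i$ --- is precisely the content of the lemma, so deferring it to the characterization in~\cite{KoetterTON2003} is essentially circular (the paper attributes the lemma to~\cite[Theorem~9]{KoetterTON2003} and then supplies a self-contained proof ``for completeness''). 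The paper closes this step without any polymatroid machinery: it adjoins a super-destination $D^\prime$ with an edge of capacity $R_i$ from each $D_i$, and shows directly that every $S$--$D^\prime$ cut has capacity at least $\sum_{i=1}^m R_i$, because a cut that uses the new edges indexed by $\mathcal{J}$ must separate $S$ from $D_{[m]\setminus\mathcal{J}}$ in the original graph, giving capacity at least $\sum_{i\in\mathcal{J}}R_i + M_{[m]\setminus\mathcal{J}} \geq \sum_{i=1}^m R_i$ by the assumed constraint $R_{[m]\setminus\mathcal{J}} \leq M_{[m]\setminus\mathcal{J}}$; since the cut of all new edges has capacity exactly $\sum_i R_i$, max-flow min-cut yields a single-commodity flow that necessarily saturates each edge $D_i\to D^\prime$ and hence routes $R_i$ through $D_i$. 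Your polymatroid/submodularity route is valid in principle (the region $\{R\geq 0: R_{\mathcal{A}}\leq M_{\mathcal{A}}\}$ is indeed a polymatroid, not a contra-polymatroid as you wrote, and its bases are realizable by flows), but the standard proof of that realizability is exactly the super-sink argument above, so you gain generality at the cost of invoking heavier results; the paper's route is more elementary and avoids the near-circular citation. Your final scaling by $T$ to make the flow integral and routing over $T$ parallel uses matches what the paper does implicitly (and explicitly in the proof of Theorem~\ref{thm:SimpleGuarantee}).
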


For networks with $m=2$ destinations, we compare the secure capacity region in Theorem~\ref{thm:SecureLB2} and the unsecure capacity region in Lemma~\ref{thm:UnCapacity}.
By comparing~\eqref{eq:Ach2} with~\eqref{eq:UnsecureOB} (evaluated for the case $m=2$), we observe that in the presence of the eavesdropper we lose at most a rate $k$ in each dimension compared to the unsecure case. 
We notice that the same result holds for the case of $m=1$ destination and for the case of multicasting the same message to all destinations~\cite{cai2002secure} (i.e., we have a rate loss of $k$ with respect to the min-cut capacity $M$). However, here it is more surprising since the messages to the $m=2$ destinations (and potentially the keys) are different.

\subsection{Non-Reversibility {of the Secure Capacity Region}}
 \label{sec:non_reversibility} 
In order to characterize the unsecure capacity region in~\eqref{eq:UnsecureOB}, network coding is not necessary and routing is sufficient (see also {Appendix~\ref{app:single_source_unsecure}}). Thus, from the result in~\cite{riis2007reversible}, it directly follows that the capacity result in~\eqref{eq:UnsecureOB} is reversible. 
In particular, let $\mathcal{G}$ be a network with single source and $m$ destinations with a certain capacity region (that can be computed from Lemma~\ref{thm:UnCapacity}). 
Then, the reverse graph $\mathcal{G}^\prime$ is constructed by switching the role of the source and destinations and by reversing the directions of the edges.
Thus, $\mathcal{G}^\prime$ will have $m$ sources and one single destination.
The result in~\cite{riis2007reversible} ensures that $\mathcal{G}$ and $\mathcal{G}^\prime$ will have the same capacity region, i.e., the result in Lemma~\ref{thm:UnCapacity} characterizes also the unsecure capacity region for the multiple unicast traffic over networks with $m$ sources and single destination.

We now focus on the secure case.	
In Section~\ref{sec:two_destinations}, we have characterized the secure capacity region for a multiple unicast network with single source and $m=2$ destinations.
In particular, Theorem~\ref{thm:SecureLB2} implies that the secure capacity region does not depend on the specific topology of the network and it can be fully characterized by the min-cut capacities $M_{\{1\}}, M_{\{2\}}$ and $M_{\{1,2\}}$ and by the number $k$ of edges eavesdropped by Eve.
We now show that this result is {non-reversible,} i.e., the secure capacity region of the reverse network is not the same as the one of the original network.
Moreover, we also show that the secure capacity region of networks with $2$ sources and single destination cannot anymore be characterized by only the min-cut capacities, i.e., it depends on the specific network topology.

Consider the three networks in Fig.~\ref{fig:3NetEx} and assume $k=1$, i.e., Eve wiretaps one edge of her choice. For the network in Fig.~\ref{fig:cs} we have min-cut capacities $\left (M_{\{1\}}, M_{\{2\}}, M_{\{1,2\}} \right ) = (1,2,2)$ and hence from Theorem~\ref{thm:SecureLB2} it follows that the secure capacity for this network is given by $(R_1,R_2) = (0,1)$. This point can be achieved by simply using the scheme shown in Fig.~\ref{fig:cs}, where $K$ represents the key and $W_2$ the message for $D_2$.
Now, consider the network in Fig.~\ref{fig:cd} that is obtained from Fig.~\ref{fig:cs} by switching the role of the source and destinations and by reversing the directions of the edges.
For this network, which has the same min-cut capacities as the network in Fig.~\ref{fig:cs}, 
the rate pair $(R_1,R_2) = (1,0)$ is securely achievable 
using the scheme shown in Fig.~\ref{fig:cd} where $W_1$ is the message of $S_1$ and $K_1$ and $K_2$ are the keys generated by $S_1$ and $S_2$, respectively.
The rate pair $(R_1,R_2) = (1,0)$, which is securely achieved by the network in Fig.~\ref{fig:cd}, cannot be securely achieved by the network in Fig.~\ref{fig:cs}. 
This result implies that a secure rate pair that is feasible for one network might not be feasible for the reverse network, i.e., the secure capacity regions can be different and hence cannot be derived from one another. 
The achievability of the pair $(R_1,R_2) = (1,0)$ in Fig.~\ref{fig:cd} also shows that the outer bound in~\eqref{eq:SecureOB} does not hold for networks with single destination and multiple sources, in which case it is possible to achieve rates outside this region. 

\begin{figure}
\centering
\subfigure[$(R_1,R_2)=(0,1)$ is capacity.]{
\includegraphics[width=0.343\textwidth]{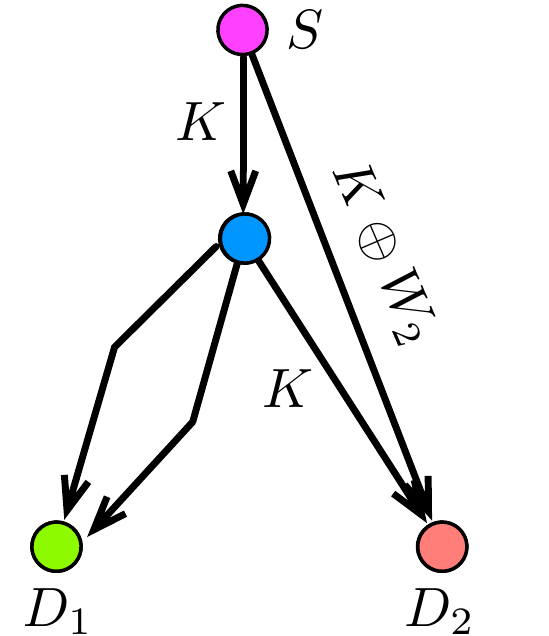}
\label{fig:cs}
}
\hfill
\subfigure[$(R_1,R_2)=(1,0)$ is achievable.]{
\includegraphics[width=0.3\textwidth]{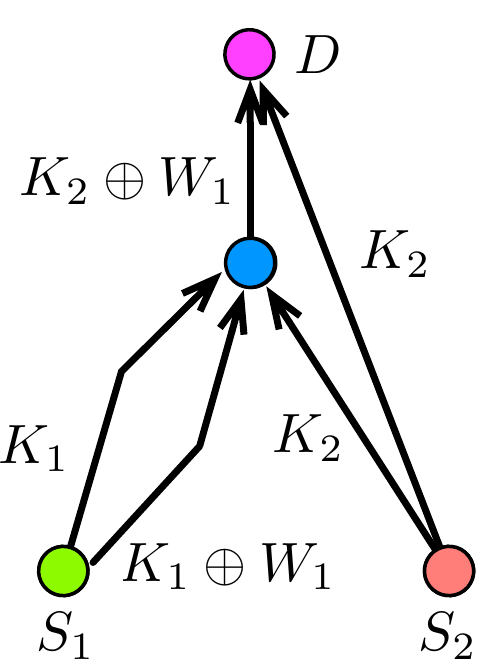}%
\label{fig:cd}
}
\hfill
\subfigure[$(\!R_1,R_2\!)\!=\!(1,0)$ is not achievable.]{
\includegraphics[width=0.28\textwidth]{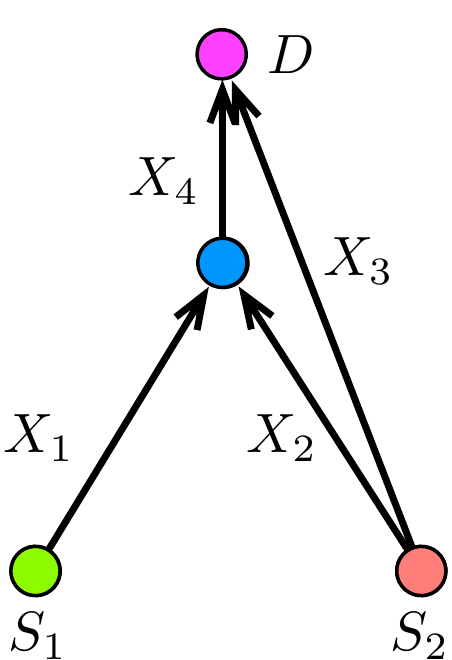}%
\label{fig:cd2}
}
\caption{Network examples {for non-reversibility.}}
\label{fig:3NetEx}
\end{figure}

Consider now the network in Fig.~\ref{fig:cd2}. This network has the same min-cut capacities {as the network in Fig.~\ref{fig:cd}, i.e., $\left (M_{\{1\}}, M_{\{2\}}, M_{\{1,2\}} \right )$ = $(1,2,2)$.} We now show that the rate pair $(R_1,R_2) = (1,0)$, which can be securely achieved in the network in Fig.~\ref{fig:cd}, cannot be securely achieved in the network in Fig.~\ref{fig:cd2}. Let $X_i, i \in [1:4],$ be the transmitted symbols as shown in Fig.~\ref{fig:cd2}. With this, we have 
\begin{align*}
R_1  = H(W_1) 
 & \stackrel{{\rm{(a)}}}{=} H(W_1) - H(W_1|X_3,X_4)  \stackrel{{\rm{(b)}}}{\leq} H(W_1) - H(W_1 | X_1, X_2, X_3) \\ 
& = I(W_1; X_1, X_2, X_3)  = I(W_1; X_1) + I(W_1; X_2, X_3 | X_1) \\
& \stackrel{{\rm{(c)}}}{=} I(W_1; X_2, X_3 | X_1) = H(X_2, X_3 | X_1) - H(X_2, X_3 | W_1, X_1)  \\
& \stackrel{{\rm{(d)}}}{=} H(X_2, X_3) - H(X_2, X_3)   = 0\enspace,
\end{align*}
where: 
(i)  the equality in $\rm{(a)}$ follows because of the decodability constraint;
(ii) the inequality in $\rm{(b)}$ follows because of the `conditioning reduces the entropy' principle and since $X_4$ is a deterministic function of $(X_1, X_2)$;
(iii) the equality in $\rm{(c)}$ follows because of the perfect secrecy requirement;
(iv) finally, the equality in $\rm{(d)}$ follows since $(X_2, X_3)$ is independent of $(W_1, X_1)$.
This result shows that the rate pair $(R_1,R_2) = (1,0)$ is not securely achievable in the network in Fig.~\ref{fig:cd2}.
This implies that, for a network with single destination and multiple sources, we cannot characterize the secure capacity region based only on the min-cut capacities $\left (M_{\{1\}}, M_{\{2\}}, M_{\{1,2\}} \right )$, i.e., the result would depend on the specific network topology.

\subsection{{Analysis} on Other Instances of Multiple Unicast Traffic}
\label{sec:OtherInst}

In this paper, we {have} focused on noiseless networks {with unit edge capacities having a single source and multiple destinations.} 
{We now consider other instances of multiple unicast traffic, and provide secure capacity results for some specific configurations. The main goal of this analysis is to highlight the critical role of coding across different unicast sessions in order to ensure a secure communication, even for scenarios where it is not required in the absence of an adversary.}

\begin{figure}[t]
	\centering
	\includegraphics[width=0.8\columnwidth]{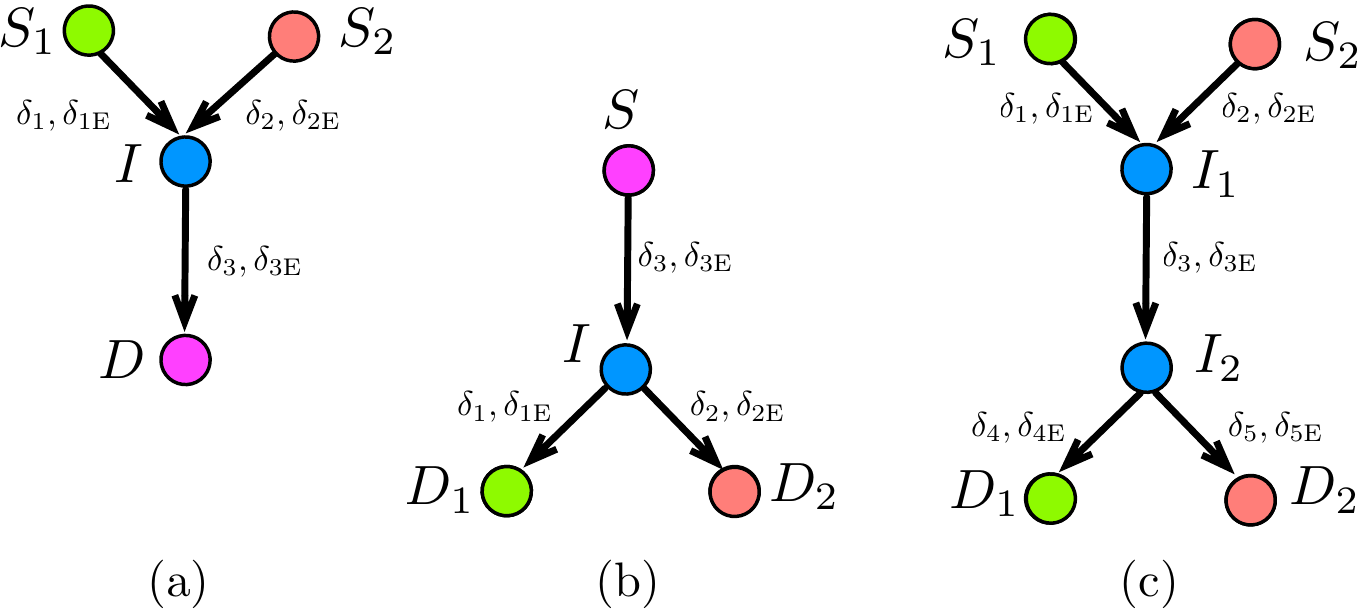}
	\caption{(a) The Y-network, (b) The RY-network and (c) The X-network.}
	\label{fig:networks}
	\vspace{-5mm}
\end{figure}

\subsubsection{Erasure Networks}
{In~\cite{agarwal2016secure},} we considered multiple unicast traffic over {the three} erasure networks {shown in Fig.~\ref{fig:networks}, referred to as} the Y-network, the Reverse Y (RY)-network and the X-network. 
In the Y-network, two sources wish to communicate two independent messages to a common destination. In the RY-network, one source aims to communicate two independent messages to two different {destinations.} Finally, in the X-network two sources seek to communicate two independent messages to two different {destinations.}
In all three cases, only {the} source(s) can generate randomness; while in Fig.~\ref{fig:networks}(a) and Fig.~\ref{fig:networks}(c) sources can generate randomness at an infinite rate, in Fig.~\ref{fig:networks}(b) the source can generate randomness only at a finite rate $D_0$. 
These assumption are motivated by the fact that one can construct the X-network by {combining} the Y-Network and the RY-network. 
{Each} edge $e$ on these three networks models an erasure channel where the legitimate receiver has an erasure probability of $\delta_e$ and the adversary has an erasure probability of $\delta_{e\text{E}}$. Public feedback, which in~\cite{Maurer} was shown to increase the {secure} capacity, is used, i.e., each of the legitimate nodes involved in the communication sends an acknowledgment after each transmission; this is received by {all nodes in the network as well as by the eavesdropper (who can wiretap any {$k=1$} channel of the network). 
{In~\cite{agarwal2016secure}, we derived the secure capacity region for the three networks in Fig.~\ref{fig:networks}, as the solution of some feasibility programs that, for completeness we report in Appendix~\ref{app:OtherNetInstEras}. In particular, our capacity-achieving secure transmission schemes consist of two phases.}
In the first phase, a link by link key is shared, and in the second phase encrypted message packets {(i.e., encoded with the keys that were generated in the first phase)} are transmitted. The key sharing mechanism involves a mix of communicating random symbols using an MDS code and an {Automated Repeat ReQuest (ARQ)} based scheme. 
{We start by noting that, in order to characterize the capacity region of the three networks in Fig.~\ref{fig:networks} in the absence of the adversary, coding is not needed and a simple time-sharing approach among the two unicast sessions is capacity-achieving. However, under security considerations, coding becomes of fundamental importance.
With the primal goal to show the benefits of coding across the two unicast sessions,} we here compare the {secure} capacity performance of our schemes {(see Propositions~\ref{thm:capY}-\ref{thm:capX} in Appendix~\ref{app:OtherNetInstEras})} with respect to two naive strategies, {i.e.,} the {\it path sharing} and the {\it link sharing}.
In the {\it path sharing} the whole communication resources, at each time instant, are used only by {one session;} 
for example, for the X-network
{we have a time-sharing between {${S}_1$-${I}_1$-${I}_2$-${D}_1$} and {${S}_2$-${I}_1$-${I}_2$-${D}_2$.}
Differently, in the {\it link sharing} strategy only the shared communication link is time-shared among the two unicast sessions; for example, in the X-network only the {${I}_1$-${I}_2$} link is time-shared.
For both these strategies we do not allow the source node that does not participate to act as a source of randomness, e.g., for the X-network the random packets sent by {${S}_1$} cannot be used to {encode} the message packets of {${S}_2$.}
Fig.~\ref{fig:NumEv} shows the performance (in terms of secure capacity region) of these two time-sharing strategies and of our schemes {(see Propositions~\ref{thm:capY}-\ref{thm:capX} in Appendix~\ref{app:OtherNetInstEras}).}
From Fig.~\ref{fig:NumEv},  we observe that our {schemes (solid line)} achieve higher rates compared to the two time-sharing strategies. 
{In general, these gains follow since:}
(i) in the Y-network {${S}_1$} and {${S}_2$} transmit random packets to {$I$} and {these can be mixed} to create a key on the shared link; 
(ii) in the RY-network the same set of  random packets can be used to generate keys for both the {$I$-$D_1$ and $I$-$D_2$} links. 
These factors, {which involve coding operations across the two sessions,} decrease the number of random packets required to be sent from the source(s) and implies that more message packets can be carried.
Finally, (iii) in the X-network we have the benefits of both the Y- and RY-network.

\begin{figure}[t]
	\centering
	\subfigure[Y-network: 
	$\left( \delta_1,\delta_{1\text{E}}\right)\!=\!\left(0.2,0.05 \right)$, 
	$\left( \delta_2,\delta_{2\text{E}}\right)\!=\!\left(0.3,0.05 \right)$,
	$\left( \delta_3,\delta_{3\text{E}}\right)\!=\!\left(0.25,0.05 \right)$.]{
		\includegraphics[width=0.47\linewidth]{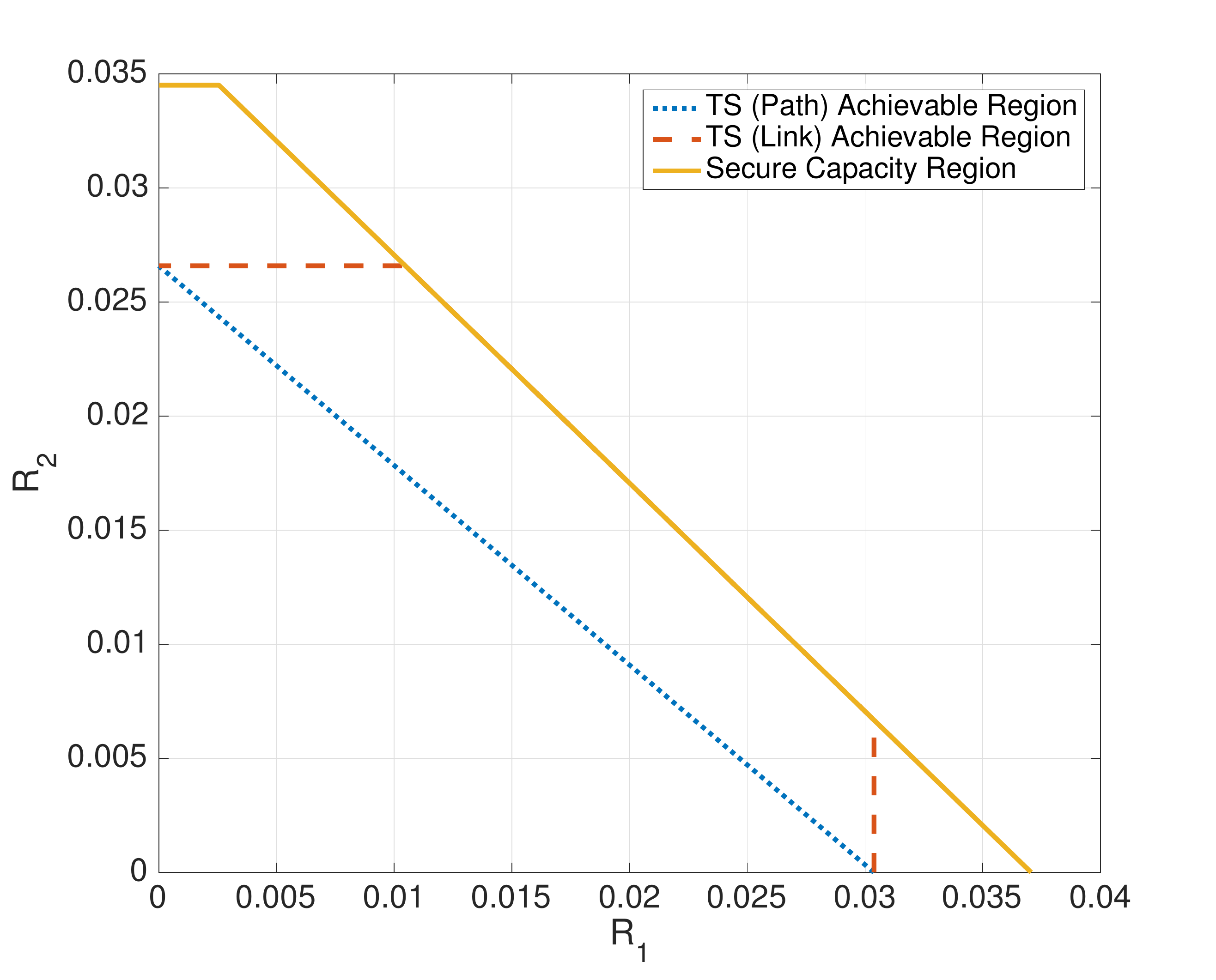}
		\label{fig:YnSim}
	}
	\hfill
	\subfigure[RY-network: 
	$\left( \delta_1,\delta_{1\text{E}}\right)\!=\!\left(0.1,0.1 \right)$, 
	$\left( \delta_2,\delta_{2\text{E}}\right)\!=\!\left(0.2,0.05 \right)$,
	$\left( \delta_3,\delta_{3\text{E}}\right)\!=\!\left(0.3,0.15 \right)$, $D_0\!=\!0.4$.]{
		\includegraphics[width=0.47\linewidth]{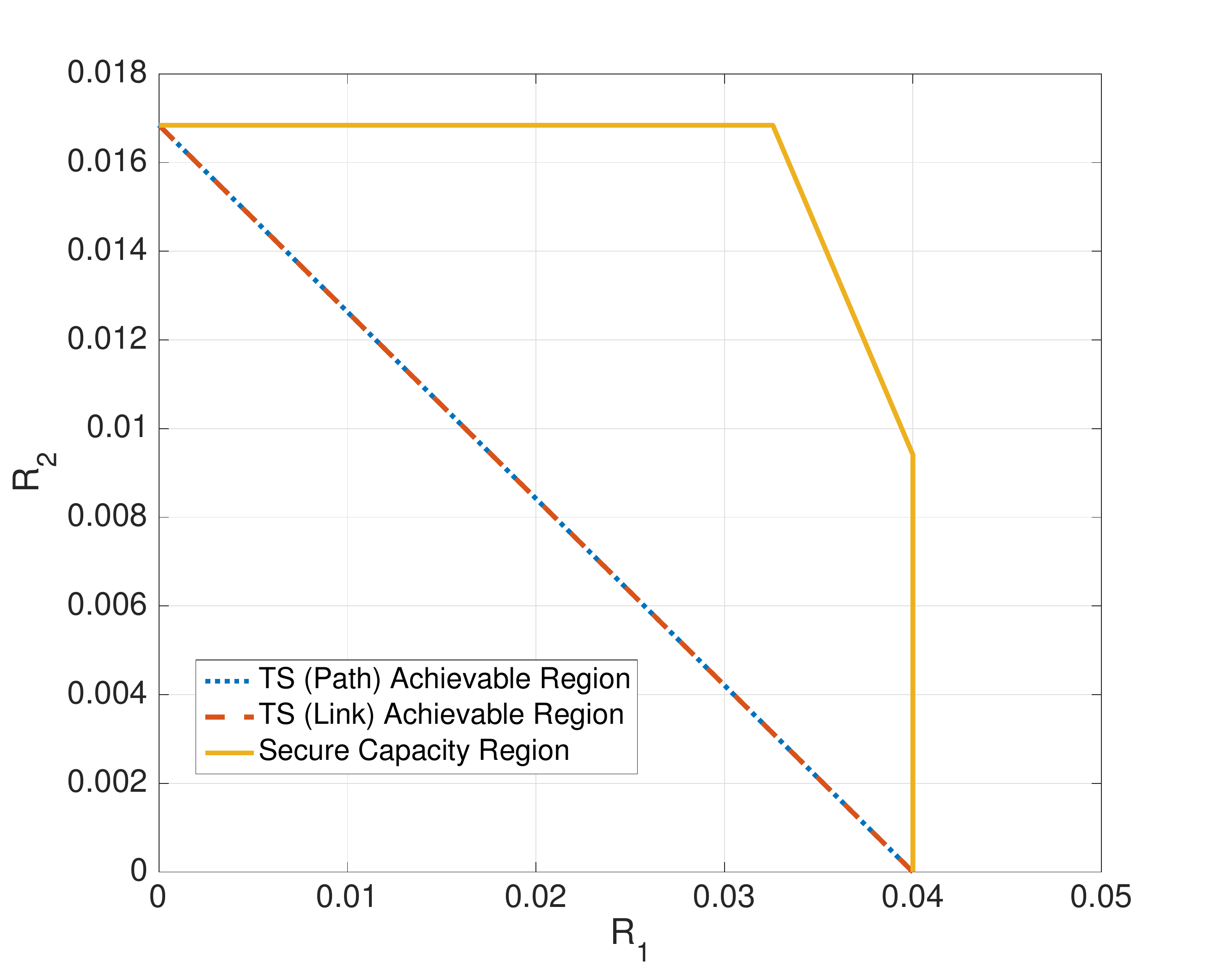}%
		\label{fig:RYnSim}
	}
	\hfill
	\subfigure[X-network:
	$\left( \delta_1,\delta_{1\text{E}}\right)\!=\!\left(0.1,0.1 \right)$, 
	$\left( \delta_2,\delta_{2\text{E}}\right)\!=\!\left(0.2,0.05 \right)$,
	$\left( \delta_3,\delta_{3\text{E}}\right)\!=\!\left(0.3,0.15 \right)$, 
	$\left( \delta_4,\delta_{4\text{E}}\right)\!=\!\left(0.4,0.25 \right)$,
	$\left( \delta_5,\delta_{5\text{E}}\right)\!=\!\left(0.5,0.2 \right)$.
	]{
		\includegraphics[width=0.47\linewidth]{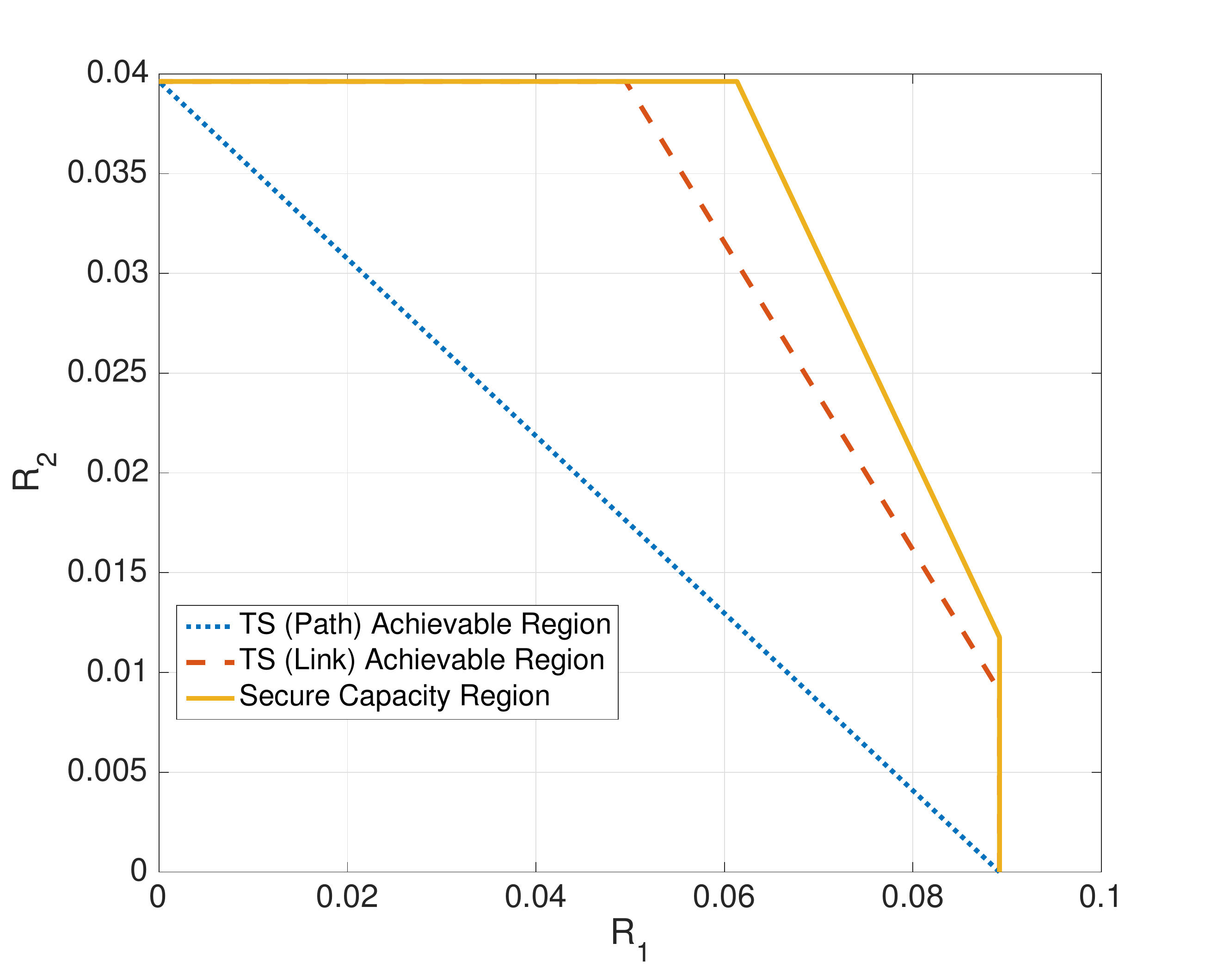}
		\label{fig:XnSim}
	}
	\vspace{-3mm}
	\caption{Numerical evaluations for the three networks in Fig.~\ref{fig:networks}.}
	\label{fig:NumEv}
	\vspace{-4mm}
\end{figure}

\begin{figure}[ht]
	\centering
	\subfigure[Butterfly network~1.]{
		\includegraphics[width=0.33\linewidth]{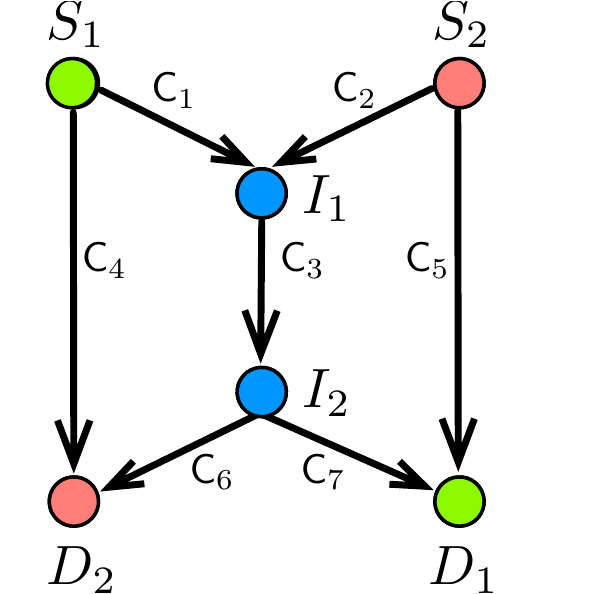}
		\label{fig:butt1}
	}
	\hspace{2cm}
	\subfigure[Butterfly network with {single source.}]{
		\includegraphics[width=0.33\linewidth]{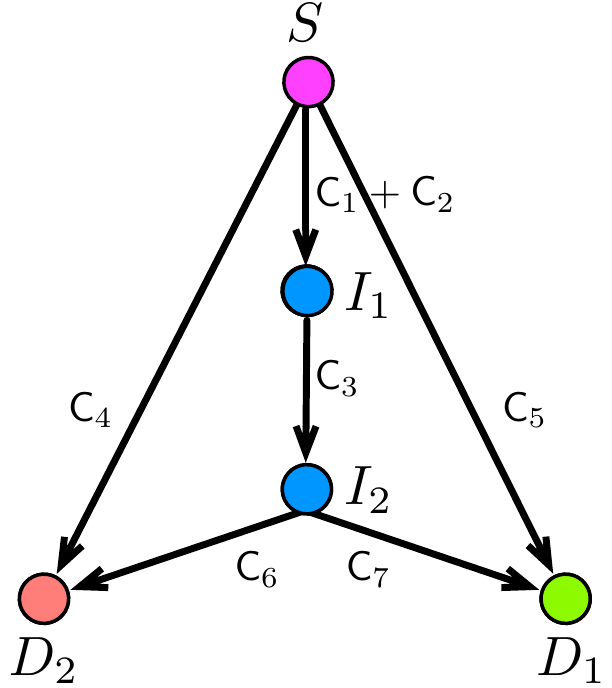}%
		\label{fig:CS}
	}
\\
	\subfigure[Butterfly network with {single destination.}]{
		\includegraphics[width=0.25\linewidth]{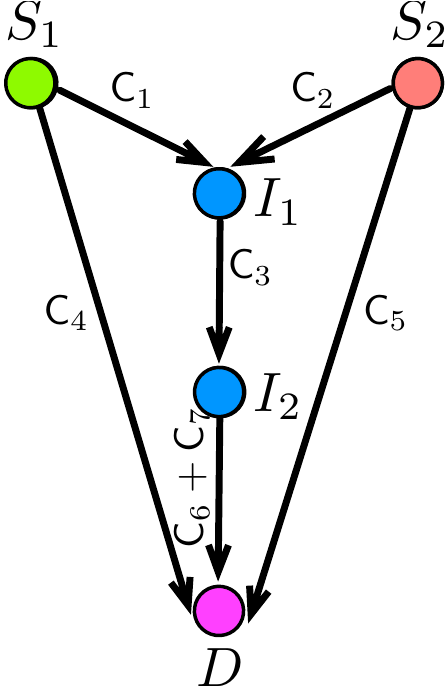}
		\label{fig:CD}
	}
	\hspace{3cm}
	\subfigure[Butterfly network~2.]{
		\includegraphics[width=0.35\linewidth]{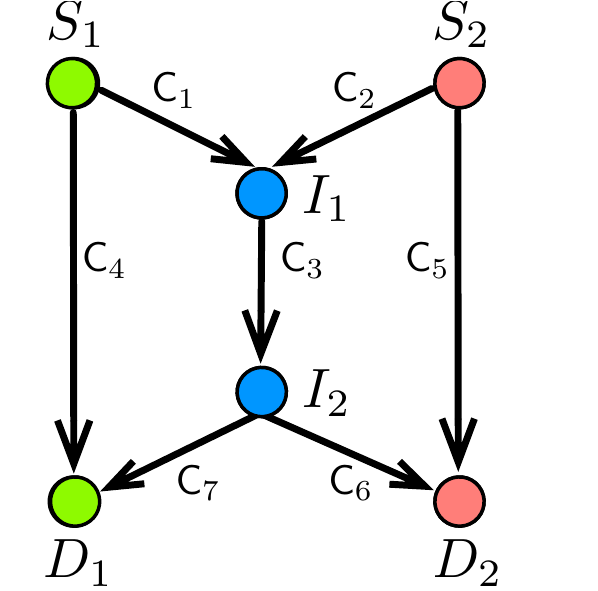}
		\label{fig:butt2}
	}
	\caption{Networks derived from the butterfly network.}
	\label{fig:butterflies}
\end{figure}

\subsubsection{Arbitrary Edge Capacities}
In~\cite{agarwal2016netcod}, we {considered the} four {noiseless} networks shown in Fig.~\ref{fig:butterflies}, {which} are derived from the celebrated butterfly network. 
{In particular, the edges can have arbitrary capacity, which represents a main difference with respect to the model analyzed in the previous sections. 
Over these networks, we assume that the passive eavesdropper can wiretap any $k=1$ edge of her choice, and that only the source(s) can generate randomness at infinite rate.
In~\cite{agarwal2016netcod}, we derived both the secure and the unsecure capacity regions for these four networks, which for completeness we report in Appendix~\ref{app:OtherNetArbrCap}. In particular, our capacity-achieving schemes show the critical importance of coding across the two unicast sessions, as opposite to a {naive time-sharing} approach.}
{For instance, consider the butterfly network with single source in Fig.~\ref{fig:CS}, with all edge capacities equal to $\mathsf{C}$ and $R_1=R_2=R$.}
{Over this network,} if we simply timeshare across the two sessions, then we get $R \leq \frac{\mathsf{C}}{2}$, i.e., {the edge between $I_1$ and $I_2$} is the bottleneck. 
However, by using the coding operation (i.e., the same secret key can be used by the two sessions), we obtain $R \leq \mathsf{C}$, i.e., we can transmit at a double rate {(see~\eqref{eq:CScapSec} in Appendix~\ref{app:OtherNetArbrCap}).}
{When} all edge capacities {are} equal to $\mathsf{C}$ and $R_1=R_2=R$, we can also draw the following additional conclusions.
\begin{itemize}
	
	\item 
	Secure communication can incur significant rate losses with respect to the unsecure {case.} 
	{The rate losses can be quantified as:}
	(i) $100\%$ for butterfly network 1 (secure communication is not possible);
	(ii) more than $33\%$ for the case of {single source} ($R \leq \frac{3}{2} \mathsf{C}$ without security and $R \leq \mathsf{C}$ with security);
	(iii) more than $66\%$ for the case of {single destination} and butterfly network 2 ($R \leq \frac{3}{2} \mathsf{C}$ without security and $R \leq \frac{\mathsf{C}}{2}$ with security).
	\item For unsecure communication, {the butterfly networks with single source and single destination achieve} a rate gain of $50\%$ over butterfly network~1. 
	This gain is due to an increase in the min-cut values, which are tight and evaluate to $R \leq \mathsf{C}$ in the butterfly network 1 and to $R \leq \frac{3}{2} \mathsf{C}$ in the cases of {single source and single destination.}
	
	\item 
	Under security considerations, the case of {single source} (i.e., $R \leq \mathsf{C}$) brings higher throughput gains than the {single destination case} ($R \leq \frac{\mathsf{C}}{2}$).
	{This is because in the former case, coding opportunities arise, i.e., the same key can be used by the two sessions.}
	Moreover, {thanks to the multipath diversity,} {these two} cases enable secure communication, which was not possible over the butterfly network 1. 
	{Regarding the butterfly network 2 the case of {single source} brings secure rate advantages.}
	Actually, for both the butterfly network 2 and the case of {single source,} the min-cut values evaluate to $R \leq \frac{3}{2} \mathsf{C}$, but the secure rate achieved in the former case, i.e., $R \leq \frac{\mathsf{C}}{2}$, is half the one achieved in the latter case, i.e., $R \leq \mathsf{C}$.
\end{itemize}

\begin{appendices}
	\section{Proof of Lemma~\ref{lemma:sep}}
\label{app:seperable_graph_m_2}
For completeness, we here report
the proof of the result in Lemma~\ref{lemma:sep}, which is a direct consequence of~\cite[Theorem 1]{ramamoorthy2009single}. In particular, this result shows that any graph $\mathcal{G}$ with single source and $m=2$ destinations is separable.
The graph $\mathcal{G}$ has min-cut capacity $M_{\{i\}}, i \in {[2],}$ towards destination $D_i$ and min-cut capacity $M_{\{1,2\}}$ towards $\{D_1,D_2\}$, from which $M_{\{i\}}^\star, i \in {[2],}$ and $M_{\{1,2\}}^\star$ can be computed by using the expressions in~\eqref{eq:Mstar}.
We represent these min-cut capacities by the triple 
\begin{align*}
\left(M_{\{1\}},M_{\{2\}},M_{\{1,2\}} \right) \!=\! \left(M_{\{1\}}^\star \!+\! M_{\{1,2\}}^\star,M_{\{2\}}^\star \!+\! M_{\{1,2\}}^\star,M_{\{1\}}^\star \!+\!  M_{\{2\}}^\star + M_{\{1,2\}}^\star \right)\enspace,
\end{align*}
where the equality follows by using~\eqref{eq:Mstar}.
We now prove Lemma~\ref{lemma:sep} in two steps.
We first show that the graph $\mathcal{G}$ can be separated into two graphs: $\mathcal{G}_a$ with min-cut capacities $\left(M_{\{1\}}^\star,0,M_{\{1\}}^\star \right)$ and $\mathcal{G}_b$ with min-cut capacities 
\begin{align*}
\left(M_{\{1,2\}}^\star,M_{\{2\}}^\star + M_{\{1,2\}}^\star,M_{\{2\}}^\star + M_{\{1,2\}}^\star\right)\enspace.
\end{align*}
Then, by applying the same principle we further separate the graph $\mathcal{G}_b$ into two graphs: $\mathcal{G}_c$ with min-cut capacities $\left(0,M_{\{2\}}^\star,M_{\{2\}}^\star\right)$ and $\mathcal{G}_d$ with min-cut capacities $\left(M_{\{1,2\}}^\star,M_{\{1,2\}}^\star,M_{\{1,2\}}^\star\right)$.
This would complete the proof of Lemma~\ref{lemma:sep}.

We now prove that we can separate the graph $\mathcal{G}$ into the two graphs $\mathcal{G}_a$ and $\mathcal{G}_b$.
Towards this end, from the original graph $\mathcal{G}$, we create a new directed acyclic graph $\mathcal{G}^\prime$ where a new node $D^\prime$ is connected to $D_1$ through an edge of capacity $M_{\{1\}}^\star + M_{\{1,2\}}^\star$ and to $D_2$ through an edge of capacity $M_{\{2\}}^\star$.
By following similar steps as in the proof of the direct part (achievabiliy) of Lemma~\ref{thm:UnCapacity} (see {Appendix~\ref{app:single_source_unsecure}}),
it is not difficult to see that in $\mathcal{G}^\prime$ the min-cut capacity between $S$ and $D^\prime$ is $M_{\{1\}}^\star + M_{\{1,2\}}^\star + M_{\{2\}}^\star = M_{\{1,2\}}$, where the equality follows from~\eqref{eq:M12star}.
From the max-flow min-cut theorem, we can find $M_{\{1,2\}}$ edge-disjoint paths from $S$ to $D^\prime$; we color the edges in these paths {\em{green}}. 
We can also find $M_{\{2\}}$ edge-disjoint paths from $S$ to $D_2$;  we color the edges in these paths {\em{red}}.
Notice that, at the end of this process, some of the edges can have both {\em{green}} and {\em{red}} colors. 
We also highlight that:
\begin{itemize}
\item Out of the $M_{\{1,2\}}$ {\em green} paths from $S$ to $D^\prime$, $M_{\{1\}}^\star + M_{\{1,2\}}^\star $ paths flow through $D_1$ and $M_{\{2\}}^\star$ flow through $D_2$.
\item If a path is exclusively {\em{green}}, it flows through $D_1$ since otherwise, in addition to the $M_{\{2\}}$ {\em{red}} edge-disjoint paths from $S$ to $D_2$, we would have also this path and thereby violate the min-cut capacity constraint to $D_2$.
\end{itemize}

The second observation above implies that, if there are $M_{\{1\}}^\star$ exclusively {\em{green}} paths, then we can separate the graph $\mathcal{G}^\prime$ into two graphs: $\mathcal{G}^\prime_a$ that contains all these $M_{\{1\}}^\star$ exclusively {\em{green}} paths and $\mathcal{G}^\prime_b$ that contains all the edges of $\mathcal{G}^\prime$ that are not in $\mathcal{G}^\prime_a$.
Given this, by simply removing the node $D^\prime$ and its incoming edges, we get $\mathcal{G}_a$ and $\mathcal{G}_b$.
We now show how we can obtain these $M_{\{1\}}^\star$ exclusively {\em{green}} paths.
Towards this end, we denote with $\mathcal{P}$ the set of all {\em green} paths from $S$ to $D^\prime$ (notice that these paths might have also some {\em red} edges).
Then, until there exists a path $p \in \mathcal{P}$ such that either it is not exclusively {\em{green}} or it does not start with an edge that is both {\em{red}} and {\em{green}}, we apply the two following steps:
\begin{enumerate}
\item Let $e$ be the first edge in $p$, which is both {\em green} and {\em red} and denote with $g$ the {\em red} path from $S$ to $D_2$ that contains the edge $e$. 
Recall that, since the $M_{\{2\}}$ {\em red} paths are edge-disjoint, there is only one {\em red} path $g$ passing through $e$.
We split the path $p$ into two parts as $p_1 - e - p_2$ and similarly we split the path $g$ into $g_1 - e - g_2$.
\item We add the {\em red} color to $p_1$ (that before was all {\em green}) and we remove the {\em red} color from $g_1$, i.e., now each edge in $g_1$ is either {\em green} or it does not have any color. 
Note that in this way we replace the {\em red} path $g_1 - e - g_2$ with $p_1 - e -g_2$ from source $S$ to $D_2$, which is also disjoint from the rest of $M_{\{2\}} -1 $ {\em red} paths.
\end{enumerate} 
	
We note that this process will stop only when all the $M_{\{1,2\}}$ paths from $S$ to $D^\prime$ are either exclusively green or start with an edge that is both {\em red} and {\em green}. 
We also note that, since we did not remove any edge, clearly we also did not change any min-cut capacity during this process. 
Since initially there were $M_{\{2\}}$ {\em red} edges coming out of $S$ and, in the process of the algorithm, we replaced one {\em red} by another {\em red}, then the number of {\em red} edges outgoing from $S$ still remains the same. 
Thus, among the $M_{\{1,2\}}$ paths from $S$ to $D^\prime$, only at most $M_{\{2\}}$ paths start with an edge that is both {\em green} and {\em red} and therefore, by using~\eqref{eq:Mstar}, at least $M_{\{1\}}^\star$ are exclusively {\em green} paths.
This proves that the original graph $\mathcal{G}$ can be separated into the two graphs $\mathcal{G}_a$ and $\mathcal{G}_b$. By using similar arguments, one can then show that the graph $\mathcal{G}_b$ can be separated into the two graphs $\mathcal{G}_c$ and $\mathcal{G}_d$. This concludes the proof of Lemma~\ref{lemma:sep}.
 \section{Proof of existence of $E$ in~\eqref{eq:enc} for a reliable decoding}
\label{app:decoding}

The destination $D_i$ will {receive symbols} $\{X_{e_j}, \ j \in \mathcal{M}_i \}$. Further, {since $N_i$ is the right null space of $V_i$ in~\eqref{eq:vandermonde_i}, then} any vector {that belongs to} $N_i$ will have {non-zero components only in the positions indexed by $\{i \in \mathcal{M}_i\}$.}
{It therefore follows that} an inner product between a vector in $N_i$ and ${[X_{e_1}, \ldots, X_{e_m}]}$ is a valid decoding scheme. 
	
	Let $d_i^{(1)}, d_i^{(2)}, \ldots, d_i^{(R_i)}$ be the $R_i$ {column} vectors, {each of length $t$, selected} from $N_i$. {We assume that the selected} 
$\{d_i^{j}, \ i \in [m] , j \in [R_i] \}$ are linearly independent {-- see our assumption in Proposition~\ref{thm:SecureAch_combi}.
We can write the messages decoded at} 
destination $D_i$, denoted by $\hat{W_i}$, {as follows}
	\begin{align}
	{\hat{W}_i} & = \begin{bmatrix} X_{e_1} & X_{e_2} & \ldots & X_{e_t}  \end{bmatrix} \begin{bmatrix}
	d_i^{(1)} & d_i^{(2)} & \ldots & d_i^{(R_i)}
	\end{bmatrix}.
	\end{align}
{We can stack all the decoded messages at the $m$ destinations together and obtain}
	\begin{align*}
&{\begin{bmatrix} \hat{W}_1 & \hat{W}_2 & \ldots & \hat{W}_m \end{bmatrix} }
\\	 & =  {\begin{bmatrix} X_{e_1} & X_{e_2} & \ldots & X_{e_t}  \end{bmatrix}} \left[ \begin{array}{ccccccccccc}
	d_1^{(1)} & \ldots & {d_1^{(R_1)}} & d_2^{(1)} & \ldots & d_2^{(R_2)} & \ldots & d_m^{(1)} & \ldots & d_m^{(R_m)}
	\end{array}\right] \\
	& \stackrel{\rm{(a)}}{=}  {\begin{bmatrix} W_1^T \\ W_2^T \\ \vdots \\ W_m^T \\ K^T \end{bmatrix}^T} \begin{bmatrix}
	E^T \\ V
	\end{bmatrix}\begin{bmatrix}
	d_1^{(1)} &  \ldots & {d_1^{(R_1)}} & d_2^{(1)} & \ldots & d_2^{(R_2)} & \ldots & d_m^{(1)} & \ldots & d_m^{(R_m)}
	\end{bmatrix} \\
	&  = \left[ {\begin{array}{c} W_1^T \\ W_2^T \\ \vdots \\ W_m^T \\ K^T \end{array}} \right]^T  \left[\begin{array}{c}
	E^T \left[ \begin{array}{ccccccccccc}
	d_1^{(1)} & \ldots & {d_1^{(R_1)}} & d_2^{(1)} & \ldots & d_2^{(R_2)} & \ldots & d_m^{(1)} & \ldots & d_m^{(R_m)}
	\end{array}\right] \\ ~ \\ V \left[ \begin{array}{ccccccccccc}
	d_1^{(1)} & \ldots & {d_1^{(R_1)}} & d_2^{(1)} & \ldots & d_2^{(R_2)} & \ldots & d_m^{(1)} & \ldots & d_m^{(R_m)}
	\end{array}\right]
	\end{array}\right] \\
	& \stackrel{{\rm{(b)}}}{=} \left[ {\begin{array}{c} W_1^T \\ W_2^T \\ \vdots \\ W_m^T \\ K^T \end{array}} \right]^T  \left[\begin{array}{c}
	E^T \left[ \begin{array}{ccccccccccc}
	d_1^{(1)} & \ldots & {d_1^{(R_1)}} & d_2^{(1)} & \ldots & d_2^{(R_2)} & \ldots & d_m^{(1)} & \ldots & d_m^{(R_m)}
	\end{array}\right] \\ ~ \\ {\mathbf{0}_{t \times \sum_{i=1}^m R_i}}
	\end{array}\right] \\
	& = \left[ {\begin{array}{c} W_1^T \\ W_2^T \\ \vdots \\ W_m^T \end{array}}\right]^T E^T {\underbrace{\left[ \begin{array}{ccccccccccc}
	d_1^{(1)} & \ldots & d_1^{(R_1)} & d_2^{(1)} & \ldots & d_2^{(R_2)} & \ldots & d_m^{(1)} & \ldots & d_m^{(R_m)}
	\end{array}\right]}_{D}}, 
	\end{align*}
{where the equality in $\rm{(a)}$ follows by using the definition in~\eqref{eq:enc} and the equality in $\rm{(b)}$ is due to the fact that} the vectors in $\{d_i^{(j)}\}$ are in the null space $N_i$, which is contained inside the null space of $V$. 
{Since we assumed that the selected $\{d_i^{j}, \ i \in [m] , j \in [R_i] \}$ are linearly independent, then this implies that the matrix $D$ has rank $\sum_{i=1}^m R_i$. Then, since
$E^T$ is a matrix of dimension $\sum_{i=1}^m R_i \times t$,} {one} can always find a matrix $E^T$ such that, 
	\begin{align*}
	{ E^T D = \mathbf{I}_{\sum_{i=1}^m R_i}.}
	\end{align*}
	Thus, we get,	
	\begin{align*}
	{\begin{bmatrix} \hat{W}_1 & \hat{W}_2 & \ldots & \hat{W}_m \end{bmatrix} }    & = {\begin{bmatrix} {W}_1 & {W}_2 & \ldots & {W}_m \end{bmatrix} }.
	\end{align*}
{This concludes the proof that there exists a choice of the matrix $E$ in~\eqref{eq:enc}, which ensures that all the destinations reliably decode their intended messages.}
\section{Proof that the rate region in~\eqref{eq:Cap2CombNet} is securely achievable}
\label{app:2CombNetAch}

{In order to show that the rate region in~\eqref{eq:Cap2CombNet} is securely achievable, we use a two-step proof. 
First, we determine all the feasible $(R_1,R_2)$ pairs that can be selected from the null space $N_i$, such that the assumption in Proposition~\ref{thm:SecureAch_combi} is satisfied, namely such that the $R_1+R_2$ selected vectors are linearly independent. Hence, the result in Proposition~\ref{thm:SecureAch_combi} implies that} any point in this region is achievable. 
Then, we prove that {the convex hull of these feasible rate pairs is indeed the region in~\eqref{eq:Cap2CombNet}. The proposition below represents the first step of our proof.} 
\begin{prop}
	\label{prop:pick_ind_vec_2}
	The convex hull of all { $(R_1, R_2)$ rate pairs} such that we can {select} $R_1$ vectors from $N_1$ and $R_2$ vectors from $N_2$, with all of these vectors being linearly independent, is given by the following region
	\begin{align*}
	 {R_1} & \leq \text{dim}(N_1), \\
	 {R_2} & \leq \text{dim} (N_2), \\
	{R_1 + R_2} & \leq \text{dim} (N_1 + N_2),
	\end{align*}
	where $+$ denotes the sum of {subspaces $N_1$ and $N_2$.}
\end{prop}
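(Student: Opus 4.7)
The plan is to (i) show that the three inequalities are necessary for any feasible integer pair $(R_1,R_2)$, (ii) exhibit an explicit construction showing that every integer pair satisfying them is feasible, and (iii) conclude by observing that the resulting integer feasibility region is exactly the set of integer points of the stated polytope, so that their convex hull coincides with the polytope.

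The converse direction is essentially immediate: selecting $R_i$ linearly independent vectors from $N_i$ forces $R_i\leq\dim(N_i)$, and the full collection of $R_1+R_2$ linearly independent chosen vectors lies in $N_1\cup N_2\subseteq N_1+N_2$, hence $R_1+R_2\leq\dim(N_1+N_2)$. For achievability, set $d:=\dim(N_1\cap N_2)$, $d_1:=\dim(N_1)$, $d_2:=\dim(N_2)$, so that $\dim(N_1+N_2)=d_1+d_2-d$. Pick a basis $\{v_1,\dots,v_d\}$ of $N_1\cap N_2$ and extend it to bases $\{v_1,\dots,v_d,u_1,\dots,u_{d_1-d}\}$ of $N_1$ and $\{v_1,\dots,v_d,w_1,\dots,w_{d_2-d}\}$ of $N_2$; their union is a basis of $N_1+N_2$. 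Given integer $(R_1,R_2)$ satisfying the three inequalities, I split on whether $R_1\leq d_1-d$ or not. In the first case, take $R_1$ of the $u_j$'s as the $N_1$-selection and any $R_2$ vectors from $\{v_1,\dots,v_d,w_1,\dots,w_{d_2-d}\}$ as the $N_2$-selection; this is possible since $R_2\leq d_2$. In the second case, write $R_1=(d_1-d)+r$ with $0<r\leq d$, take all $u_j$'s together with $v_1,\dots,v_r$ on the $N_1$ side, and pick $R_2$ vectors from $\{v_{r+1},\dots,v_d,w_1,\dots,w_{d_2-d}\}$ on the $N_2$ side; the latter contains $d_2-r$ vectors, and the sum-rate bound gives $R_2\leq d_1+d_2-d-R_1=d_2-r$, so the selection exists. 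By construction, the $R_1+R_2$ selected vectors are linearly independent.

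Finally, the polytope defined by the three inequalities has all its vertices at integer lattice points (specifically at $(0,0)$, $(d_1,0)$, $(d_1,d_2-d)$, $(d_1-d,d_2)$, $(0,d_2)$, with some vertices merging when the sum-rate constraint is inactive), each of which is integer-achievable by the construction above. Since any integer point in the polytope is achievable, and convex combinations of achievable integer pairs are in the convex hull by definition, the convex hull of feasible pairs equals the polytope. The main obstacle is the case $R_1>d_1-d$ in the achievability part, where one is forced to spend some vectors of $N_1\cap N_2$ on the first selection; the sum-rate constraint is precisely the quantity needed to guarantee that enough ``room'' is left in $N_2$ outside the used portion of the intersection.
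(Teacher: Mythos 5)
Your proof is correct, and it takes a somewhat different and more granular route than the paper. The paper's proof only exhibits the two non-trivial corner points: it selects a full basis of $N_1$ (giving $R_1 = \text{dim}(N_1)$) and then greedily adds $\text{dim}(N_1+N_2)-\text{dim}(N_1)$ independent vectors from a basis of $N_2$, invokes symmetry for the other corner, and lets the convex hull do the rest; it leaves the (trivial) converse implicit. You instead decompose via $N_1\cap N_2$, writing a basis of $N_1+N_2$ as $\{v_i\}\cup\{u_j\}\cup\{w_k\}$ with $\{v_i\}$ spanning the intersection, and show that \emph{every} integer point of the polytope is directly feasible, with the sum-rate constraint $R_1+R_2\leq \text{dim}(N_1)+\text{dim}(N_2)-\text{dim}(N_1\cap N_2)$ exactly accounting for the vectors of the intersection consumed by the first selection; you also state the converse explicitly. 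Your version proves slightly more (pointwise integer feasibility rather than feasibility of extreme points plus convexification), at the cost of a case split; the paper's version is shorter but relies on the reader accepting that the two exhibited points, together with the axis points, are all the vertices. Both arguments ultimately rest on the same dimension formula for the sum of subspaces, which the paper only makes explicit in the subsequent step where it evaluates $\text{dim}(N_1+N_2)$ for the combination network.
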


\begin{proof}
	
	We {first show} that we can select $\text{dim} (N_1)$ vectors from $N_1$ and $\text{dim} (N_1+ N_2) - \text{dim} (N_1)$ vectors from $N_2$, such that {all these} vectors are linearly independent. {We have:}
	\begin{itemize}
		\item $N_1$ is of dimension $\text{dim}(N_1)$ {and so we select} $\text{dim}(N_1)$ independent vectors from this space. One {feasible choice consists of selecting} the basis of {the} subspace $N_1$. Thus, $R_1 = \text{dim}(N_1)$.
		\item { In the basis of $N_1+N_2$ there are $\text{dim} (N_1 + N_2)$ independent vectors. Moreover,} note that the basis of $N_1 + N_2$ is a subset of the basis of $N_1$ union with the basis on $N_2$. So we can {select} vectors from the basis of $N_2$ as long as we {select an} independent vector. {Thus, we can select} $\text{dim} (N_{1} + N_2) - \text{dim} (N_{1})$ vectors from $N_{2}$, i.e., $R_{2} = \text{dim} (N_{1} + N_2) - \text{dim} (N_1) $.
	\end{itemize}
	
	By symmetry, one can also {first select} $\text{dim}(N_2)$ vectors from the null space $N_2$ and {then} $\text{dim}(N_1+N_2) - \text{dim}(N_2)$ vectors from the null space $N_1$. {For this case, one would get $R_2 = \text{dim} (N_2)$ and $R_1 = \text{dim} (N_{1} + N_2) - \text{dim} (N_2)$.} This completes the proof {since} these are the only non-trivial corner points for the region given in Proposition~\ref{prop:pick_ind_vec_2}.
\end{proof}

	We {now} prove that the region given in Proposition~\ref{prop:pick_ind_vec_2} {coincides with} the region given in {Proposition~\ref{thm:capacity}. We start by noting that we can rewrite the rate region in~\eqref{eq:Cap2CombNet} as}
	\begin{align*}
	{R_1} & \leq \left[ \left| \mathcal{M}_1 \right| - k \right ]^+ , \\
	{R_2} & \leq \left [ \left| \mathcal{M}_2 \right| - k \right]^+ , \\
	{R_1 + R_2} & \leq \min\left(\left[\left| \mathcal{M}_1 \right| - k \right]^+ + \left[\left| \mathcal{M}_2 \right| - k \right]^+, \left[\left| \mathcal{M}_1 \cup \mathcal{M}_2  \right| - k \right]^+ \right). 
	\end{align*}
Since, as {we have} proved in~\eqref{eq:dim_N_i}, $\text{dim} (N_i) = \left [ \left| \mathcal{M}_i \right| - k \right]^+, \ \forall i \in [m] $, {then} we only need to show that 
$$ \text{dim} (N_1 + N_2) = \min\left(\left [ \left| \mathcal{M}_1 \right| - k \right ]^+ + \left [ \left| \mathcal{M}_2 \right| - k \right]^+, \left [ \left| \mathcal{M}_1 \cup \mathcal{M}_2  \right| - k \right]^+ \right). $$  
{The dimension of the sum of two subspaces can be computed as}
$$\text{dim}(N_1+N_2) = \text{dim}(N_1) + \text{dim}(N_2) - \text{dim}(N_1 \cap N_2).$$
{Thus, we now need to compute $\text{dim}(N_1 \cap N_2)$. We note that} $N_1 \cap N_2$ is the null space of {the matrix
\begin{align*}
V^{\star}_{1,2} = \begin{bmatrix}V \\ C  \end{bmatrix}, \ \text{where} \ C = \begin{bmatrix} C_1 \\ C_2\end{bmatrix},
\end{align*}
with $C_1$ and $C_2$ being defined as in~\eqref{eq:vandermonde_i}. Moreover,}
there will be $t-\left| \mathcal{M}_1 \cap \mathcal{M}_2  \right|$ distinct rows {in $C$,} and following the argument based on $V$ being the generator matrix of a $(t,k,t-k+1)$ MDS code, then the number of independent rows in {$V^{\star}_{1,2}$ is} $\min(t, k + t- \left| \mathcal{M}_1 \cap \mathcal{M}_2  \right| )$. {Thus, 
\begin{align*}
\text{dim}(N_1 \cap N_2) = t - \min(t, k + t- \left| \mathcal{M}_1 \cap \mathcal{M}_2  \right| ) = \left [ \left| \mathcal{M}_1 \cap \mathcal{M}_2  \right| - k \right]^+,
\end{align*}
which leads to}
	\begin{align*}
	\text{dim}(N_1+N_2)  & = \left [ \left| \mathcal{M}_1 \right| - k \right ]^+ + \left [ \left| \mathcal{M}_2 \right| - k \right ]^+ - \left [ \left| \mathcal{M}_1 \cap \mathcal{M}_2  \right| - k \right]^+ \\
	& = \min\left( \left(\left| \mathcal{M}_1 \right| - k \right)^+ + \left(\left| \mathcal{M}_2 \right| - k \right)^+, \left(\left| \mathcal{M}_1 \cup \mathcal{M}_2 \right| - k \right)^+  \right).
	\end{align*}
	The last equality can be verified by considering all {possible} four cases, namely: (1) $\left| \mathcal{M}_1 \right| \geq k, \left| \mathcal{M}_2 \right| \geq k$, (2) $\left| \mathcal{M}_1 \right| < k, \left| \mathcal{M}_2 \right| \geq k$, (3) $\left| \mathcal{M}_1 \right| \geq k, \left| \mathcal{M}_2 \right| < k$ and (4) $\left| \mathcal{M}_1 \right| < k, \left| \mathcal{M}_2 \right| < k$. 
{This concludes the proof that the rate region in~\eqref{eq:Cap2CombNet} is securely achievable.}
	
	\section{Unsecure capacity for single source multiple unicast traffic }
\label{app:single_source_unsecure}
We here give the proof of Lemma~\ref{thm:UnCapacity} (originally proved in~\cite[Theorem 9]{KoetterTON2003}). 
{We start by noting that, by setting $k=0$ in the outer bound in~\eqref{eq:SecureOB}, we readily obtain the rate region in~\eqref{eq:UnsecureOB}. It therefore follows that~\eqref{eq:UnsecureOB} is an outer bound on the capacity region of a multiple unicast network with single source and $m$ destinations. We now prove that the region in~\eqref{eq:UnsecureOB} is also achievable.}
Assume that a rate $m$-tuple $(R_1, R_2, \ldots, R_m)$ satisfies the constraint in~\eqref{eq:UnsecureOB}. 
We now prove that this $m$-tuple is achievable. 
Towards this end, from the original graph $\mathcal{G}$, we create a new directed acyclic graph $\mathcal{G}^\prime$ where a new node $D^\prime$ is connected to each $D_i, i \in {[m],}$ through an edge $\mathcal{E}_i^\prime$ of capacity $R_i$. 
It is not difficult to see that in $\mathcal{G}^\prime$, the min-cut capacity between $S$ and $D^\prime$ is $\sum\limits_{i=1}^m R_i$.  
This can be explained as follows. 
Suppose that the min-cut from $S$ to $D^\prime$, in addition to a subset of $\mathcal{E}$ (i.e., the set of edges in the original $\mathcal{G}$), also contains some edges $\mathcal{E}_{\mathcal{J}}^\prime$, with $\mathcal{J} \subseteq {[m]}$.
This clearly implies that the subset of edges from $\mathcal{E}$ should form a cut between source $S$ and {$D_{[m]\setminus \mathcal{J}}$,} otherwise we would not have a cut between $S$ and $D^\prime$.
Thus, the min-cut has a capacity of at least { $\sum\limits_{i \in J} R_i + M_{\left \{D_{[m]\setminus \mathcal{J}}\right\}}$} and, since $\sum\limits_{i \in {[m]}\setminus \mathcal{J}} R_i \leq M_{\left \{D_{{[m]}\setminus \mathcal{J}} \right \}}$ (this follows from the outer bound proved above), the min-cut has a capacity of at least {$\sum\limits_{i=1}^m R_i$.}
Then, since the set $\mathcal{E}_{{[m]}}^\prime$ is a cut of capacity {$\sum\limits_{i=1}^m R_i$,} it follows that the min-cut has a capacity of at most $\sum\limits_{i}^m R_i$.
This implies that the min-cut capacity between  $S$ and $D^\prime$ in $\mathcal{G}^\prime$ is $\sum\limits_{i=1}^m R_i$.
With this, the achievability of the rate $m$-tuple $(R_1, R_2, \ldots, R_m)$ that satisfies the constraint in~\eqref{eq:UnsecureOB} directly follows from the max-flow min-cut theorem.
Indeed, since one can communicate a total information of {$\sum\limits_{i=1}^m R_i$} from $S$ to $D^\prime$ in $\mathcal{G}^\prime$, then this is possible only if an amount $R_i$ of information flows through $D_i, i \in {[m],}$ in $\mathcal{G}$. 	
This concludes the proof of Lemma~\ref{thm:UnCapacity}.
Notice that in order to transmit $\sum\limits_{i=1}^m R_i$ message packets from $S$ to $D^\prime$ (single unicast session) network coding is not needed. Thus, there is no need of coding operations to characterize the capacity region of a network with single source and multiple destinations.
\section{Secure Capacity Results on Other Instances of Multiple Unicast Traffic}

\subsection{Erasure Networks}
\label{app:OtherNetInstEras}	
{We here report the secure capacity region results that we derived in~\cite{agarwal2016secure} for the three networks in Fig.~\ref{fig:networks}. In particular, the secure capacity regions can be found as the solution of some feasibility programs. We refer an interested reader to~\cite{agarwal2016secure} for the complete proof of these results.}
\begin{prop}
		\label{thm:capY}
		The secure capacity region of the { Y-network} in Fig.~\ref{fig:networks}(a) is given by
		\begin{subequations}
			\label{eq:capY}
			\begin{align}
			k_j  \geq  R_j \frac{1-\delta_{j \text{E}}}{1-\delta_j \delta_{j \text{E}}},   \ j \in [2], \label{eq:capY1}
			\\k_3  \geq  (R_1+R_2) \frac{1-\delta_{3\text{E}}}{1-\delta_3 \delta_{3\text{E}}},   \label{eq:capY2}
			\\ \frac{R_j}{1-\delta_j} + \frac{k_j}{(1-\delta_j)\delta_{j \text{E}}}  \leq  1,  \ j \in [2],   \label{eq:capY3} 
			\\ \frac{R_1+R_2}{1-\delta_3} + \frac{k_3}{(1-\delta_3)\delta_{3\text{E}}}  \leq  1 ,  \label{eq:capY4}
			 \\ k_3 \leq  \left ( \frac{k_1}{\delta_{1\text{E}}}+ \frac{k_2}{\delta_{2\text{E}}} \right ) \frac{(1-\delta_3) \delta_{3\text{E}}}{1-\delta_3 \delta_{3\text{E}}},  \label{eq:capY5}
			\\ R_i,k_j \geq 0, \ i \in [1:2], \ j \in [3], \label{eq:capY6}
			\end{align}
		\end{subequations}
{where: (i)} the first and the second {constraints} ensure that {enough keys are generated, i.e., the number of generated keys is larger than the amount of information received by the adversary; (ii)} the third and the fourth {inequalities} are time constraints ensuring {that the length of the key generation phase plus the length of the message sending phase do not exceed the total available time; (iii) finally,} the fifth constraint follows {since node $I$ has zero randomness and so the key that it can generate is constrained by the randomness received from $S_1$ and $S_2$.}
	\end{prop}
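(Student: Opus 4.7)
My plan is to establish both directions of Proposition~\ref{thm:capY}: a converse showing that every securely achievable $(R_1,R_2)$ must admit auxiliary key-rate variables $(k_1,k_2,k_3)$ satisfying~\eqref{eq:capY}, and an achievability proof via an explicit two-phase scheme whose feasibility is governed by exactly the same linear program. The variables $k_1, k_2, k_3$ will represent the per-time-unit rates at which fresh secret keys are established on the $S_1$--$I$, $S_2$--$I$, and $I$--$D$ links, respectively; once this interpretation is fixed, each inequality becomes traceable to a physical constraint.

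For the converse, I would argue each family of inequalities in turn. Inequalities~\eqref{eq:capY1}--\eqref{eq:capY2} arise from the perfect secrecy requirement: conditioned on the eavesdropper's observation on link $j$ (which, by properties of erasure channels with feedback and ARQ, lets Eve recover a fraction $\frac{1-\delta_{j\text{E}}}{1-\delta_j\delta_{j\text{E}}}$ of what the legitimate receiver obtains), the message passing over that link must still be uniform, which forces the key rate to be at least this fraction of the message rate. Inequalities~\eqref{eq:capY3}--\eqref{eq:capY4} are time-budget constraints: on each link, the fraction of time devoted to key generation (which is $k_j/[(1-\delta_j)\delta_{j\text{E}}]$, reflecting the ARQ cost of securely establishing one key symbol) plus the fraction used to push encrypted packets through (which is $R_j/(1-\delta_j)$) cannot exceed $1$. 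The most delicate bound is~\eqref{eq:capY5}: since node $I$ has no intrinsic randomness, every bit of secret key sent on $I$--$D$ must be a deterministic function of the key material that $I$ received on the two inbound links; combining an entropy bound on the $I$--$D$ key with the fact that Eve recovers a fraction $\frac{(1-\delta_3)\delta_{3\text{E}}}{1-\delta_3\delta_{3\text{E}}}$ of the common randomness at $I$ then yields~\eqref{eq:capY5}.

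For achievability, I would describe the two-phase scheme explicitly. In Phase~1, each source sends a long block of random symbols over its outgoing link; using public feedback, both $S_j$ and $I$ (and also Eve) know the erasure pattern, so $S_j$ and $I$ retain, via an $(n,\text{rank})$ MDS-style selection, a common random vector that is of dimension larger than what Eve recovered, producing a fresh key of rate $k_j$. Node $I$ then XORs pieces of these two key pools (which it has received by the end of Phase~1) to build a key of rate $k_3$ for the $I$--$D$ link, subject to the combinatorial budget~\eqref{eq:capY5}. In Phase~2, the source-$j$ message $W_j$ is one-time-pad encrypted with the link-$j$ key, forwarded by $I$, and re-encrypted with the link-$3$ key before being routed to $D$; the per-link time budgets~\eqref{eq:capY3}--\eqref{eq:capY4} are exactly what guarantee that both phases fit in the allotted time, and security follows link-by-link from the one-time-pad property together with independence of the keys from the messages.

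The main obstacle is the combined randomness constraint~\eqref{eq:capY5}, which ties the three key rates together. On the converse side, turning the intuition ``$I$ has no private randomness'' into a clean information-theoretic inequality requires carefully isolating, over long blocks with feedback, the portion of the randomness that $I$ shares with $D$ but not with Eve, and relating that to the corresponding shared randomness on the incoming links; the $\frac{(1-\delta_3)\delta_{3\text{E}}}{1-\delta_3\delta_{3\text{E}}}$ factor makes precise how much of the $S_j$--$I$ common randomness survives as $I$--$D$ secret key after Eve's observation on link~$3$. On the achievability side, the corresponding challenge is a coding construction that simultaneously attains equality in~\eqref{eq:capY5} while respecting the link-wise time budgets---which I would handle by a layered scheme where $I$ first extracts from its buffered randomness a maximal common-key pool with $D$, then applies an MDS code whose parameters are chosen to saturate~\eqref{eq:capY5}.
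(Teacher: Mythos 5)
Your proposal matches the paper's approach: the paper itself only sketches this result (deferring the full proof to the cited reference \cite{agarwal2016secure}), describing exactly the two-phase structure you outline --- ARQ/MDS-based link-by-link key generation followed by one-time-pad encrypted message transmission --- and attributing the constraints to the same three physical causes (key sufficiency against Eve's observations, per-link time budgets, and the zero-randomness bottleneck at node $I$). Your interpretations of the fractions $\tfrac{1-\delta_{j\mathrm{E}}}{1-\delta_j\delta_{j\mathrm{E}}}$ (fraction of ARQ-delivered packets Eve overhears) and $\tfrac{(1-\delta_j)\delta_{j\mathrm{E}}}{1-\delta_j\delta_{j\mathrm{E}}}$ (secret-key yield per transmitted random packet) are the correct ones, so the plan is consistent with the paper's argument, though like the paper it leaves the rigorous converse for the randomness constraint~\eqref{eq:capY5} to the external reference.
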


		\begin{prop}
			\label{thm:capRY}
			The secure capacity region of the {RY-network} in Fig.~\ref{fig:networks}(b) is given by
			\begin{subequations}
				\label{eq:capRY}
				\begin{align}
			k_3 + e \frac{(1-\delta_3) \delta_{3 \text{E}}}{1-\delta_3 \delta_{3\text{E}}}  \geq  (R_1+R_2) \frac{1-\delta_{3 \text{E}}}{1-\delta_3 \delta_{3 \text{E}}}, \label{eq:capRY1}
			\\ k_j  \geq  R_j \frac{1-\delta_{j \text{E}}}{1-\delta_j \delta_{j \text{E}}},    \ j \in [2], \label{eq:capRY2} 
			\\ \frac{R_1+R_2}{1-\delta_3} + \frac{k_3}{(1-\delta_3)\delta_{3 \text{E}}} + \frac{e}{1-\delta_3}  \leq  1, \label{eq:capRY3}  
			\\ \frac{R_j}{1-\delta_j} + \frac{k_j}{(1-\delta_j)\delta_{j \text{E}}}  \leq  1,  \  j \in [2], \label{eq:capRY4}
			\\  k_3  \leq (D_0 - e) \frac{(1-\delta_3) \delta_{3 \text{E}} }{1-\delta_3 \delta_{3 \text{E}}},\label{eq:capRY5} 
			\\  k_j  \leq  \left ( e + \frac{k_3}{\delta_{3 \text{E}}} \right ) \frac{(1-\delta_j) \delta_{j \text{E}}}{1-\delta_j \delta_{j \text{E}}} , \  j \in [2], \label{eq:capRY6}
			\\  R_i, e,k_j  \geq 0, \ i \in [1:2], \ j \in [3], \label{eq:capRY7}
				\end{align}
			\end{subequations}
{where: (i)} the first and the second {constraints} ensure that {enough keys are generated, i.e., the number of generated keys is larger than the amount of information received by the adversary; (ii)} the third and the fourth {inequalities} are time constraints ensuring {that the length of the key generation phase plus the length of the message sending phase do not exceed the total available time; (iii) finally, the fifth (respectively, sixth) constraint is due to the fact that the key that node $S$ (respectively, node $I$) can create is constrained by its limited randomness (respectively, the randomness that it receives from $S$).}
		\end{prop}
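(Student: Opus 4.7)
The plan is to prove Proposition~\ref{thm:capRY} in two directions: achievability (every rate pair $(R_1,R_2)$ admitting auxiliaries $k_1,k_2,k_3,e\geq 0$ satisfying \eqref{eq:capRY1}--\eqref{eq:capRY7} can be securely transmitted) and converse (every securely achievable pair admits such a decomposition). The structure mirrors that of Proposition~\ref{thm:capY} for the Y-network, with two crucial differences: now the single source $S$ has only a finite randomness budget $D_0$, while the intermediate node $I$ has no randomness at all, so all key material on edges~$1$ and~$2$ must be derived from what $I$ received on edge~$3$.

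For achievability I would design a two-phase scheme. In the key-generation phase, $S$ sends on edge~$3$ two interleaved streams: a pass-through random stream at rate $e$ (not secrecy-protected, delivered to $I$ in fraction $e/(1-\delta_3)$ of the time), and a secret-key stream generated by the feedback-based ARQ-plus-MDS construction of~\cite{czapP2P}, producing $S$-$I$ shared secrets at rate $k_3$ and occupying fraction $k_3/[(1-\delta_3)\delta_{3\text{E}}]$ of the time. This yields \eqref{eq:capRY5} (randomness budget of $S$) and, together with the message-sending fraction $(R_1+R_2)/(1-\delta_3)$, yields \eqref{eq:capRY3}. Node $I$ thereby accumulates random packets at effective rate $e+k_3/\delta_{3\text{E}}$, the second term being the total number of symbols $I$ saw during the ARQ phase (of which $k_3$ are secret from Eve and $k_3(1-\delta_{3\text{E}})/\delta_{3\text{E}}$ are public). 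The intermediate node then runs the same single-link ARQ-plus-MDS protocol on each edge $j\in[2]$ to extract $I$-$D_j$ keys at rates $k_j$ bounded by \eqref{eq:capRY6}, while the companion time budgets give \eqref{eq:capRY4}. Finally, messages are one-time-pad encrypted with these keys, and \eqref{eq:capRY1}--\eqref{eq:capRY2} guarantee enough key material to cover what Eve intercepts; in \eqref{eq:capRY1} the pass-through stream also contributes to the effective edge-$3$ key rate because the fraction $(1-\delta_3)\delta_{3\text{E}}/(1-\delta_3\delta_{3\text{E}})$ of its packets reach $I$ but not Eve.

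For the converse I would identify the auxiliaries with natural rates in any securely achievable scheme: $k_j$ as the rate of secret randomness effectively consumed on edge~$j$ for one-time-pad encryption, and $e$ as the rate of randomness $S$ injects into $I$ that is not consumed as an $S$-$I$ shared secret. Constraints \eqref{eq:capRY3}--\eqref{eq:capRY4} then follow from the fact that key generation and message transmission share the unit-time resource of each edge. Constraints \eqref{eq:capRY5}--\eqref{eq:capRY6} follow from the randomness bottlenecks: $S$'s output is bounded by $D_0$, and $I$ has no randomness so its keying material is limited by what it received from $S$. Constraints \eqref{eq:capRY1}--\eqref{eq:capRY2} follow from a per-edge erasure-wiretap converse with public feedback~\cite{czapP2P,Maurer}, applied separately to each edge.

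The main obstacle I anticipate is the converse for the auxiliary $e$, which, unlike the $k_j$, is not a physical observable of an arbitrary scheme but an internal split of the randomness arriving at $I$. The plan for this is to show that any securely achievable scheme can be reduced, without loss of rate, to one in which the random symbols received at $I$ are cleanly partitioned into those consumed as $S$-$I$ one-time-pad keys and those serving (via ARQ on edges~$1$ and~$2$) as seed for the $I$-$D_j$ keys. Once this partition is exhibited, matching the rates on each side of the partition to the single-link secrecy-capacity and time-budget constraints yields the six inequalities of Proposition~\ref{thm:capRY}. Since the full argument is developed in~\cite{agarwal2016secure}, my proposal amounts to adapting that bookkeeping to the RY-topology as outlined above.
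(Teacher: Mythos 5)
Your outline coincides with the approach the paper itself takes: the paper does not actually prove Proposition~\ref{thm:capRY} in this document, but states it as a feasibility program, gives exactly the per-constraint reading you give (key rates must dominate what Eve intercepts, per-edge time budgets, randomness bottlenecks at $S$ and at $I$), and explicitly defers the complete argument to~\cite{agarwal2016secure}. Your achievability sketch --- link-by-link key generation via the MDS-plus-ARQ mechanism with public feedback, followed by one-time-pad encryption, with the pass-through stream of rate $e$ explaining the extra key term in~\eqref{eq:capRY1} and the factors $(1-\delta_j)\delta_{j\text{E}}/(1-\delta_j\delta_{j\text{E}})$ arising as the per-link key-extraction efficiency --- is the same two-phase scheme the paper describes in Section~\ref{sec:OtherInst}, so on that side you are aligned with the intended construction.

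The one genuine gap in your plan, viewed as a standalone proof, is the converse. Identifying $k_1,k_2,k_3,e$ with ``rates of secret randomness consumed'' presupposes that an arbitrary secure scheme already has the two-phase, cleanly partitioned form, and your proposed remedy --- that any scheme can be reduced without rate loss to one in which the symbols arriving at $I$ split into $S$--$I$ pad material and seed for the downstream keys --- is precisely the statement that requires proof; it is not obvious, since a general scheme may reuse the same received symbols in both roles or interleave key generation with message transmission arbitrarily. A complete converse must instead derive each of \eqref{eq:capRY1}--\eqref{eq:capRY6} from information inequalities: for instance, bounding the entropy of everything $I$ ever transmits by the entropy of what it received (since $I$ has zero intrinsic randomness) to obtain \eqref{eq:capRY6}, invoking the single-link erasure-wiretap-with-feedback converse on each edge for \eqref{eq:capRY1}--\eqref{eq:capRY2}, and \emph{defining} $e$ as a suitable conditional-entropy rate of the $S$-to-$I$ traffic rather than postulating it as a design parameter. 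That bookkeeping is what is carried out in~\cite{agarwal2016secure}; your outline correctly names all the ingredients but does not supply this step.
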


		\begin{prop}
			\label{thm:capX}
			The secure capacity region of the {X-network} in Fig.~\ref{fig:networks}(c) is given by
			\begin{subequations}
				\label{eq:capX}
				\begin{align}
				k_j  \geq  R_j \frac{1-\delta_{j \text{E}}}{1-\delta_j \delta_{j \text{E}}}, \ j \in [2], \label{eq:capX1}
				\\ k_3 + e \frac{(1-\delta_3) \delta_{3 \text{E}}}{1-\delta_3 \delta_{3\text{E}}} \geq  (R_1+R_2) \frac{1-\delta_{3 \text{E}}}{1-\delta_3 \delta_{3 \text{E}}}, \label{eq:capX2}
				\\ k_j  \geq R_{j-3} \frac{1-\delta_{j \text{E}}}{1-\delta_j \delta_{j \text{E}}} , \ j \in [4:5], \label{eq:capX3}
				\\	\frac{R_j}{1-\delta_j} + \frac{k_j}{(1-\delta_j)\delta_{j \text{E}}}    \leq  1, \  j \in [2], \label{eq:capX4}
				\\ \frac{R_{j-3}}{1-\delta_j} + \frac{k_j}{(1-\delta_j)\delta_{j \text{E}}}  \leq  1,  \ j \in [4:5], \label{eq:capX5}
				\\ \frac{R_1+R_2}{1-\delta_3} + \frac{k_3}{(1-\delta_3)\delta_{3 \text{E}}} + \frac{e}{1-\delta_3}  \leq  1, \label{eq:capX6} 
				\\  k_3  \leq  \left ( \frac{k_1}{\delta_{1\text{E}}}+  \frac{k_2}{\delta_{2\text{E}}} - e \right ) \frac{(1-\delta_3) \delta_{3 \text{E}}  }{1-\delta_3 \delta_{3 \text{E}}}, \label{eq:capX7}  
				\\ k_j  \leq   \left (e+ \frac{k_3}{\delta_{3\text{E}}} \right ) \frac{(1-\delta_j) \delta_{j \text{E}}}{1-\delta_j \delta_{j \text{E}}} , \  j \in [4:5], \label{eq:capX8}
				\\  R_i, e,k_j \geq 0, \ i \in [1:2], \ j \in [5], \label{eq:capX9}				
				\end{align}
			\end{subequations}
{where: (i)} the first, second and third {constraints} ensure that {enough keys are generated, i.e., the number of generated keys is larger than the amount of information received by the adversary; (ii)} the fourth, fifth and sixth {inequalities} are time constraints ensuring {that the length of the key generation phase plus the length of the message sending phase do not exceed the total available time; (iii) finally, the seventh and the eight constraints are due to the fact that the key that a node can create is constrained by the randomness that it receives from previous nodes.}
		\end{prop}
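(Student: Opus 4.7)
The plan is to prove Proposition~\ref{thm:capX} by combining the techniques that yield Propositions~\ref{thm:capY} and~\ref{thm:capRY}, since the X-network topologically cascades a Y-type convergent stage at $I_1$ (two sources feeding a shared link) with an RY-type divergent stage at $I_2$ (a shared link feeding two destinations). Both directions---achievability and converse---would therefore proceed by stitching together the corresponding arguments from~\cite{agarwal2016secure}, while correctly handling the coupling variable $e$ that quantifies the amount of randomness traveling across the shared link in excess of what is consumed to create the key $k_3$.

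For achievability I would use a two-phase scheme. In the key-generation phase, $S_1$ and $S_2$ each run the erasure-wiretap ARQ-plus-MDS protocol on their outgoing links, producing link keys of rates $k_1,k_2$ that satisfy~\eqref{eq:capX1}; the factor $(1-\delta_{j\text{E}})/(1-\delta_j\delta_{j\text{E}})$ is the standard conversion from protected message rate to required key rate on an erasure wiretap link with public feedback. Node $I_1$ then combines the randomness it received with an additional rate $e$ of raw randomness forwarded from $\{S_1,S_2\}$ and runs the same protocol on the shared link to manufacture the key $k_3$ for $I_2$; since $I_1$ has no private randomness, the amount it can convert into a shared-link key is bounded as in~\eqref{eq:capX7}, the X-analog of~\eqref{eq:capY5}. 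At the divergent stage, $I_2$ uses the resulting key material together with the forwarded randomness $e$ to manufacture independent keys of rates $k_4,k_5$ on its two outgoing links, giving~\eqref{eq:capX8}, exactly in the manner of Proposition~\ref{thm:capRY}. In the message-sending phase, each $W_i$ is XOR-encrypted with the chain of keys along its route and transmitted, and the time-allocation constraints~\eqref{eq:capX4}--\eqref{eq:capX6} state that the fraction of channel uses devoted to key generation plus message transmission on each link does not exceed one.

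For the converse, the decodability constraint at $D_i$ together with the perfect-secrecy constraint $I(W_{[2]};X^n_{\mathcal{E}_{\mathcal{Z}}})=0$ forces the key on every wiretapped link to have entropy at least equal to the adversary's observation on that link, yielding~\eqref{eq:capX1}--\eqref{eq:capX3} via a Fano-type identity. The inequalities~\eqref{eq:capX4}--\eqref{eq:capX6} are immediate time-sharing bounds on the individual links. The nontrivial step is the pair~\eqref{eq:capX7}--\eqref{eq:capX8}, which I would derive by an entropy-chaining argument across the cuts $\{e_1,e_2\}\to\{e_3\}$ and $\{e_3\}\to\{e_4,e_5\}$: since no intermediate node injects fresh randomness, the total secret randomness available at any stage is upper bounded by what was injected by the sources and survived the erasures up to that point, minus whatever is already diverted as the raw forwarded randomness $e$. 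This mirrors the arguments used in~\cite{agarwal2016secure} for the Y- and RY-network converses, applied at each cut in turn and then coupled through $e$.

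The main obstacle, as in the RY-network proof, will be to show that the parameter $e$ is both necessary and sufficient: necessary in the converse, so that the optimal protocol is allowed to forward some randomness past $I_1$ without immediately turning it into a key, and sufficient in the achievability, so that the cascaded construction never double-counts the same randomness. This forces one to track, in both directions, three interdependent quantities on the shared link---the key $k_3$, the forwarded randomness $e$, and the encrypted traffic $R_1+R_2$---and to verify that the constructed scheme, together with a carefully chosen $e^\star$, simultaneously saturates the bounds~\eqref{eq:capX6}, \eqref{eq:capX7} and~\eqref{eq:capX8}.
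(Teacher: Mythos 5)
Your plan is essentially the approach the paper takes: the paper itself gives no self-contained proof of Proposition~\ref{thm:capX}, deferring entirely to~\cite{agarwal2016secure}, and its accompanying explanation of the constraints (key-rate constraints~\eqref{eq:capX1}--\eqref{eq:capX3}, per-link time-sharing constraints~\eqref{eq:capX4}--\eqref{eq:capX6}, and randomness-budget constraints~\eqref{eq:capX7}--\eqref{eq:capX8} at the relay nodes) matches your cascaded Y-plus-RY, two-phase ARQ/MDS construction with the coupling variable $e$. One small caution: in your achievability paragraph you describe $I_1$ as combining its received randomness \emph{with} an additional rate $e$ to build $k_3$, whereas~\eqref{eq:capX7} subtracts $e$ from the key-making budget because that randomness is forwarded raw across the shared link (securing it via the $e$-term in~\eqref{eq:capX2} and feeding~\eqref{eq:capX8}) rather than consumed at $I_1$ --- your converse paragraph states this correctly, so only the wording needs fixing.
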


\begin{table}[ht]

\begin{center}

	\begin{tabular}{ 
			|>{\centering\arraybackslash}m{0.05\textwidth}|
			|>{\centering\arraybackslash}m{0.47\textwidth}|
			>{\centering\arraybackslash}m{0.35\textwidth}|
		}
		\hline 
		&  \textbf{Unsecure capacity region} &  \textbf{Secure capacity region} \\
		\hline
		\begin{turn}{90}{\bf{Butterfly network~1}}\end{turn}
		&
		{\begin{subequations}
				\label{eq:but1cap}
				\begin{align}
				R_1  & \leq \min \left \{\mathsf{C}_1,\mathsf{C}_3,\mathsf{C}_7 \right \}, \label{eq:but1capc1} \\
				R_2  & \leq \min \left \{\mathsf{C}_2,\mathsf{C}_3,\mathsf{C}_6 \right \} \label{eq:but1capc2},
				\\  R_1 + R_2 & \leq \mathsf{C}_3 + \min \left\{ \mathsf{C}_4, \mathsf{C}_5 \right\}\label{eq:but1capc3}.
				\end{align} 
		\end{subequations}} 
		&
		{\begin{subequations}
				Secure communication is not possible.
		\end{subequations}}
		\\
		\hline
		\begin{turn}{90}{\bf{Single source}}\end{turn}
		&
		{\begin{subequations}
				\label{eq:CScap}
				\begin{align}
				R_1  & \!\leq \!\mathsf{C}_5 \!+\! \min \left \{ \mathsf{C}_1	\!+\!\mathsf{C}_2,\mathsf{C}_3,\mathsf{C}_7 \right \},
				\\ R_2  & \!\leq\! \mathsf{C}_4\!+\! \min \left \{ \mathsf{C}_1\!+\!\mathsf{C}_2,\mathsf{C}_3,\mathsf{C}_6 \right \},
				\\  R_1 + R_2  & \!\leq\! \mathsf{C}_4 \!+\! \mathsf{C}_5 + \min \left \{ \mathsf{C}_1\!+\!\mathsf{C}_2,\mathsf{C}_3,\mathsf{C}_6\!+\!\mathsf{C}_7 \right \}.
				\end{align} 
		\end{subequations}} 
		&
		{\begin{subequations}
				\label{eq:CScapSec}
				\begin{align}
				 R_1 & \leq \min \left \{\mathsf{C}_5,\mathsf{C}_1\!+\!\mathsf{C}_2,\mathsf{C}_3,\mathsf{C}_7 \right \}, \label{eq:bfcs_seq_ach_a}
				\\	 R_2 & \leq \min \left \{\mathsf{C}_4,\mathsf{C}_1\!+\!\mathsf{C}_2,\mathsf{C}_3,\mathsf{C}_6\right \}. \label{eq:bfcs_seq_ach_b}
				\end{align}
		\end{subequations}}
		\\
		\hline
		\begin{turn}{90}{\bf{Single destination}}\end{turn}
		&
		{\begin{subequations}
				\label{eq:CDcap}
				\begin{align}
				R_1 & \!\leq\! \mathsf{C}_4 \!+\! \min \left \{ \mathsf{C}_1,\mathsf{C}_3,\mathsf{C}_6\!+\!\mathsf{C}_7\right \},
				\\ R_2 & \!\leq\! \mathsf{C}_5\!+\! \min \left \{\mathsf{C}_2,\mathsf{C}_3,\mathsf{C}_6\!+\!\mathsf{C}_7 \right \},
				\\ R_1 + R_2 &   \!\leq\! \mathsf{C}_4 \!+\! \mathsf{C}_5  + \min \left \{\mathsf{C}_1\!+\!\mathsf{C}_2,\mathsf{C}_3,\mathsf{C}_6\!+\!\mathsf{C}_7 \right \}.
				\end{align}
		\end{subequations}} 
		&
		{\begin{subequations}
				\label{eq:CDcapSec}
				\begin{align}
				R_1 & \leq \min \left \{\mathsf{C}_1,\mathsf{C}_4 \right \}, \label{eq:bfcd_seq_ach_a}
				\\ R_2 & \leq \min \left \{ \mathsf{C}_2,\mathsf{C}_5\right \}, \label{eq:bfcd_seq_ach_b}
				\\  R_1 + R_2 & \leq \min \left \{ \mathsf{C}_3,\mathsf{C}_6+\mathsf{C}_7\right \}. \label{eq:bfcd_seq_ach_c}
				\end{align}
		\end{subequations}}
		\\
		\hline
		\begin{turn}{90}{\bf{Butterfly Nnetwork~2}}\end{turn}
		&
		{\begin{subequations}
				\label{eq:Butt2cap}
				\begin{align}
				R_1  & \leq \mathsf{C}_4 + \min \left \{\mathsf{C}_1,\mathsf{C}_3,\mathsf{C}_7 \right \},
				\\ R_2 & \leq \mathsf{C}_5+ \min \left \{ \mathsf{C}_2,\mathsf{C}_3,\mathsf{C}_6 \right \},
				\\  R_1 + R_2 & \leq \mathsf{C}_4 + \mathsf{C}_5+ \mathsf{C}_3.
				\end{align} 
		\end{subequations}} 
		&
		{\begin{subequations}
				\label{eq:Butt2capSec}
				\begin{align}
				R_1 & \leq  \min \left \{ \mathsf{C}_4,\mathsf{C}_1,\mathsf{C}_3,\mathsf{C}_7\right \}, \label{eq:bf2_seq_ach_a}
				\\ R_2 & \leq  \min \left \{ \mathsf{C}_5,\mathsf{C}_2,\mathsf{C}_3,\mathsf{C}_6\right \}, \label{eq:bf2_seq_ach_b}
				\\   R_1 + R_2 &  \leq \mathsf{C}_3. \label{eq:bf2_seq_ach_c}
				\end{align}
		\end{subequations}}
		\\
		\hline		
		\end{tabular}
	\end{center}
	\caption{{Unsecure and secure capacity regions for the networks in Fig.~\ref{fig:butterflies}.}}
	\label{table:MessaDescr}
\end{table}

\subsection{Arbitrary Edge Capacities}
\label{app:OtherNetArbrCap}	
{We here report the unsecure and secure capacity region results that we derived in~\cite{agarwal2016netcod} for the four networks in Fig.~\ref{fig:butterflies}. In particular, these results are shown in Table~\ref{table:MessaDescr}. We refer an interested reader to~\cite{agarwal2016netcod} for the complete proof of these results.}
\end{appendices}

\bibliographystyle{IEEEtran}
\bibliography{tit18}

\begin{thebibliography}{10}
\providecommand{\url}[1]{#1}
\csname url@samestyle\endcsname
\providecommand{\newblock}{\relax}
\providecommand{\bibinfo}[2]{#2}
\providecommand{\BIBentrySTDinterwordspacing}{\spaceskip=0pt\relax}
\providecommand{\BIBentryALTinterwordstretchfactor}{4}
\providecommand{\BIBentryALTinterwordspacing}{\spaceskip=\fontdimen2\font plus
\BIBentryALTinterwordstretchfactor\fontdimen3\font minus
  \fontdimen4\font\relax}
\providecommand{\BIBforeignlanguage}[2]{{%
\expandafter\ifx\csname l@#1\endcsname\relax
\typeout{** WARNING: IEEEtran.bst: No hyphenation pattern has been}%
\typeout{** loaded for the language `#1'. Using the pattern for}%
\typeout{** the default language instead.}%
\else
\language=\csname l@#1\endcsname
\fi
#2}}
\providecommand{\BIBdecl}{\relax}
\BIBdecl

\bibitem{cai2002secure}
N.~Cai and R.~W. Yeung, ``Secure network coding,'' in \emph{Proceedings IEEE
  International Symposium on Information Theory (ISIT),}, July 2002, pp. 323--.

\bibitem{AhlswedeIT2000}
R.~Ahlswede, N.~Cai, S.~Y.~R. Li, and R.~W. Yeung, ``Network information
  flow,'' \emph{IEEE Transactions on Information Theory}, vol.~46, no.~4, pp.
  1204--1216, Jul 2000.

\bibitem{li2003linear}
S.~Y.~R. Li, R.~W. Yeung, and N.~Cai, ``Linear network coding,'' \emph{IEEE
  Transactions on Information Theory}, vol.~49, no.~2, pp. 371--381, February
  2003.

\bibitem{jaggi2005polynomial}
S.~Jaggi, P.~Sanders, P.~A. Chou, M.~Effros, S.~Egner, K.~Jain, and L.~M. G.~M.
  Tolhuizen, ``Polynomial time algorithms for multicast network code
  construction,'' \emph{IEEE Transactions on Information Theory}, vol.~51,
  no.~6, pp. 1973--1982, June 2005.

\bibitem{KoetterTON2003}
R.~Koetter and M.~Medard, ``An algebraic approach to network coding,''
  \emph{IEEE/ACM Transactions on Networking}, vol.~11, no.~5, pp. 782--795,
  October 2003.

\bibitem{kamath2011generalized}
S.~U. Kamath, D.~N.~C. Tse, and V.~Anantharam, ``Generalized network sharing
  outer bound and the two-unicast problem,'' in \emph{International Symposium
  on Networking Coding (NetCod)}, July 2011, pp. 1--6.

\bibitem{kamath2014two}
S.~Kamath, D.~N.~C. Tse, and C.~C. Wang, ``Two-unicast is hard,'' in \emph{IEEE
  International Symposium on Information Theory (ISIT)}, June 2014, pp.
  2147--2151.

\bibitem{ramamoorthy2009single}
A.~Ramamoorthy and R.~D. Wesel, ``The single source two terminal network with
  network coding,'' \emph{ar{X}iv:0908.2847}, August 2009.

\bibitem{shannon1949communication}
C.~E. Shannon, ``Communication theory of secrecy systems,'' \emph{Bell Labs
  Technical Journal}, vol.~28, no.~4, pp. 656--715, 1949.

\bibitem{Wyner}
A.~D. Wyner, ``The wire-tap channel,'' \emph{The Bell System Technical
  Journal}, vol.~54, no.~8, pp. 1355--1387, 1975.

\bibitem{czapP2P}
L.~Czap, V.~Prabhakaran, C.~Fragouli, and S.~Diggavi, ``Secret message capacity
  of erasure broadcast channels with feedback,'' in \emph{IEEE Inf. Theory
  Workshop (ITW)}, 2011, pp. 65--69.

\bibitem{feldman2004capacity}
J.~Feldman, T.~Malkin, C.~Stein, and R.~Servedio, ``On the capacity of secure
  network coding,'' in \emph{Proc. 42nd Annual Allerton Conference on
  Communication, Control, and Computing}, 2004, pp. 63--68.

\bibitem{el2007wiretap}
S.~Y. El~Rouayheb and E.~Soljanin, ``On wiretap networks ii,'' in
  \emph{Information Theory, 2007. ISIT 2007. IEEE International Symposium
  on}.\hskip 1em plus 0.5em minus 0.4em\relax IEEE, 2007, pp. 551--555.

\bibitem{cui2013}
T.~Cui, T.~Ho, and J.~Kliewer, ``On secure network coding with nonuniform or
  restricted wiretap sets,'' \emph{IEEE Transactions on Information Theory},
  vol.~59, no.~1, pp. 166--176, Jan 2013.

\bibitem{bhattad2005weakly}
K.~Bhattad, K.~R. Narayanan \emph{et~al.}, ``Weakly secure network coding,''
  \emph{NetCod, Apr}, vol. 104, 2005.

\bibitem{silva2009universal}
D.~Silva and F.~R. Kschischang, ``Universal weakly secure network coding,'' in
  \emph{Networking and Information Theory, 2009. ITW 2009. IEEE Information
  Theory Workshop on}.\hskip 1em plus 0.5em minus 0.4em\relax IEEE, 2009, pp.
  281--285.

\bibitem{wei2010}
Y.~Wei, Z.~Yu, and Y.~Guan, ``Efficient weakly-secure network coding schemes
  against wiretapping attacks,'' in \emph{2010 IEEE International Symposium on
  Network Coding (NetCod)}, June 2010, pp. 1--6.

\bibitem{jaggiByzantine}
S.~Jaggi, M.~Langberg, S.~Katti, T.~Ho, D.~Katabi, and M.~Medard, ``Resilient
  network coding in the presence of byzantine adversaries,'' in \emph{IEEE
  INFOCOM 2007 - 26th IEEE International Conference on Computer
  Communications}, May 2007, pp. 616--624.

\bibitem{ho2004byzantine}
T.~Ho, B.~Leong, R.~Koetter, M.~M{\'e}dard, M.~Effros, and D.~R. Karger,
  ``Byzantine modification detection in multicast networks using randomized
  network coding,'' in \emph{Information Theory, 2004. ISIT 2004. Proceedings.
  International Symposium on}.\hskip 1em plus 0.5em minus 0.4em\relax IEEE,
  2004, p. 144.

\bibitem{kosutByzantine}
O.~Kosut, L.~Tong, and D.~Tse, ``Nonlinear network coding is necessary to
  combat general byzantine attacks,'' in \emph{2009 47th Annual Allerton
  Conference on Communication, Control, and Computing (Allerton)}, Sept 2009,
  pp. 593--599.

\bibitem{papadopoulos2015lp}
A.~Papadopoulos, L.~Czap, and C.~Fragouli, ``{LP} formulations for secrecy over
  erasure networks with feedback,'' in \emph{2015 IEEE Int. Symp. on Inf.
  Theory}, June 2015, pp. 954--958.

\bibitem{Czaptriangle}
L.~Czap, V.~M. Prabhakaran, S.~Diggavi, and C.~Fragouli, ``Triangle network
  secrecy,'' in \emph{2014 IEEE International Symposium on Information Theory},
  June 2014, pp. 781--785.

\bibitem{CzapLine}
\BIBentryALTinterwordspacing
A.~Papadopoulos, L.~Czap, and C.~Fragouli, ``Secret message capacity of a line
  network,'' \emph{CoRR}, vol. abs/1407.1922, 2014. [Online]. Available:
  \url{http://arxiv.org/abs/1407.1922}
\BIBentrySTDinterwordspacing

\bibitem{MillsWireless2008}
A.~Mills, B.~Smith, T.~C. Clancy, E.~Soljanin, and S.~Vishwanath, ``On secure
  communication over wireless erasure networks,'' in \emph{2008 IEEE
  International Symposium on Information Theory}, July 2008, pp. 161--165.

\bibitem{dong2009secure}
J.~Dong, R.~Curtmola, and C.~Nita-Rotaru, ``Secure network coding for wireless
  mesh networks: Threats, challenges, and directions,'' \emph{Computer
  Communications}, vol.~32, no.~17, pp. 1790--1801, 2009.

\bibitem{Koetter04networkcodes}
R.~Koetter, M.~Effros, and T.~Ho, ``Network codes as codes on graphs,'' in
  \emph{Conference on Information Sciences and Systems (CISS)}, 2004.

\bibitem{riis2007reversible}
S.~Riis, ``Reversible and irreversible information networks,'' \emph{IEEE
  Transactions on Information Theory}, vol.~53, no.~11, pp. 4339--4349,
  November 2007.

\bibitem{agarwal2016secure}
G.~K. Agarwal, M.~Cardone, and C.~Fragouli, ``Coding across unicast sessions
  can increase the secure message capacity,'' in \emph{2016 IEEE Int. Symp. on
  Inf. Theory}, Jul 2016, pp. 2134--2138.

\bibitem{Maurer}
U.~M. Maurer, ``Secret key agreement by public discussion from common
  information,'' \emph{IEEE Trans. Inf. Theory,}, vol.~39, no.~3, pp. 733--742,
  1993.

\bibitem{agarwal2016netcod}
G.~K. Agarwal, M.~Cardone, and C.~Fragouli, ``On secure network coding for two
  unicast sessions: Studying butterflies,'' in \emph{IEEE Globecom Workshops
  (GC Wkshps)}, December 2016, pp. 1--6.

\end{thebibliography}

\end{document}